%123456789012345678901234567890123456789012345678901234567890123456789
\documentclass[12pt]{article}
\usepackage{%pstricks,pst-node,pst-plot,
amsmath,amssymb,mathrsfs,bbm,a4wide}

\usepackage{tikz}
\usetikzlibrary{matrix,arrows}%,shapes}

\newenvironment{proofof}[1]{%
\par\addvspace{12pt plus3pt minus3pt}\global\logotrue%
\noindent{\bf Proof of #1.\hskip.5em}\ignorespaces}{%
	\par\iflogo\prbox\par
	\addvspace{12pt plus3pt minus3pt}\fi}

%Following commands taken from World Scientific's ws-rmp.cls %
\newif\iflogo
\def\prbox{\par
	\vskip-\lastskip\vskip-\baselineskip\hbox to \hsize{\hfill\fboxsep0pt\fbox{\phantom{\vrule width5pt height5pt depth0pt}}}\global\logofalse}

\newenvironment{proof}{%
\par\addvspace{12pt plus3pt minus3pt}\global\logotrue%
\noindent{\bf Proof.\hskip.5em}\ignorespaces}{%
	\par\iflogo\prbox\par
	\addvspace{12pt plus3pt minus3pt}\fi}
%end of borrowed commands

\usepackage{amsmath,amssymb,amsfonts}%,mathbbol} %amscd
\newcommand{\bbLbrack}{[\kern-0.4em{[}\,}
\newcommand{\bbRbrack}{\,]\kern-0.4em{]}}
\usepackage[T1]{fontenc}
\usepackage{yfonts}

\usepackage{extarrows}
\usepackage{undertilde}

\usepackage{mathrsfs,latexsym}
%\usepackage[dvips]{graphics}
%\usepackage{color}
%\usepackage{pstcol}%,pstricks,pst-3d}

%\renewcommand{\baselinestretch}{1.8}

% End of proof symbol
%\renewcommand{\square}{\vrule height 1.5ex width 1.2ex depth -.1ex }

% Define symbols for identity, complex, real and natural numbers.

% Identity using bbold12:
%\font\bbold=bbold12
%\newcommand{\II}{{\hbox{\bbold 1}}}
% Alternatively, macro for identity lifted from REVTeX 3.0:
%\newcommand{\II}{\leavevmode\hbox{\rm{\small1\kern-3.8pt\normalsize1}}}

\newcommand{\II}{{\boldsymbol{1}}}

\newcommand{\CC}{{\mathbb C}}

\newcommand{\RR}{{\mathbb R}}

\newcommand{\NN}{{\mathbb N}}

\newcommand{\ZZ}{{\mathbb Z}}

% Various classes of smooth compactly supported functions           

\newcommand{\CoinX}[1]{C_0^\infty({#1})}

% Various Theorem-like environments

\newtheorem{Thm}{Theorem}[section]

\newtheorem{Def}[Thm]{Definition}
\newtheorem{Lem}[Thm]{Lemma}
\newtheorem{Prop}[Thm]{Proposition}
\newtheorem{Cor}[Thm]{Corollary}

% Label equations by section and number-within-section
%\numberwithin{equation}{section}
%\renewcommand{\theequation}{\thesection.\arabic{equation}}
%\newcommand{\sect}[1]{\section{#1}\setcounter{equation}{0}}

% Sanserif

\newcommand{\Js}{{\mathsf J}}

% Calligraphic characters (N.B. to use rsfs use \mathscr)

\newcommand{\DD}{{\mathscr D}}

\newcommand{\HH}{{\mathscr H}}

\newcommand{\KK}{{\mathscr K}}
\newcommand{\FF}{{\mathscr F}}

\newcommand{\Ac}{{\mathcal A}}
\newcommand{\Bc}{{\mathcal B}}
\newcommand{\Cc}{{\mathcal C}}
\newcommand{\Dc}{{\mathcal D}}
\newcommand{\Fc}{{\mathcal F}}
\newcommand{\Hc}{{\mathcal H}}
\newcommand{\cN}{{\mathcal N}}

\newcommand{\OO}{{\mathscr O}}
%\newcommand{\KK}{{\mathscr K}}

%\newcommand{\Ff}{{\cal F}}
%\newcommand{\Ss}{{\cal S}}

% Boldface characters

\newcommand{\fb}{{\boldsymbol{f}}}
\newcommand{\gb}{{\boldsymbol{g}}}
\newcommand{\hb}{{\boldsymbol{h}}}

\newcommand{\xb}{{\boldsymbol{x}}}

\newcommand{\Tb}{{\boldsymbol{T}}}

% Miscellaneous

 %%% integer part

\newcommand{\Bfr}{{\mathfrak B}}
\newcommand{\Ffr}{{\mathfrak F}}
\newcommand{\Mfr}{{\mathfrak M}}

\newcommand{\ogth}{{\mathfrak o}}
\newcommand{\tgth}{{\mathfrak t}}

\newcommand{\etap}{\eta'}

\newcommand{\supp}{{\rm supp}\,}
\newcommand{\Span}{{\rm span}\,}

\renewcommand{\Im}{{\rm Im}\,}

\newcommand{\End}{{\rm End}}
\newcommand{\Sym}{{\rm Sym}}

\newcommand{\dvol}{d\textrm{vol}}

\newcommand{\ip}[2]{{\langle #1\mid #2\rangle}}

\newcommand{\Ob}{{\boldsymbol{0}}}

\newcommand{\Cb}{{\boldsymbol{C}}}
\newcommand{\Db}{{\boldsymbol{D}}}

\newcommand{\Lb}{{\boldsymbol{L}}}
\newcommand{\Mb}{{\boldsymbol{M}}}
\newcommand{\Nb}{{\boldsymbol{N}}}

\newcommand{\Mc}{{\mathcal{M}}}
\newcommand{\Nc}{{\mathcal{N}}}
\newcommand{\Sc}{{\mathcal{S}}}

%\newcommand{\Ts}{{\sf T}}

% Categories

\newcommand{\Ct}{{\sf C}}

\newcommand{\LCT}{{\sf LCT}}
\newcommand{\Loco}{{\sf Loc}_0}
\newcommand{\Loc}{{\sf Loc}}
\newcommand{\preSympl}{{\sf preSympl}}
\newcommand{\Set}{{\sf Set}}

\newcommand{\Sympl}{{\sf Sympl}}

\newcommand{\Alg}{{\sf Alg}}
\newcommand{\CAlg}{{\sf C^*\hbox{-}Alg}}

\newcommand{\Phys}{{\sf Phys}}

\newcommand{\grAS}{{\sf grAS}}
\newcommand{\grCAS}{{\sf grC^*AS}}
% Functors

\newcommand{\Af}{{\mathscr A}}

\newcommand{\Bf}{{\mathscr B}}

\newcommand{\Cf}{{\mathscr C}}
\newcommand{\Df}{{\mathscr D}}

\newcommand{\Ff}{{\mathscr F}}

\newcommand{\Qf}{{\mathscr Q}}

\newcommand{\Sf}{{\mathscr S}}
\newcommand{\Tf}{{\mathscr T}}
\newcommand{\Uf}{{\mathscr U}}
\newcommand{\Vf}{{\mathscr V}}
\newcommand{\Tc}{{\mathcal T}}

\newcommand{\Wf}{{\mathscr W}}
\newcommand{\Xf}{{\mathscr X}}
\newcommand{\Yf}{{\mathscr Y}}

\newcommand{\Sol}{{\mathscr L}}

% Category notations

\newcommand{\id}{{\rm id}}

\newcommand{\nto}{\stackrel{.}{\to}}

\newcommand{\Aut}{{\rm Aut}}

\newcommand{\Fld}{{\rm Fld}}

\DeclareMathOperator{\Fol}{Fol}

\DeclareMathOperator{\cl}{cl}

\DeclareMathOperator{\co}{co}
\DeclareMathOperator{\eq}{eq}

\newcommand{\rce}{{\rm rce}}

\newcommand{\kin}{{\rm kin}}

\newcommand{\cPhi}{\check{\Phi}}

%\newcommand{\rce}{{\mathscr{E}}}

% Groups

%\DeclareMathOperator{\gcd}{gcd}

\DeclareMathOperator{\CCR}{CCR}

\newcounter{tightenum}
\newenvironment{tightitemize}% 
{\begin{list}{$\bullet$}{\setlength{\itemsep}{0pt}\setlength{\parsep}{0pt}\setlength{\topsep}{0pt}}}%
{\end{list}}
\newenvironment{tightenumerate}% 
{\begin{list}{(\roman{tightenum})}{\usecounter{tightenum} \setlength{\itemsep}{0pt}\setlength{\parsep}{0pt}\setlength{\topsep}{0pt}}}%
{\end{list}}

\newcounter{assumptions}

\newcommand{\cPsi}{\check{\Psi}}
\newcommand{\tS}{\widetilde{S}}
\newcommand{\elb}{{\boldsymbol{\ell}}}

\begin{document}

%%%%%%%%%%%%%%%%%%%%%%%%%%%%%%%%%%%%%%%%%%%%%%%%%%%%%%%%%%%%%%%%%%%%%%
% title, abstract and author information

\title{Endomorphisms and automorphisms of locally covariant quantum field theories}
\author{Christopher J Fewster\thanks{\tt chris.fewster@york.ac.uk}\\ Department of Mathematics,
                 University of York, \\
                 Heslington,
                 York YO10 5DD, U.K.}
\date{\today}

\maketitle

\begin{abstract}
In the framework of locally covariant quantum field theory, a theory is described as a
functor from a category of spacetimes to a category of $*$-algebras. It is 
proposed that the global gauge group of such a theory can be identified as the
group of automorphisms of the defining functor. Consequently, multiplets of fields may be
identified at the functorial level. It is shown that locally covariant theories that obey
standard assumptions in Minkowski space, including energy compactness,
have no proper endomorphisms (i.e., all endomorphisms are automorphisms) and
have a compact automorphism group.
Further,  it is shown how the endomorphisms and automorphisms of a locally covariant theory may, in principle, be classified in any single spacetime. As an example, the endomorphisms and automorphisms of a system of finitely many free scalar fields are completely classified. 
\end{abstract}

\renewcommand{\thefootnote}{\alph{footnote}}

\section{Introduction}

Algebraic quantum field theory (AQFT) has been highly successful in 
analysing the structural properties of general quantum field theories
in Minkowski space~\cite{Haag}. For many years, however, rigorous quantum field
theory in curved spacetimes was restricted to particular free models, 
or to spacetimes of maximal symmetry. This situation has changed,
following the introduction, by Brunetti, Fredenhagen and Verch (henceforth abbreviated as BFV) of a framework of locally covariant quantum field
theory~\cite{BrFrVe03}. 
This framework, in which a quantum field theory is defined
as a functor between a category of spacetimes and a category of
$(C)^*$-algebras,
developed from a formulation given in Verch's proof
of a general spin--statistics connection~\cite{Verch01} and
has subsequently played an important role in the completion of the perturbative 
construction of interacting theories in curved spacetime 
\cite{BrFr2000,Ho&Wa01,Ho&Wa02}, a Reeh--Schlieder theorem~\cite{Sanders_ReehSchlieder},
and an analysis of superselection sectors~\cite{Br&Ru05,BrunettiRuzzi_topsect}. It has also proved
possible to begin an analysis of the fundamental question of what it is
that makes a theory of physics the same in different spacetimes~\cite{FewVer:dynloc_theory, FewVer:dynloc2} (see also~\cite{FewsterRegensburg} for a short summary).  In addition, more quantitative applications have been made to particular models in the context
of the Casimir effect and quantum energy inequalities~\cite{Few&Pfen06,Fewster2007,FewOlumPfen}
and to cosmology~\cite{DapFrePin2008,DegVer2010,VerchRegensburg}. 

One of the major successes of AQFT in flat spacetime is undoubtedly the
Doplicher--Haag--Roberts (DHR) analysis of superselection sectors and the reconstruction of the field algebra  and gauge group from the algebra of observables in the vacuum sector~\cite{DHRi,DHRii,DopRob:1990}.
Brunetti and Ruzzi~\cite{Br&Ru05} have employed ideas
of local covariance in order to develop a parallel analysis in curved spacetime. One could characterize
their approach as being `bottom-up', as the analysis is performed in each spacetime
and questions of covariance are then addressed. From the functorial viewpoint,
it would be more natural to proceed in a `top-down' manner, identifying the 
relevant structures and definitions as properties of the functor defining
the theory. With this eventual aim in mind, the present paper begins by discussing how the 
global gauge group of a locally covariant theory may be identified in terms of
the functor. Our proposal is simply that the global gauge group is
the group of automorphisms of the functor defining the theory. 

To explain this, we recall, first, that every functor $\Af$ has an associated
group of automorphisms $\Aut(\Af)$, whose elements are the natural
isomorphisms from $\Af$ to itself, equipped with the group structure induced
by composition.\footnote{We refer
the reader to~\cite{MacLane} for the main ideas of category theory
and to~\cite{BrFrVe03,FewVer:dynloc_theory, FewVer:dynloc2} for
their application in QFT. A brief exposition appears in Sec.~\ref{sect:framework}.}
The automorphism group often carries important structural
information about the functor and so, even from a purely mathematical
perspective, it is important to understand
what significance can be assigned to the automorphism group of a functor $\Af$
defining a locally covariant QFT, or, more generally, to the monoid of endomorphisms
$\End(\Af)$ of $\Af$.

The physical motivation for our study rests on the interpretation
of a natural transformation  $\zeta:\Af\nto\Bf$ between
functors describing two theories $\Af$ and $\Bf$ as an 
embedding of theory $\Af$ as a subtheory of $\Bf$~\cite{BrFrVe03,FewVer:dynloc_theory}. 
Thus an endomorphism
of $\Af$ is a way of embedding $\Af$ as a subtheory of itself, and an automorphism
is a means of doing this isomorphically. It is therefore very natural
to interpret the automorphism group as the global gauge group of the 
theory.\footnote{\label{fn:subgroup}There are circumstances where one might wish only to adopt
only a subgroup of the automorphisms as the group of internal symmetries; for example, the free complex scalar field is equivalent to two real scalars, but one would adopt the $U(1)$ subgroup of the full
$O(2)$ automorphism group in the former case. Again, it may be that 
the theory is studied as an approximation to a more detailed theory. 
For example, the Lagrangian of electromagnetism is invariant under negation
of the vector potential, but this symmetry is typically broken if one includes
interactions with matter.}

As we will show, this interpretation is supported by a comparison with the Minkowski space DHR analysis~\cite{DHRi}. 
There, the (maximal) global gauge group consists of (all) unitaries on the Hilbert space
of the vacuum representation of the field algebra that commute with the action of the
proper orthochronous Poincar\'e transformations, 
map each local field algebra isomorphically to itself and
preserve the vacuum state. As we show in Sec.~\ref{sec:gauge_gp}, these properties
are respected, and generalized, in our approach. To a large
extent they hold in a representation-independent sense in arbitrary spacetimes
(Sec.~\ref{sec:endos_autos}), 
and they hold in exactly the DHR sense in representations induced by gauge-invariant
states, with the Poincar\'e group replaced by the bijective spacetime isometries 
preserving (time)-orientation. In particular, (a) the automorphisms representing
global symmetries commute with spacetime symmetries -- thus, spacetime
and internal symmetries are completely independent, in a manner
reminiscent of the Coleman--Mandula theorem; (b) in Minkowksi space, the unitary implementation of the 
automorphism group in the Minkowski vacuum representation is a subgroup
of the maximal DHR group -- a key open question is to understand whether
these groups are equal. We also describe how the gauge group acts
on an abstract algebra of fields introduced in~\cite{Fewster2007} and allows the
definition of multiplets of fields at the functorial level. The fundamental
particle--antiparticle symmetry can be seen at this level. In addition, by taking
fixed-points under the action of the gauge group, one can define a new locally
covariant theory that is a candidate for the description of the observables of the
theory. This is not completely satisfactory, because it can happen that there are
no nontrivial fixed points (as occurs with the Weyl algebra if the gauge group is
continuous, for example); an approach that generates the observables in
suitable representations would be preferred, but is not pursued here. 

More generally, we will consider the proper endomorphisms of locally covariant theories, i.e., 
those embeddings of a theory as a subtheory of itself that are not gauge transformations.
For example, starting with any nontrivial theory $\Ff$, the countably infinite tensor product theory $\Ff^{\otimes\infty}: 
= \bigotimes_{n=1}^\infty \Ff$ admits a proper endomorphism with components $\zeta_\Mb A = \II_{\Ff(\Mb)}\otimes A$
(there are many others). The existence of proper endomorphisms seems, in general, to indicate pathological
behaviour: if $\eta:\Ff\nto\Ff$ is a proper endomorphism, so are all of its positive integer powers\footnote{As
$\eta$ is monic, it cannot happen that any two powers are equal, given that $\eta$ is not an isomorphism,
so these nested theories are indeed all distinct.}
and we obtain an infinite chain of properly nested subtheories of $\Ff$, each of which is itself equivalent
to $\Ff$. Any individual physical element of the theory, such as a species of particle, must
be replicated in each of these nested theories, suggesting that each such element
appears with infinite multiplicity. In Minkowski space QFT, it has long been understood~\cite{HaaSwi:1965,BucWic:1986}
that the latter situation is typically incompatible with a particle interpretation or good thermodynamic
properties. 

Accordingly, it is of interest to understand what conditions on a theory exclude the existence of proper endomorphisms. We address this issue in Sec.~\ref{sect:compactness} for theories described using $C^*$-algebras (to be 
thought of as field algebras) with a suitable state space and which obey a number of 
standard conditions in Minkowski space, most notably a mild energy compactness 
assumption, inspired by those of~\cite{HaaSwi:1965,BucWic:1986}. 
Under these assumptions, we show that any endomorphism
of the theory is unitarily implemented in the Poincar\'e invariant Minkowski representation
by an element of the maximal gauge global group in the
DHR sense~\cite{DHRi}. Therefore, assuming that there are
no `accidental' gauge symmetries in Minkowski space, every endomorphism is an
automorphism. We also give a direct proof that the automorphism group is compact;
see~\cite{DHRi,DopLon:1984} for proofs in Minkowski space under different hypotheses.

We emphasise that we are discussing endomorphisms of the functor, rather than
of the algebras corresponding to individual spacetimes, which admit many
proper endomorphisms (indeed the DHR superselection theory makes essential use of algebra endomorphisms). In any $\Mb$ spacetime with global topology $\RR^4$, for
example, the timeslice axiom gives an isomorphism $\Af(\Mb)\cong \Af(\Db)$, where $\Af(\Mb)$ is the algebra
associated to the full spacetime, while $\Af(\Db)$ is the algebra associated to the domain of dependence of a set that is diffeomorphic to 
an open $3$-ball and lies within a Cauchy surface. But as $\Af(\Db)$ 
is naturally embedded as a proper subalgebra of $\Af(\Mb)$, it follows that there is a proper endomorphism of $\Af(\Mb)$. 
Of course, this endomorphism is not at all canonical, because it depends sensitively on the choice of $\Db$. 
The force of our result is that there is {\em no} way of choosing a proper endomorphism of each spacetime's algebra in a natural way.

Our results on proper endomorphisms can be applied to discuss 
some simple comparisons between different theories.  
If $\Af\nto\Bf$ and $\Bf\nto\Af$ are subtheory embeddings, and at least one of these theories obeys our conditions, then 
one may easily show that both subtheory embeddings are in fact isomorphisms
(this is a strong analogue of the Cantor--Schr\"oder--Bernstein theorem of set theory). 
In particular, if there is any subtheory
embedding $\Af\nto\Bf$ that is not an isomorphism, then there can be {\em no} embedding 
$\Bf\nto\Af$.  For example, if $\Af$ obeys the conditions then there is no subtheory
embedding $\Af\otimes\Bf\nto\Af$ unless $\Bf$ is trivial.
There is another parallel with set theory: recall that
a set is finite precisely when there there is no injection from it into itself that is not a bijection; 
a notion of finiteness ascribed to Dedekind. In a general category, an object 
is said to be {\em Dedekind finite} if it has no monic proper endomorphisms (for example, finite-dimensional vector spaces). 
Our result may be paraphrased as indicating that energy compactness, together with our other assumptions, 
implies Dedekind finiteness in this sense for locally covariant QFTs. 

In Sec.~\ref{sect:examples} we show how the gauge group may be computed for
a system of finitely many free scalar fields with any given mass spectrum, both
as a classical theory and as a quantum field theory. As one
might expect, gauge transformations preserve sectors with different mass. 
Within each given mass sector, the gauge transformations act by orthogonal
transformations among the different fields with the same mass; in the massless
quantized case this is augmented by the freedom to add multiples of the unit element
of the algebra, resulting in a noncompact gauge group. This corresponds to
the broken symmetry of the Lagrangian under addition of a constant to the field. 
In both the classical and quantized cases, there are no proper endomorphisms;
in the quantized case there is a side condition that we restrict to endomorphisms
preserving the class of states with distributional $n$-point functions. 
The algebra of observables is also computed in the quantum case,
and we describe briefly how the theory of Sec.~\ref{sect:compactness}
can be applied to the Weyl algebra quantisation of the theory, in the
case where all fields are massive.

Open questions and directions resulting from these results include
the `top-down' analysis of superselection sectors mentioned above, and 
the possible extension of results on classification of subsystems~\cite{CarCon:2001b,CarCon:2005}
to curved spacetime. A key question is to understand the conditions under which there are
no accidental symmetries in Minkowski space, i.e., that the maximal DHR gauge group coincides
with the functorial definition. Finally we mention, as a related work, 
a paper of Ciolli, Ruzzi and Vasselli~\cite{CioRuzVas:2011} that
constructs a rather general covariant theory with given symmetry group, from which it is hoped
(and proved in some cases) that theories of interest can be obtained as specific representations
and which may even provide hints towards the inclusion of local gauge transformations.

\section{General framework}\label{sect:framework}
\subsection{Locally covariant theories}

We begin with a brief summary of the main ideas in the BFV framework,
as refined in~\cite{FewVer:dynloc_theory}.

\paragraph{Spacetimes} 
The category $\Loc$ has, as objects, quadruples of the form $\Mb= (\Mc,\gb,\ogth,\tgth)$,
where $(\Mc,\gb)$ is a nonempty smooth paracompact globally hyperbolic Lorentzian spacetime of dimension $n$ with at most finitely many connected components and $\ogth$ and $\tgth$ represent choices of orientation and
time-orientation. The spacetime dimension $n\ge 2$ is  fixed and the signature convention is
$+-\cdots -$. Morphisms in $\Loc$ are smooth isometric embeddings, preserving orientation and time-orientation, with causally convex image (i.e., containing all causal curves whose endpoints it contains); our notation will not distinguish the morphism from its underlying map
of manifolds.\footnote{Note, however, that the same map of manifolds can 
induce morphisms between many different pairs of objects in $\Loc$,
and that these morphisms are to be distinguished. A similar comment
applies to various other categories that will be discussed in this paper.} 
Note that causal convexity requires, in particular, that disjoint components of the image
are causally disjoint. The connected spacetimes form a full subcategory of $\Loc$, denoted $\Loco$. 

\paragraph{Theories as functors} Locally covariant theories can be described as covariant functors from $\Loc$ (or $\Loco$) to a
suitable category $\Phys$ of physical systems. Thus, to each $\Mb\in\Loc$ there is
an object $\Af(\Mb)$ of $\Phys$, and to each morphism $\psi:\Mb\to\Nb$ of $\Loc$
there is a morphism $\Af(\psi)$ of $\Phys$; we require $\Af(\psi\circ\varphi)=  \Af(\psi)\circ\Af(\varphi)$
for all composable $\psi$, $\varphi$, and $\Af(\id_\Mb) =\id_{\Af(\Mb)}$ for each $\Mb$. 
BFV mainly studied the case where $\Phys$ is the category $\Alg$ of unital $*$-algebras,\footnote{We exclude the zero algebra, a convention that was also used, albeit
unstated, in~\cite{FewVer:dynloc_theory, FewVer:dynloc2}.} with unit-preserving $*$-monomorphisms as the morphisms, or its full subcategory $\CAlg$ of $C^*$-algebras. When discussing classical fields, 
we will employ categories of (pre)symplectic spaces. {\em It will always be assumed that all 
morphisms of $\Phys$ are monic.}

The BFV framework provides a natural description of local physics. 
Let $\OO(\Mb)$ be the set of all open causally convex subsets $O$ of $\Mb$ with finitely many connected components. If $O\in\OO(\Mb)$ is 
nonempty, we define $\Mb|_O$ to be
the set $O$ equipped with the metric and (time)-orientation induced from $\Mb$ and regarded as a spacetime in its own right, with $\iota_{\Mb;O}:\Mb|_O\to\Mb$ being the canonical inclusion morphism. Then we may define $\Af^\kin(\Mb;O)$
to be the image of the map $\Af(\iota_{\Mb;O})$.\footnote{Not
all categories associate `images' to morphisms, although all those we use do. In more general situations, the local physics is better described as
the subobject of $\Af(\Mb)$ determined by $\Af(\iota_{\Mb;O})$.} (There are other ways of defining local physics, for example, the dynamics-based approach
introduced in~\cite{FewVer:dynloc_theory} -- note that in that reference
$\Af^\kin(\Mb;O)$ was taken to be the domain of
$\Af(\iota_{\Mb;O})$, rather than its image.)

\paragraph{Relative Cauchy evolution} Dynamics is incorporated in a
very natural way. A morphism $\psi:\Mb\to\Nb$ whose image contains
a Cauchy surface for $\Nb$ is described as {\em Cauchy}; a theory $\Tf:\Loc\to\Phys$
satisfies the {\em timeslice axiom} if $\Tf$ maps Cauchy morphisms to isomorphisms in $\Phys$.
If $\Mb=(\Mc,\gb,\ogth,\tgth)$, then to every smooth metric perturbation $\hb$ of compact support
for which $\Mb[\hb]=  (\Mc,\gb+\hb,\ogth,\tgth_\hb)$ is also globally hyperbolic (where
$\tgth_\hb$ is determined by requiring agreement with $\tgth$ outside   $\supp \hb$) 
there is an automorphism $\rce_\Mb[\hb]$ of $\Tf(\Mb)$, called the {\em relative Cauchy evolution}, that
compares the dynamics in $\Mb$ with that in $\Mb[\hb]$.
The details of the construction can be found in BFV and (slightly reformulated)~\cite{FewVer:dynloc_theory} and will not be repeated here. In circumstances
where the relative Cauchy evolution can be functionally differentiated with respect to $\hb$, the functional derivative can be interpreted in terms of a stress-energy tensor 
(see, e.g., Sec.~\ref{sec:states}), 
an interpretation supported by computations in specific models. 

\paragraph{Examples} Many standard models of quantum field theory
in curved spacetime have been formulated in the locally covariant framework, 
including the free scalar~\cite{BrFrVe03} and the Dirac quantum field~\cite{Sanders_dirac:2010}, and, importantly, the 
respective extended algebras of Wick products: see \cite{Ho&Wa01} for scalar fields (refined in \cite[\S 5.5.3]{BruFre_LNP:2009}) and~\cite{DHP_dirac:2009} for the Dirac case. Theories with gauge invariance 
also fit into the framework modulo important caveats relating to 
global issues, and at the time of writing, a definitive understanding is
yet to be reached. Relevant references include~\cite{DappLang:2012,FewsterHunt:2013,SandDappHack:2012,
BeniniDappiaggiSchenkel:2013}. A common theme is that injectivity of
the morphisms may be lost for certain observables of global nature, or
alternatively, that certain morphisms in $\Loc$ should be excluded 
from consideration. In this paper we work with injective morphisms on the basis that too much is lost if injectivity is dropped wholesale, and a clean characterisation of the `global' observables would be required to incorporate such ideas at the axiomatic level. Moreover, it is argued
that these pathologies might be removed in a fully interacting theory~\cite[Remark~4.15]{SandDappHack:2012}.

The quantum field theories mentioned can all be given
state spaces (see below), for example, based on Hadamard states. 
Treatments of classical theories include linear models described in 
symplectic spaces~\cite{FewVer:dynloc2} and, for general field theories~\cite{FreRej_BVclass:2012, BruFreLRib:2012arXiv1209.2148B}.

\subsection{Endomorphisms and automorphisms}\label{sec:endos_autos}

\paragraph{Definition and basic properties}
The functors from $\Loc$ to $\Phys$ form the objects of a category of locally covariant
theories, $\LCT$, introduced in~\cite{FewVer:dynloc_theory}. We will
use the notation $\LCT_\Phys$ if the target category is
not clear from context. The morphisms in this category
are natural transformations $\zeta:\Af\nto\Bf$: that is, to each $\Mb$ there is a morphism
$\zeta_\Mb:\Af(\Mb)\to\Bf(\Mb)$, so that  the equality $\zeta_\Nb\circ\Af(\psi) =\Bf(\psi)\circ\zeta_\Mb$ 
holds for every morphism $\psi:\Mb\to\Nb$.  The interpretation is that $\zeta$ embeds $\Af$ as a subtheory of $\Bf$, so an endomorphism
$\zeta:\Af\nto\Af$ of $\Af$ is an embedding
of $\Af$ as a subtheory of itself; specialising further, $\zeta$ is an automorphism of $\Af$ if each component $\zeta_\Mb$ is an isomorphism $\zeta_\Mb:\Af(\Mb)\to\Af(\Mb)$. The following observations are crucial.
\begin{Prop} \label{prop:basic}
Suppose $\Af:\Loc\to\Phys$, $\eta\in\End(\Af)$ and let $\Mb$
be any spacetime. Then, (a) we have
\begin{equation}\label{eq:basic}
\eta_\Mb\circ\Af(\iota_{\Mb;O}) = \Af(\iota_{\Mb;O})\circ\eta_{\Mb|_O}
\qquad\text{and}\qquad
\eta_\Mb\circ\Af(\psi) = \Af(\psi)\circ\eta_{\Mb}
\end{equation}
for every nonempty $O\in\OO(\Mb)$ and $\psi\in\End(\Mb)$;
(b) if $\Af$ also satisfies the timeslice axiom, then
\begin{equation}\label{eq:intertwine}
\rce_\Mb[\hb]\circ\eta_\Mb = \eta_\Mb\circ \rce_\Mb[\hb]
\end{equation}
in every spacetime $\Mb$ and for all permitted metric perturbations $\hb\in H(\Mb)$.
\end{Prop}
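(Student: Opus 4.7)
The plan is to derive everything from the naturality of $\eta$. Recall that $\eta \in \End(\Af)$ means that for every $\Loc$-morphism $\psi:\Mb\to\Nb$ one has the naturality identity
\begin{equation*}
\eta_\Nb \circ \Af(\psi) = \Af(\psi)\circ \eta_\Mb.
\end{equation*}

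For part (a), both equations in \eqref{eq:basic} are immediate specialisations of this identity. In the first, apply naturality to the canonical inclusion $\iota_{\Mb;O}:\Mb|_O\to\Mb$, which is a genuine $\Loc$-morphism because $O$ is open and causally convex with finitely many components, so that $\Mb|_O$ is a legitimate object of $\Loc$. The domain and codomain of this morphism are $\Mb|_O$ and $\Mb$, whence naturality yields exactly $\eta_\Mb\circ\Af(\iota_{\Mb;O}) = \Af(\iota_{\Mb;O})\circ\eta_{\Mb|_O}$. The second equation is even more direct: for $\psi\in\End(\Mb)$, both components $\eta_\Nb$ and $\eta_\Mb$ reduce to the single component $\eta_\Mb$, and naturality gives $\eta_\Mb\circ\Af(\psi) = \Af(\psi)\circ\eta_\Mb$.

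For part (b), I would unpack the construction of the relative Cauchy evolution. Given $\hb\in H(\Mb)$, one singles out open globally hyperbolic subregions $\Mb^\pm$ of $\Mb$ lying in the causal complement of $\supp \hb$ to the past/future respectively, and each of these is also a subregion of $\Mb[\hb]$ via the identity map. The inclusions $\iota_\Mb^\pm:\Mb^\pm\to\Mb$ and $\iota_{\Mb[\hb]}^\pm:\Mb^\pm\to\Mb[\hb]$ are Cauchy morphisms, so the timeslice axiom renders each $\Af(\iota_\Mb^\pm)$ and $\Af(\iota_{\Mb[\hb]}^\pm)$ an isomorphism, and $\rce_\Mb[\hb]$ is the composition
\begin{equation*}
\rce_\Mb[\hb] = \Af(\iota_\Mb^-)\circ\Af(\iota_{\Mb[\hb]}^-)^{-1}\circ \Af(\iota_{\Mb[\hb]}^+)\circ \Af(\iota_\Mb^+)^{-1}.
\end{equation*}
Naturality of $\eta$ applied to each of the four morphisms expresses $\eta$ as an intertwiner between the corresponding $\Af$-images; moreover, if $\zeta_\Nb\circ\Af(\psi)=\Af(\psi)\circ\zeta_\Mb$ with $\Af(\psi)$ invertible, then $\zeta_\Mb\circ\Af(\psi)^{-1}=\Af(\psi)^{-1}\circ\zeta_\Nb$, so naturality also survives inversion. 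Composing the resulting four intertwining relations in the order dictated by $\rce_\Mb[\hb]$, and observing that $\eta$ has a single component at $\Mb$ and at $\Mb[\hb]$ (but these coincide in the sense that $\Af(\Mb)=\Af(\Mb[\hb])$ is not asserted; rather, the intermediate components $\eta_{\Mb^\pm}$ cancel between adjacent factors), one obtains \eqref{eq:intertwine}.

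The only mildly subtle point, and what I would double-check, is the bookkeeping in this telescoping argument: one must verify that the intermediate components of $\eta$ at the auxiliary spacetimes cancel correctly, and that the endpoint components are both $\eta_\Mb$ rather than $\eta_{\Mb[\hb]}$. This follows from the fact that $\rce_\Mb[\hb]$ is an endomorphism of $\Af(\Mb)$, so traversing the composition one starts and finishes at $\Af(\Mb)$, forcing the surviving components of $\eta$ to be $\eta_\Mb$ on both sides. Aside from this, the proof is essentially a formal consequence of naturality; no special assumption beyond the timeslice axiom is needed for (b).
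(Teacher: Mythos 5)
Your proof is correct. Part (a) is exactly the paper's argument: both identities are instances of the naturality square for $\eta$, applied to $\iota_{\Mb;O}$ and to $\psi\in\End(\Mb)$ respectively. For part (b) the paper does not argue directly at all — it simply cites Proposition~3.8 of the reference \cite{FewVer:dynloc_theory}, which is precisely the statement that natural transformations intertwine relative Cauchy evolutions. What you have done is reconstruct the proof of that cited result: unpack $\rce_\Mb[\hb]$ as the four-fold composite $\Af(\iota^-_\Mb)\circ\Af(\iota^-_{\Mb[\hb]})^{-1}\circ\Af(\iota^+_{\Mb[\hb]})\circ\Af(\iota^+_\Mb)^{-1}$, note that naturality squares pass through inverses of the (isomorphic, by timeslice) images of Cauchy morphisms, and telescope. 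This chain is valid and gives a self-contained argument where the paper gives only a pointer, which is a genuine gain in readability. One small bookkeeping remark: the intermediate components that cancel are not only $\eta_{\Mb^\pm}$ but also $\eta_{\Mb[\hb]}$ — the chain passes through the object $\Af(\Mb[\hb])$, which is in general a different object from $\Af(\Mb)$ even though the underlying manifold is the same, and the component of $\eta$ there is consumed between the two middle factors. Your parenthetical gestures at this but names only $\eta_{\Mb^\pm}$; with that corrected, the endpoints are indeed both $\eta_\Mb$ and the identity \eqref{eq:intertwine} follows.
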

\begin{proof} 
Eq.~\eqref{eq:basic} is simply two instances of the
definition of naturality, while Eq.~\eqref{eq:intertwine} is a special case of~\cite[Proposition~3.8]{FewVer:dynloc_theory}. 
\end{proof}

Although part~(a) of the result is completely elementary, it immediately
tells us that endomorphisms act locally, and automorphisms act
strictly locally: if $\Phys=\Alg$ or $\CAlg$, for instance, $\Af^\kin(\Mb;O)$ 
is the image of $\Af(\iota_{\Mb;O})$ and we have
\begin{equation}\label{eq:local_action}
\eta_\Mb(\Af^\kin(\Mb;O))\subset \Af^\kin(\Mb;O),
\end{equation}
with equality if $\eta$ is an automorphism. Moreover, the second
part of~(a) asserts that endomorphisms of the theory commute with spacetime symmetries (indeed, even with spacetime endomorphisms). 
Thus two of the defining properties of an internal symmetry in AQFT are met, and generalised, by automorphisms of a locally covariant theory. In Sec.~\ref{sec:gauge_gp} we will see
how other standard properties are realised in representations. 

The fact that endomorphisms commute with relative Cauchy evolution
will be important when we come to classify them in particular models.
In circumstances where the relative Cauchy evolution may be differentiated with respect to the metric perturbation, Eq.~\eqref{eq:intertwine} asserts that endomorphisms preserve the
stress-energy tensor. 

On operational grounds, it is important to understand the
extent to which an endomorphism of a locally covariant theory
is determined by its behaviour in any single spacetime; 
put another way, if two endomorphisms have the same action 
in one spacetime, what can be said about their action in others? 
A full treatment requires additional assumptions (see below) but
we may make some preliminary observations:
\begin{Lem} \label{lem:tools1}
Consider a theory $\Af:\Loc\to\Phys$ obeying the timeslice condition. Let $\eta,\etap\in\End(\Af)$ and suppose that $\eta_\Mb=\etap_\Mb$ for
some spacetime $\Mb$. Then the following are true:
(i) if $\Lb\stackrel{\psi}{\to}\Mb$ then $\eta_\Lb=\etap_\Lb$; (ii)
if $\Mb\stackrel{\varphi}{\to}\Nb$ is Cauchy then
$\eta_\Nb=\etap_\Nb$; (iii) $\eta_\Lb=\eta'_{\Lb}$ 
for any spacetime $\Lb$ whose Cauchy surfaces are oriented-diffeomorphic to those of $\Mb|_O$ for
some $O\in\OO(\Mb)$. 
\end{Lem}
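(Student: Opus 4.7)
The plan for (i) is straightforward: naturality with respect to $\psi$ yields
\[
\Af(\psi)\circ\eta_\Lb = \eta_\Mb\circ\Af(\psi) = \etap_\Mb\circ\Af(\psi) = \Af(\psi)\circ\etap_\Lb,
\]
and since $\Af(\psi)$ is monic by the standing assumption that all morphisms of $\Phys$ are monic, cancellation gives $\eta_\Lb=\etap_\Lb$.

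For (ii), the timeslice axiom ensures that $\Af(\varphi)$ is an isomorphism in $\Phys$. Naturality then expresses $\eta_\Nb$ as $\Af(\varphi)\circ\eta_\Mb\circ\Af(\varphi)^{-1}$, and likewise for $\etap_\Nb$, so the equality at $\Mb$ propagates to $\Nb$.

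The real content lies in (iii). First, applying (i) to the canonical inclusion $\iota_{\Mb;O}:\Mb|_O\to\Mb$ reduces the problem to showing that $\eta_\Lb=\etap_\Lb$ once $\eta_{\Mb|_O}=\etap_{\Mb|_O}$ is in hand, under the assumption that $\Lb$ has Cauchy surfaces oriented-diffeomorphic to those of $\Mb|_O$. The plan is to link $\Lb$ and $\Mb|_O$ by a common spacetime into which both admit Cauchy embeddings: the standard deformation argument, going back to BFV and used in the form needed here in \cite{FewVer:dynloc_theory}, yields a globally hyperbolic spacetime $\Pb$ together with Cauchy morphisms $\alpha:\Lb\to\Pb$ and $\beta:\Mb|_O\to\Pb$. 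Applying (ii) to $\beta$ gives $\eta_\Pb=\etap_\Pb$, and then (i) applied to $\alpha$ (a fortiori a morphism of $\Loc$) delivers $\eta_\Lb=\etap_\Lb$. If the deformation result is only available in a zig-zag form $\Lb\to\Pb_1\leftarrow\Pb_2\to\Mb|_O$ with all arrows Cauchy, one simply alternates applications of (i) and (ii) along the chain.

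The main obstacle is therefore not algebraic but geometric: one must verify that the hypothesis of oriented-diffeomorphism of Cauchy surfaces carries sufficient data (in particular, compatibility with the time orientations inherited from $\Lb$ and $\Mb|_O$) to legitimately invoke the deformation construction within $\Loc$. This is implicit in the customary statements of the deformation lemma, but deserves an explicit comment when writing out the proof in full.
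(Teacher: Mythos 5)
Your proposal is correct and follows essentially the same route as the paper: parts (i) and (ii) are proved identically, and part (iii) is handled by exactly the spacetime-deformation (``Cauchy wedge connectedness'') zig-zag of Cauchy morphisms linking $\Lb$ to $\Mb|_O$, traversed by alternating applications of (i) and (ii), just as you anticipate in your fallback remark.
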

\begin{proof} (i) Because $\eta$ and $\etap$ are natural, we
have $\Af(\psi)\circ \eta_\Lb = \eta_\Mb\circ \Af(\psi) = \etap_\Mb\circ
\Af(\psi) = \Af(\psi)\circ \etap_\Lb$
and since $\Af(\psi)$ is monic, $\eta_\Lb=\etap_\Lb$. (The time slice property is
not required for this argument.) 
(ii) As $\Af(\varphi)$ is an isomorphism, we have $
\etap_\Nb = \Af(\varphi)\circ\etap_\Mb\circ\Af(\varphi)^{-1} = 
\Af(\varphi)\circ\eta_\Mb\circ\Af(\varphi)^{-1} = \eta_\Nb$
as required. For (iii), we use ``Cauchy wedge connectedness''~\cite[Proposition~2.4]{FewVer:dynloc_theory} (a 
formalisation of spacetime deformation arguments going back to~\cite{FullingNarcowichWald}) to obtain a chain of morphisms
\[
\Lb \xlongleftarrow{c} \Lb'
\xlongrightarrow{c} \Lb'' \xlongleftarrow{c} \Lb''' \xlongrightarrow{c} \Mb|_O \xlongrightarrow{\iota_{\Mb;O}} \Mb,
\]
in which a `c' above a morphism indicates that it is Cauchy. 
Starting at the right-hand end of this chain, where  $\eta_\Mb=\etap_\Mb$, we use parts~(i) and~(ii) to
move leftwards, deducing that the components of $\eta$ and $\etap$ agree in $\Mb_O$, $\Lb'''$ (using part~(i) twice), $\Lb''$ (using part~(ii)), 
$\Lb'$ (part~(i)) and finally $\Lb$ (part~(ii) again). 
\end{proof}

\paragraph{Additivity and the determination of an endomorphism from a single spacetime} 
The theories we will study satisfy additivity
properties of the type expected of field theories. Namely, 
in each spacetime $\Mb$, the object $\Af(\Mb)$ is generated in a suitable sense by its subobjects $\Af^\kin(\Mb;O)$ as $O$ runs over
a set of subspacetimes of $\Mb$. For the latter, we will use the 
{\em truncated multi-diamonds}~\cite[Definition~2.5]{FewVer:dynloc_theory}, which are sets of the form $\Nc\cap D_\Mb(B)$,
where $\Nc$ is an open globally hyperbolic neighbourhood of a Cauchy surface $\Sigma$ for $\Mb$, 
and $B$ is a union of finitely many disjoint subsets of $\Sigma$ each of which is a nonempty open ball in suitable local coordinates. Sets of the above form with
$\Nc=\Mb$ are called {\em multi-diamonds}. 

The sense in which the $\Af^\kin(\Mb;O)$ generate $\Af(\Mb)$ depends
on the category, and can be expressed abstractly using the notion of a
categorical {\em union}. A category $\Ct$ is said to have unions \cite[\S1.9]{DikranjanTholen} if, given
any family $(m_i)_{i\in I}$ of monic $\Ct$-morphisms $m_i:M_i\to A$, 
representing $\Ct$-subobjects of $A$, 
there exists a monic $m:M\to A$ such that (1) each $m_i$ factorises as $m_i=m\circ \tilde{m}_i$, and (2) given any $f:A\to B$ and a monic $n:N\to B$ such that every $f\circ m_i$ factorises as
$n\circ\tilde{n}_i$, then there is a unique morphism $\tilde{f}:M\to N$
such that $n\circ \tilde{f}=f\circ m$ and $\tilde{f}\circ \tilde{m}_i=\tilde{n}_i$ for all $i\in I$. In other words, 
commutativity of the outer portion of the diagram 
\begin{equation}\label{eq:property(2)}
\begin{tikzpicture}[baseline=0.5em, description/.style={fill=white,inner sep=2pt}]
\matrix (m) [ampersand replacement=\&,matrix of math nodes, row sep=3em,
column sep=3em, text height=1ex, text depth=0.25ex]
{M_i \&  M \& A   \\
       \& N \&  B    
\\};
\path[->]
(m-1-1)  edge node[above] {$ \tilde{m}_i  $} (m-1-2)
             edge node[below,sloped] {$ \tilde{n}_i $} (m-2-2)
(m-1-2)  edge[style=dashed] node[right] {$ \tilde{f} $} (m-2-2)
             edge node[above] {$ m $} (m-1-3)
(m-1-3)  edge node[right] {$ f $} (m-2-3)
(m-2-2)  edge node[above] {$ n $} (m-2-3);
\end{tikzpicture}
\end{equation}
for each $i\in I$ (with the understanding that $m_i=m\circ \tilde{m}_i$) entails the existence of a unique $\tilde{f}$ making the diagram  commute in full.
The union subobject $m:M\to A$ is defined up to
isomorphism and we write
\[
m\cong \bigvee_{i\in I} m_i : \bigvee_{i\in I} M_i\to A.
\]
(See \cite{DikranjanTholen} and~\cite[Appendix~B]{FewVer:dynloc_theory}
for more details.) Among the categories we employ for $\Phys$, 
both $\Alg$ and $\CAlg$ have unions, corresponding to the $(C)^*$-subalgebra generated by a family of $(C)^*$-subalgebras. The same is true of the category of complexified presymplectic spaces 
appearing in Sec.~\ref{sect:examples} (linear span of presymplectic subspaces). However,  the category of symplectic spaces does not
have unions---note that the linear span of symplectic subspaces need not be symplectic. We may now give a precise statement of additivity.
\begin{Def} 
A theory $\Af:\Loc\to\Phys$ is said to be
{\em additive} if $\Phys$ has unions and, for each spacetime $\Mb$, 
\[
\Af(\Mb)=\bigvee_{D\subset \Mb} \Af^\kin(\Mb;D),\qquad
\text{or, more precisely,}~\quad \id_{\Af(\Mb)} \cong 
\bigvee_{D\subset\Mb} \Af(\iota_{\Mb;D}),
\]
where the union runs over the set of all truncated multi-diamond subsets of $\Mb$.
\end{Def}
In particular, any {\em dynamically local} theory is additive in this sense~\cite[Theorem~6.3]{FewVer:dynloc_theory}.

It will be convenient to consider categories where the
existence of unions would either be tedious to demonstrate or even fails, but where there is a related category that does have unions. In such circumstances, the following generalised definition is useful. 
\begin{Def}
A theory $\Af:\Loc\to\Phys$ is said to be
{\em $\Uf$-additive} if $\Uf$ is a faithful functor\footnote{That is, $\Uf$ is injective as a
map of morphisms.} $\Uf:\Phys\to\Phys'$,
where $\Phys'$ is a category possessing unions and all of whose morphisms are monic, such that $\Uf\circ\Af$ is additive. 
\end{Def}
$\Uf$-additivity includes additivity as a special case, if $\Phys$ has unions, by taking $\Uf$ to be the identity functor on $\Phys$.

We will need a simple technical lemma, applying if $\Ct$ has both
unions and {\em equalizers} for arbitrary pairs of morphisms. 
Here, an equalizer of $f,g:B\to C$ in $\Ct$ is a morphism $h:A\to B$ such that $f\circ h=g\circ h$ and satisfying the property that, if $k$ is any morphism with $f\circ k=g\circ k$ then $k$ factorizes uniquely via $h$, i.e., $k=h\circ m$ for a unique morphism $m$. Equalizers are determined up to isomorphism by this definition; we write $h\cong \eq(f,g)$. The categories $\Alg$ and $\CAlg$ have
equalizers: morphisms $\alpha,\beta:\Ac\to\Bc$ are equalized by the inclusion map in $\Ac$ of the $(C)^*$-subalgebra of $\Ac$ on which $\alpha$ and $\beta$ agree. 
\begin{Lem}\label{lem:union_invariance}
Suppose $\Ct$ has unions and equalizers. Let
 $(m_i)_{i\in I}$ be a class-indexed family of subobjects of $A\in \Ct$ with union $m:M\to A$. If morphisms $g$ and $h$ obey $g\circ m_i=h\circ m_i$ for all $i\in I$, then $g\circ m=h \circ m$. If, additionally, 
$m$ is an isomorphism, then $g=h$. 
\end{Lem}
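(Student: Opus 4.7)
The plan is to reduce the problem to the universal property of the equalizer and then to apply the universal property of the union with carefully chosen data. Write $g,h:A\to B$ for the two morphisms (this is the only possibility for the compositions $g\circ m_i$ and $h\circ m_i$ to be defined), and let $e:E\to A$ be the equalizer of $g$ and $h$, which exists by hypothesis and is automatically monic.

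First I would, for each $i\in I$, use the hypothesis $g\circ m_i = h\circ m_i$ together with the universal property of $e\cong\eq(g,h)$ to produce a unique morphism $k_i:M_i\to E$ with $m_i = e\circ k_i$. These factorisations set up exactly the data required to apply the union's universal property in the form displayed in diagram~\eqref{eq:property(2)}: take the bottom-right object to be $A$ itself, the morphism $f$ to be $\id_A:A\to A$, the monic $n:N\to B$ of that diagram to be $e:E\to A$, and $\tilde{n}_i:=k_i$. Since $f\circ m_i = m_i = e\circ k_i = n\circ\tilde{n}_i$ by construction, the universal property yields a unique $\tilde{f}:M\to E$ with $e\circ\tilde{f} = \id_A\circ m = m$ (and $\tilde{f}\circ\tilde{m}_i = k_i$, although this is not needed).

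Given this factorisation $m=e\circ\tilde{f}$, the desired identity is immediate:
\[
g\circ m = g\circ e\circ\tilde{f} = h\circ e\circ\tilde{f} = h\circ m,
\]
using the defining property $g\circ e = h\circ e$ of the equalizer. For the final clause, if $m$ is an isomorphism then right-composition of $g\circ m=h\circ m$ with $m^{-1}$ yields $g=h$.

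There is no real obstacle here; the proof is a direct diagram chase. The only point requiring minor care is verifying that the universal property of the union is genuinely applicable, which reduces to noting that $e$ is monic (true of any equalizer) and that the factorisations $m_i = e\circ k_i$ fit the pattern $f\circ m_i = n\circ\tilde{n}_i$ in~\eqref{eq:property(2)} with $f=\id_A$ and $n=e$. No hypothesis beyond the existence of unions and equalizers in $\Ct$ is invoked.
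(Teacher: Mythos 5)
Your proof is correct and follows essentially the same route as the paper's: factorise each $m_i$ through the equalizer of $g$ and $h$, then invoke the universal property of the union with $f=\id_A$ and $n=\eq(g,h)$ to factor $m$ through the equalizer as well. The only cosmetic difference is that you spell out the monicity of the equalizer explicitly, which the paper leaves implicit.
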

\begin{proof} We have $g\circ m_i=h\circ m_i$ and hence a factorisation $m_i=\eq(g,h)\circ \tilde{n}_i$ for each $i\in I$. Setting $B=A$, $f=\id_A$ and $N$ equal to the domain of $\eq(g,h)$, the outer portion of diagram~\eqref{eq:property(2)} commutes for all $i\in I$, and there is therefore a morphism $\tilde{f}$ to make the diagram commute in full. In particular,
$\eq(g,h)\circ\tilde{f} = m$, so
$g\circ m = g\circ \eq(g,h)\circ\tilde{f}= h\circ \eq(g,h)\circ\tilde{f}= m$
as required. The last statement is immediate (it would be enough
for $m$ to be epic). 
\end{proof}

We can now complete the discussion begun in Lemma~\ref{lem:tools1}.
\begin{Thm} \label{thm:reduce_to_Mink}
Suppose $\Af$ obeys the timeslice axiom and is $\Uf$-additive with respect to 
$\Uf:\Phys\to\Phys'$, where $\Phys'$ has unions and equalizers. Then every $\eta\in\End(\Af)$ is uniquely determined by its component $\eta_{\Mb}$ in any given spacetime $\Mb$.
\end{Thm}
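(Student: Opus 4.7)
The plan is to show that if $\eta,\eta'\in\End(\Af)$ agree in a specified $\Mb$, they must agree in every spacetime $\Lb$, by combining Lemma~\ref{lem:tools1}(iii) with $\Uf$-additivity via Lemma~\ref{lem:union_invariance}. Fix an arbitrary spacetime $\Lb$ and let $D\subset\Lb$ be a truncated multi-diamond, say $D=\Nc\cap D_\Lb(B)$, where $B$ is a disjoint union of $k$ open balls inside a Cauchy surface of $\Nc\subset\Lb$. The key geometric observation is that $\Mb$ itself admits truncated multi-diamonds with any prescribed finite number of connected components: one simply picks $k$ sufficiently small disjoint coordinate balls on any Cauchy surface of $\Mb$. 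Choosing such a $D'\in\OO(\Mb)$, the Cauchy surfaces of $\Lb|_D$ and $\Mb|_{D'}$ are each a disjoint union of $k$ open balls in $\RR^{n-1}$, and hence oriented-diffeomorphic.

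With this in hand, Lemma~\ref{lem:tools1}(iii), applied to $\Lb|_D$ (which has Cauchy surfaces oriented-diffeomorphic to those of $\Mb|_{D'}$), yields $\eta_{\Lb|_D}=\etap_{\Lb|_D}$. Naturality (Proposition~\ref{prop:basic}(a)) then gives
\[
\eta_\Lb\circ\Af(\iota_{\Lb;D})
=\Af(\iota_{\Lb;D})\circ\eta_{\Lb|_D}
=\Af(\iota_{\Lb;D})\circ\etap_{\Lb|_D}
=\etap_\Lb\circ\Af(\iota_{\Lb;D})
\]
for every truncated multi-diamond $D\subset\Lb$. Applying the faithful functor $\Uf$, the morphisms $\Uf(\eta_\Lb)$ and $\Uf(\etap_\Lb)$ agree on every $\Uf(\Af(\iota_{\Lb;D}))$.

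Now invoke $\Uf$-additivity of $\Af$: in $\Phys'$, $\id_{\Uf\Af(\Lb)}$ is isomorphic to the union $\bigvee_{D\subset\Lb}\Uf(\Af(\iota_{\Lb;D}))$, taken over truncated multi-diamonds in $\Lb$. Because $\Phys'$ also possesses equalizers, Lemma~\ref{lem:union_invariance} applies and forces $\Uf(\eta_\Lb)=\Uf(\etap_\Lb)$. Faithfulness of $\Uf$ then concludes $\eta_\Lb=\etap_\Lb$, so the endomorphism is determined globally by its component at $\Mb$.

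The main obstacle I would expect is the geometric step of matching truncated multi-diamonds in $\Lb$ with ones in $\Mb$ having oriented-diffeomorphic Cauchy surfaces: this only works cleanly because truncated multi-diamonds are defined via disjoint coordinate balls, so the diffeomorphism type of the Cauchy surface of $\Lb|_D$ depends only on the number of components, not on $\Lb$. Everything else is a routine assembly of naturality, $\Uf$-additivity, and the union/equalizer machinery already set up in the preceding lemmas.
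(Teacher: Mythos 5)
Your proposal is correct and follows essentially the same route as the paper's proof: reduce to an arbitrary spacetime via truncated multi-diamonds, match them to truncated multi-diamonds in $\Mb$ with the same number of connected components so that Lemma~\ref{lem:tools1}(iii) applies, intertwine with $\Af(\iota_{\Lb;D})$ by naturality, and finish with $\Uf$-additivity, Lemma~\ref{lem:union_invariance} and faithfulness of $\Uf$. The only difference is that you spell out the geometric matching step (existence in $\Mb$ of a truncated multi-diamond with any prescribed number of components) slightly more explicitly than the paper does.
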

{\noindent\bf Remark.} 
In particular, the conclusion holds if $\Phys$ has
unions and equalizers and $\Af$ is additive, by taking $\Uf$ to be the identity functor.

\begin{proof} Suppose $\etap\in\End(\Af)$ agrees with $\eta$ in $\Mb$, i.e., $\etap_{\Mb}=\eta_{\Mb}$.   If $\Nb$ is
any spacetime and $D$ is any truncated multi-diamond in $\Nb$ then
$\Nb|_D$ has Cauchy surfaces oriented-diffeomorphic to any truncated
multi-diamond in $\Mb$ with the same number of connected components as $D$. 
Accordingly, $\eta_{\Nb_D} =\etap_{\Nb_D}$ by Lemma~\ref{lem:tools1}(iii), and the naturality of $\eta$ and $\etap$ gives
\[
\eta_\Nb\circ\Af(\iota_{\Nb;D}) =
\Af(\iota_{\Nb;D})\circ\eta_{\Nb|_D} = \Af(\iota_{\Nb;D})\circ\etap_{\Nb|_D} = \etap_\Nb\circ\Af(\iota_{\Nb;D}).
\]
Applying $\Uf$, we have $\Uf(\eta_\Nb)\circ\Uf(\Af(\iota_{\Nb;D})) =\Uf(\etap_\Nb)\circ \Uf(\Af(\iota_{\Nb;D}))$. By Lemma~\ref{lem:union_invariance}, it follows that $\eta_{\Nb}=\etap_{\Nb}$  because  $\bigvee_{D\subset \Nb}\Uf(\Af(\iota_{\Nb;D}))$
is an isomorphism and $\Uf$ is faithful. As $\Nb$ was arbitrary, $\eta=\etap$. 
\end{proof}

\subsection{States and twisted locality}\label{sec:states}

The discussion of the previous subsections was conducted
quite abstractly, in order to emphasise the general applicability 
of the ideas. In order to make contact with quantum field theory,
we now describe more specific categories of physical systems
that incorporate not only $*$-algebras, but also states,
and allow for the Bose/Fermi distinction. Our discussion of state 
spaces is based almost entirely on that in BFV, but the discussion
of twisted locality is new and, in fact, is made possible by the discussion above.

\paragraph{States} By a {\em state space} for an algebra $\Ac\in\Alg$, we mean a subset $\Sc$ of normalized
positive linear functionals on $\Ac$ that is closed under convex linear combinations, and 
operations induced by $\Ac$ [i.e.,
to each $\omega\in\Sc$ and $B\in\Ac$ with $\omega(B^*B)> 0$, the state $\omega_B(A):=\omega(B^*AB)/\omega(B^*B)$ is also an element of $\Sc$]. BFV raised this idea to the functorial level: along with
a functor $\Af:\Loc\to\Alg$, they considered a
contravariant functor $\Sf$ from $\Loc$ to a suitable category of state spaces, with the property that each $\Sf(\Mb)$ is a state space for $\Af(\Mb)$ and that each $\Sf(\psi)$ is an appropriate restriction of the dual map $\Af(\psi)^*$. Then $\Sf$ is called a state space for $\Af$.
The state space may be given various additional attributes~\cite{BrFrVe03}; 
in particular, we say that $\Sf$ is {\em faithful} if
\[
\bigcap_{\omega\in\Sf(\Mb)}\ker\pi_\omega = \{0\},
\]
where $\pi_\omega$ is the GNS representation of $\Af(\Mb)$ induced by $\omega$. Given
the other properties of a state space, faithfulness also implies\footnote{If $\omega(A)=0$ for all $\omega\in\Sf(\Mb)$ then also $\omega(B^* AB)=0$ for
all $\omega\in\Sf(\Mb)$ and $B\in\Af(\Mb)$; polarising, $\omega(B^* AC )=0$ 
for all $\omega\in\Sf(\Mb)$ and $B,C\in\Af(\Mb)$, so $\pi_\omega(A)=0$ for every $\omega\in\Sf(\Mb)$.} that 
\[
\bigcap_{\omega\in\Sf(\Mb)}\ker \omega = \{0\}.
\]
In the $C^*$-case, the state space is said to be {\em locally quasi-equivalent} if, 
for every spacetime $\Mb$, relatively compact $O\subset\Mb$ and states $\omega_i\in\Sf(\Mb)$ ($i=1,2$), the
GNS representations $\langle \HH_{\omega_i},\pi_{\omega_i},\Omega_i\rangle$ restrict to quasi-equivalent 
representations of $\Af^\kin(\Mb;O)$, i.e., 
the sets of states on $\Af^\kin(\Mb;O)$ induced by
density matrices on $\HH_1$ and $\HH_2$ coincide. 

The following simple observation will be useful. 
\begin{Lem} \label{lem:invariant_states}
Suppose $\eta\in\End(\Af)$, where $\Af:\Loc\to\Alg$ or $\CAlg$. If $\psi\in\Aut(\Mb)$ and $\omega$ is an $\Af(\psi)$-invariant
state on $\Af(\Mb)$ then $\eta_\Mb^*\omega$ is also invariant.
\end{Lem}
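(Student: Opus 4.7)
The plan is to unwind the definitions and apply the commutativity relation from Proposition~\ref{prop:basic}(a). Recall that $\eta_\Mb^*\omega$ is the pullback state defined by $(\eta_\Mb^*\omega)(A) = \omega(\eta_\Mb(A))$ for $A\in\Af(\Mb)$, and $\Af(\psi)$-invariance of a state $\varphi$ means $\varphi\circ\Af(\psi) = \varphi$.

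First I would fix an arbitrary $A\in\Af(\Mb)$ and compute
\[
(\eta_\Mb^*\omega)(\Af(\psi)(A)) = \omega(\eta_\Mb(\Af(\psi)(A))).
\]
The key step is to invoke the second equality of~\eqref{eq:basic} in Proposition~\ref{prop:basic}(a), which states that $\eta_\Mb\circ\Af(\psi) = \Af(\psi)\circ\eta_\Mb$ for every $\psi\in\End(\Mb)$; since $\psi\in\Aut(\Mb)\subset\End(\Mb)$ this applies here. Substituting yields
\[
\omega(\eta_\Mb(\Af(\psi)(A))) = \omega(\Af(\psi)(\eta_\Mb(A))) = \omega(\eta_\Mb(A)) = (\eta_\Mb^*\omega)(A),
\]
where the middle equality uses $\Af(\psi)$-invariance of $\omega$. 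Since $A$ was arbitrary, $\eta_\Mb^*\omega$ is $\Af(\psi)$-invariant.

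There is essentially no obstacle: the statement is a direct corollary of naturality applied to spacetime automorphisms, together with the definition of pullback. The only thing worth remarking on is that one must verify $\eta_\Mb^*\omega$ is indeed a state (normalized and positive), but this is automatic because $\eta_\Mb$ is a unital $*$-homomorphism (morphisms in $\Alg$ and $\CAlg$ are unit-preserving $*$-monomorphisms), so $\eta_\Mb^*\omega$ inherits normalization and positivity from $\omega$.
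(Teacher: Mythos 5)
Your proof is correct and follows exactly the same route as the paper: the one-line computation $\omega\circ\eta_\Mb\circ\Af(\psi)=\omega\circ\Af(\psi)\circ\eta_\Mb=\omega\circ\eta_\Mb$ using the naturality relation from Proposition~\ref{prop:basic}(a) and the invariance of $\omega$. The extra remark that $\eta_\Mb^*\omega$ is again a state is a harmless (and correct) addition.
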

\begin{proof} 
$\Af(\psi)^*(\eta_\Mb^*\omega)= \omega\circ\eta_\Mb\circ\Af(\psi) =
\omega\circ\Af(\psi)\circ\eta_\Mb = \omega\circ\eta_\Mb =
\eta_\Mb^*\omega$.  
\end{proof}

\paragraph{Graded algebras, states and (twisted) locality}
We combine the description of the algebras and their state spaces
as a single mathematical object. At the same time we build in the possibility of describing Bose and Fermi
statistics by considering a category
of graded algebras with states, $\grAS$, whose objects are triples $\langle \Ac,\gamma,\Sc\rangle$ consisting of an algebra
$\Ac$ together with a choice of state space $\Sc$ for $\Ac$, and an
involutive automorphism $\gamma$ of $\Ac$ obeying $\gamma^*\Sc=\Sc$, which determines a $\ZZ_2$-grading. A morphism between
triples $\langle \Ac,\gamma,\Sc\rangle$ and $\langle \Bc,\delta,\Tc\rangle$ is
determined by any $\alpha:\Ac\to\Bc$ in $\Alg$ with the property that $\alpha\circ \gamma=
\delta\circ\alpha$ and $\alpha^*\Tc\subset \Sc$. The association between
any $\grAS$ morphism and its underlying $\Alg$ morphism determines
a faithful functor $\Uf:\grAS\to\Alg$ such that $\Uf(\langle \Ac,\gamma,\Sc\rangle)=\Ac$. We also have an obvious analogue $\grCAS$, 
obtained by replacing $\Alg$ by $\CAlg$ throughout and to
which the following remarks apply {\em mutatis mutandis}.

A theory $\Xf\in \LCT_\grAS$ assigns 
a triple $\Xf(\Mb) = \langle \Af(\Mb), \gamma_\Mb, \Sf(\Mb)\rangle\in\grAS$ to each $\Mb\in\Loc$,
and to each morphism $\psi:\Mb\to\Nb$ a corresponding morphism
in $\grAS$. It follows immediately that $\Uf\circ \Xf$ is a theory
in $\LCT_\Alg$, with $\Uf\circ\Xf(\Mb) = \Af(\Mb)$; similarly, 
the $\Sf(\Mb)$ form a state space for $\Af$. Moreover, the
$\gamma_\Mb$ form the components of an automorphism
$\gamma\in\Aut(\Af)$ obeying $\gamma^2=\id_\Af$ and under
which $\Sf$ is invariant. 

A subtheory embedding between $\Xf = \langle \Af,\gamma,\Sf\rangle$
and $\Yf = \langle \Bf,\delta,\Tf\rangle$ in $\LCT_\grAS$ is, as usual, 
a natural transformation $\zeta:\Xf\nto\Yf$. The morphisms $\Uf(\zeta_\Mb)$
form the components of a natural $\Uf(\zeta):\Af\nto\Bf$, such that
$\Uf(\zeta)\circ\gamma=\delta\circ\Uf(\zeta)$ and $\Uf(\zeta)^*\Tf$ is
a subfunctor of $\Sf$.\footnote{That is, $\Uf(\zeta_\Mb)^*(\Tf(\Mb))\subset \Sf(\Mb)$, 
and $\Uf(\zeta_\Mb)^*\Tf(\psi)$ is a restriction of $\Sf(\psi)$ for all $\Mb$, $\psi$.} 
As $\Uf$ is faithful, $\zeta\mapsto\Uf(\zeta)$ determines an isomorphism
\begin{equation}\label{eq:Aut_iso}
\Aut(\langle \Af,\gamma,\Sf\rangle) \cong \{
\eta\in\Aut(\Af): \eta\circ\gamma =\gamma\circ\eta,~\eta^*\Sf=\Sf\}
\end{equation}
so the introduction of the grading and state space can break the symmetry
group of $\Af$ to a subgroup of the centralizer of $\gamma$ in $\Aut(\Af)$. 
As $\gamma$ is an element of the right-hand side of~\eqref{eq:Aut_iso}, 
it follows that there is a (unique) $\hat{\gamma}\in\Aut(\langle \Af,\gamma,\Sf\rangle)$ such that $\Uf(\hat{\gamma}) =\gamma$. Furthermore, 
$\hat{\gamma}^2 =\id_{\langle \Af,\gamma,\Sf\rangle}$, and $\hat{\gamma}$
is evidently central in $\Aut(\langle \Af,\gamma,\Sf\rangle)$. (In
passing, note that if we replace $\Sf$ by an extended
state space $\widetilde{\Sf}$ with
\[
\widetilde{\Sf}(\Mb) = \co\bigcup_{\eta\in\Aut(\Af)} \eta_\Mb^*\Sf(\Mb),
\]
where $\co$ denotes closure under (finite) convex linear combinations,
the gauge group will coincide with the centralizer of $\gamma$. If
$\gamma$ is central in $\Aut(\Af)$, this would also ensure that
$\Aut(\langle \Af,\gamma,\widetilde{\Sf}\rangle)\cong \Aut(\Af)$.) 

The automorphism $\gamma$ may be used to define a graded commutator on $\Af(\Mb)$ by
\[
\bbLbrack A, B \bbRbrack= A B -(-1)^{\sigma\sigma'} BA
\]
for $A,B\in\Af(\Mb)$ such that $\gamma_\Mb(A)=(-1)^\sigma A$, $\gamma_\Mb(B)=(-1)^{\sigma'} B$ ($\sigma,\sigma'\in\{0,1\}$), and extended by linearity. As $\hat{\gamma}$ is central in $\Aut(\langle \Af,\gamma,\Sf\rangle)$, 
it follows that the graded commutator is equivariant in the sense that
\[
\bbLbrack \Uf(\zeta)_\Mb A,\Uf(\zeta)_\Mb B \bbRbrack= 
\Uf(\zeta)_\Mb \bbLbrack  A, B \bbRbrack
\]
for all $\zeta\in\Aut(\langle \Af,\gamma,\Sf\rangle)$, $\Mb\in\Loc$ and
$A,B\in\Af(\Mb)$. 
The theory $\langle \Af,\gamma,\Sf\rangle$ can then be said to obey {\em twisted locality} if 
\[
\bbLbrack\Af^\kin(\Mb;O_1),\Af^\kin(\Mb;O_2)\bbRbrack= \{0\}
\]
whenever $O_i\in\OO(\Mb)$ are causally disjoint, which implements
standard commutation relations for a mixture of bosonic and fermionic
degrees of freedom and reduces to commutation
at spacelike separation if $\gamma=\id_\Af$.

We briefly connect these new structures with some of the
ideas in the previous subsections. First, if $\langle\Af,\gamma,\Sf\rangle$ obeys the timeslice axiom and the state space
$\Sf$ is faithful for $\Af$, the connection between the
relative Cauchy evolution and the stress-energy tensor can
be made more specific. It is easily seen that $\Uf$ maps the
relative Cauchy evolution of $\langle\Af,\gamma,\Sf\rangle$ to that
of $\Af$: $\Uf(\rce^{(\langle\Af,\gamma,\Sf\rangle)}_\Mb[\hb]) = \rce^{(\Af)}_\Mb[\hb]$; moreover, 
$\rce_\Mb[\hb]\circ\gamma_\Mb=\gamma_\Mb\circ
\rce_\Mb[\hb]$ and $\rce_\Mb[\hb]^*\Sf(\Mb)=\Sf(\Mb)$. The relative Cauchy evolution is said to be weakly differentiable with respect to $\Sf(\Mb)$ on all $A\in\Af(\Mb)$, if for each smooth $1$-parameter family $\lambda\mapsto \hb(\lambda)\in H(\Mb)$, there exists a (unique, due to faithfulness) element, denoted $[\Tb_\Mb(\fb),A]\in
\Af(\Mb)$ such that
\[
\omega([\Tb_\Mb(\fb),A]) = 2i\left.\frac{d}{d\lambda} 
\omega(\rce_\Mb[\hb(\lambda)]A)\right|_{\lambda=0}
\]
for all $\omega\in\Sf(\Mb)$, where $\fb=\dot{\hb}(0)$. This defines
a stress-energy tensor as a (possibly outer) symmetric derivation on $\Af(\Mb)$.
Under these circumstances, suppose that $\eta\in\End(\langle\Af,\gamma,\Sf\rangle)$. Then we may  differentiate the identity
$\omega(\rce_\Mb[\hb(\lambda)]\circ \eta_\Mb A) = \omega(\eta_\Mb\circ \rce_\Mb[\hb(\lambda)] A) =
(\eta_\Mb^*\omega)(\rce_\Mb[\hb(\lambda)] A)$ (using Proposition~\ref{prop:basic}) and again use faithfulness of $\Sf$ to obtain 
\[
[\Tb_\Mb(\fb),\eta_\Mb A] = \eta_\Mb [\Tb_\Mb(\fb),A] \qquad\forall A\in\Af(\Mb),
\]
so the stress-energy derivation commutes with all endomorphisms,
and in particular with the grading $\gamma_\Mb$. This makes precise
the sense in which endomorphisms preserve the stress-energy tensor,
and shows that the latter is necessarily Bosonic (as one would expect).

Finally, a theory $\langle \Af,\gamma,\Sf\rangle\in\LCT_\grAS$ is $\Uf$-additive if and only if $\Af$ is additive in $\LCT_\Alg$.
Subject to that condition and the timeslice property, Theorem~\ref{thm:reduce_to_Mink} applies to $\langle \Af,\gamma,\Sf\rangle$ and permits the determination of its endomorphisms
through the action in any individual spacetime.

\section{The gauge group and algebra of observables}

\subsection{Gauge group}\label{sec:gauge_gp}

In this section we study the automorphism group of theories
in $\LCT_\grAS$ and $\LCT_\grCAS$ in terms of the GNS representations
of the underlying algebras induced by their state spaces. This
makes direct contact with, and again generalises, the global gauge group of Minkowski space AQFT~\cite{DHRi}. 

Throughout this subsection we consider a fixed
theory $\langle \Ff,\gamma,\Sf\rangle$ which obeys the timeslice
axiom and $\Uf$-additivity as described above, and write
$G = \Aut(\langle\Ff,\gamma,\Sf\rangle)$. We now endow $G$ with a
topology and investigate its properties. 
\begin{Def} The {\em natural weak topology} on $G$ is
 the weakest group topology in which 
$G\owns\eta\mapsto \omega(\eta_\Mb F)$ is continuous for all $\Mb\in\Loc$,
$\omega\in\Sf(\Mb)$ and $F\in\Ff(\Mb)$.
\end{Def}

\begin{Prop} 
If $\Sf$ is faithful then the natural weak topology of $G$ is Hausdorff (and  therefore finer than the indiscrete topology provided $G$ is nontrivial). 
\end{Prop}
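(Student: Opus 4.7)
The plan is to show that the natural weak topology is Hausdorff by reducing to a separation statement, exploiting faithfulness of $\Sf$ to produce a separating continuous functional.

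First I would recall that, in a topological group, being Hausdorff is equivalent to being $T_0$, so it suffices to show that any two distinct elements $\eta_1,\eta_2\in G$ can be separated by an open set. Moreover, since the natural weak topology is generated by the continuity requirement of the maps $f_{\Mb,\omega,F}:G\to\CC$, $\eta\mapsto\omega(\eta_\Mb F)$ (with $\Mb\in\Loc$, $\omega\in\Sf(\Mb)$, $F\in\Ff(\Mb)$), and the target $\CC$ is Hausdorff, it suffices to exhibit at least one such $f_{\Mb,\omega,F}$ taking different values on $\eta_1$ and $\eta_2$: the preimages of any two disjoint open sets in $\CC$ separating these values then provide disjoint open neighbourhoods.

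The second step is to produce such a functional. If $\eta_1\neq\eta_2$, then there is some spacetime $\Mb$ and element $F\in\Ff(\Mb)$ with $(\eta_1)_\Mb F \neq (\eta_2)_\Mb F$, i.e., $(\eta_1)_\Mb F - (\eta_2)_\Mb F$ is a nonzero element of $\Ff(\Mb)$. Here I would invoke the consequence of faithfulness already recorded in Sec.~\ref{sec:states}: faithfulness of $\Sf$ together with the state-space structure implies $\bigcap_{\omega\in\Sf(\Mb)}\ker\omega=\{0\}$. Hence there exists $\omega\in\Sf(\Mb)$ with $\omega((\eta_1)_\Mb F)\neq \omega((\eta_2)_\Mb F)$, which is precisely what we need.

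The parenthetical remark is then immediate: if $G$ has more than one element, then the indiscrete topology would fail to be Hausdorff on $G$, so the natural weak topology (being Hausdorff by the above) must be strictly finer. I do not anticipate any substantial obstacle here — the argument is a direct unpacking of definitions, with the only substantive input being the faithfulness hypothesis, which has been pre-packaged in the footnote of Sec.~\ref{sec:states} into the statement that $\bigcap_\omega \ker\omega=\{0\}$.
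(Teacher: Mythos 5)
Your proof is correct and follows essentially the same route as the paper's: both separate distinct $\eta_1,\eta_2$ by one of the generating functions $\eta\mapsto\omega(\eta_\Mb F)$, using the recorded consequence of faithfulness that $\bigcap_{\omega\in\Sf(\Mb)}\ker\omega=\{0\}$ applied to the nonzero difference $(\eta_1)_\Mb F-(\eta_2)_\Mb F$. The only (harmless) difference is that the paper invokes Theorem~\ref{thm:reduce_to_Mink} to get that the components differ in any chosen spacetime, whereas you use only the trivial fact that they differ in \emph{some} spacetime, which suffices because the natural weak topology quantifies over all $\Mb\in\Loc$.
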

\begin{proof} Suppose without loss that $G$ is nontrivial and let $\eta,\zeta\in G$ be arbitrary, with $\eta\neq\zeta$. Theorem~\ref{thm:reduce_to_Mink} entails that $\eta_\Mb\neq\zeta_\Mb$, so there exists $F\in \Ff(\Mb)$ such that $\eta_{\Mb}F\neq 
\zeta_{\Mb}F$. As $\Sf$ is faithful, there is $\omega\in\Sf(\Mb)$ such that $\omega(\eta_{\Mb}F) \neq \omega(\zeta_{\Mb} F)$. Thus the topology separates $\eta$ and $\zeta$ and is therefore Hausdorff. \end{proof}

\begin{Prop}\label{prop:urep}
Suppose $\omega\in\Sf(\Mb)$ is gauge-invariant, i.e., $\eta_\Mb^*\omega=\omega$
for all $\eta\in G$, and induces a faithful GNS representation
$(\Hc_\omega,\DD_\omega,\pi_\omega,\Omega_\omega)$ of $\Ff(\Mb)$. Then
there is a faithful and strongly continuous representation $G\owns\eta\mapsto U_\eta$  such that
$\pi_\omega(\eta_\Mb F) = U_\eta \pi_\omega(F)U_\eta^{-1}$ and  $U_\eta\Omega_\omega =\Omega_\omega$ ($\eta\in G$, $F\in\Ff(\Mb)$)
and which acts strictly locally, that is
$U_\eta\Ff^\kin(\Mb;O)U_\eta^{-1} =\Ff^\kin(\Mb;O)$ for all
nonempty $O\in\OO(\Mb)$. Moreover, if $\pi_\omega$ is irreducible, the representation of $G$ commutes with the unitary representation of any spacetime automorphism under which $\omega$ is invariant. 
\end{Prop}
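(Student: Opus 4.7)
The strategy is to build each $U_\eta$ directly on the GNS space by transferring the action of $G$ on $\Ff(\Mb)$, then verify each listed property in turn. The gauge-invariance of $\omega$ does almost all the work for existence; faithfulness and uniqueness of the representation come from Theorem~\ref{thm:reduce_to_Mink}; strong continuity is the only delicate point and needs the polarization identity together with the closure properties of the state space.

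First, I would define $U_\eta$ on the dense subspace $\pi_\omega(\Ff(\Mb))\Omega_\omega$ by $U_\eta \pi_\omega(F)\Omega_\omega := \pi_\omega(\eta_\Mb F)\Omega_\omega$. Gauge-invariance of $\omega$ gives $\|\pi_\omega(\eta_\Mb F)\Omega_\omega\|^2 = \omega(\eta_\Mb(F^*F)) = \omega(F^*F)$, so the map is well-defined and isometric, and extends by continuity to $\Hc_\omega$. Vertical composition of natural transformations yields $(\eta\zeta)_\Mb = \eta_\Mb\circ\zeta_\Mb$, hence $U_{\eta\zeta}=U_\eta U_\zeta$ on a dense subspace and therefore everywhere; taking $\zeta = \eta^{-1}$ shows each $U_\eta$ is a surjective isometry and hence unitary. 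The identities $U_\eta\Omega_\omega = \Omega_\omega$ (setting $F=I$) and $\pi_\omega(\eta_\Mb F) = U_\eta \pi_\omega(F) U_\eta^{-1}$ follow by direct computation on $\pi_\omega(\Ff(\Mb))\Omega_\omega$ and extension. Faithfulness of $\eta\mapsto U_\eta$ is immediate: $U_\eta=I$ forces $\pi_\omega(\eta_\Mb F)=\pi_\omega(F)$, so $\eta_\Mb=\id$ by faithfulness of $\pi_\omega$, and then Theorem~\ref{thm:reduce_to_Mink} (applicable via the standing $\Uf$-additivity and timeslice hypotheses) gives $\eta=\id$. Strict locality follows by combining the intertwiner relation with Eq.~\eqref{eq:local_action} specialised to automorphisms, which states that $\eta_\Mb$ fixes each $\Ff^\kin(\Mb;O)$ setwise.

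For strong continuity, I would exploit the standard fact that for a uniformly bounded family of unitaries, weak matrix-element convergence on a dense pair of subspaces combined with norm preservation automatically upgrades to strong convergence. Taking $\psi=\pi_\omega(G)\Omega_\omega$ and $\xi=\pi_\omega(F)\Omega_\omega$ gives the matrix element $\langle\psi,U_\eta\xi\rangle = \omega(G^*\eta_\Mb F)$. The natural weak topology directly handles continuity of $\eta\mapsto\varphi(\eta_\Mb F)$ only for $\varphi\in\Sf(\Mb)$, so a bridge is required. I would invoke the polarization identity to write $\omega(G^*(\eta_\Mb F)I)$ as a finite linear combination of terms $\omega(D^*(\eta_\Mb F)D)$ where each $D$ is a fixed element built from $G$ and $I$. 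When $\omega(D^*D)\neq 0$, the normalised functional $A\mapsto\omega(D^*AD)/\omega(D^*D)$ lies in $\Sf(\Mb)$ by the state-space closure axioms, so the corresponding term is continuous in $\eta$; remaining terms vanish identically by Cauchy--Schwarz.

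Finally, for the commutation claim under irreducibility, the same construction applied to the $\Ff(\psi)$-invariant state $\omega$ produces a unitary $V_\psi$ implementing $\Ff(\psi)$ in $\pi_\omega$ and fixing $\Omega_\omega$. Proposition~\ref{prop:basic}(a) supplies $\eta_\Mb\circ\Ff(\psi) = \Ff(\psi)\circ\eta_\Mb$, from which the commutator $C := V_\psi^{-1}U_\eta^{-1}V_\psi U_\eta$ lies in $\pi_\omega(\Ff(\Mb))'$; irreducibility then forces $C$ to be a scalar, and evaluation on $\Omega_\omega$ (which is fixed by each of the four unitaries) identifies that scalar as $1$. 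The main obstacle, as noted, is the strong-continuity step: the natural weak topology speaks only of pure matrix elements $\omega(\eta_\Mb F)$, whereas GNS matrix elements naturally involve sandwiched expressions $\omega(G^*\eta_\Mb F)$, and the polarization argument (leaning on state-space closure under operations by algebra elements) is what closes this gap.
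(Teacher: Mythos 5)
Your proposal is correct and follows essentially the same route as the paper's proof: the GNS construction of $U_\eta$ from gauge-invariance, faithfulness via $\pi_\omega$ and Theorem~\ref{thm:reduce_to_Mink}, strict locality from Proposition~\ref{prop:basic}(a), strong continuity by reducing sandwiched matrix elements $\omega(G^*\eta_\Mb F)$ to state-space elements via polarization (the paper's $(\II+\lambda A)$ expansion is the same device), and the commutation claim via irreducibility plus invariance of $\Omega_\omega$. No gaps.
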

\begin{proof} The existence of the unitary representation is immediate from
gauge-invariance of $\omega$. If $U_\eta=\II_{\Hc_\omega}$, then $\eta_\Mb F = F$
for all $F\in\Ff(\Mb)$, because $\pi_\omega$ is faithful. Thus $\eta$ and 
$\id_{\langle\Ff,\gamma,\Sf\rangle}$
have equal components in $\Mb$ and are therefore equal by Theorem~\ref{thm:reduce_to_Mink},
so $\eta\mapsto U_\eta$ is faithful. By definition of the natural weak topology of $G$ and closure of the state space under operations, the maps 
$\eta\mapsto \omega((\II+\lambda A )^*(\eta_\Mb B)(\II +\lambda A ))$ are continuous for all $A,B\in\Ff(\Mb)$ and all $\lambda\in\CC$ of sufficiently small modulus. Expanding in $\lambda$ and $\bar{\lambda}$, we deduce that 
the maps $\eta\mapsto \omega(A^*\eta_\Mb B)$ and $\eta\mapsto \omega((\eta_\Mb B)A)$ [and $\eta\mapsto \omega(A^*(\eta_\Mb B) A)$]
are continuous for any $A,B\in\Ff(\Mb)$, and hence  that
$\eta\mapsto U_\eta$ is strongly continuous on the dense domain $\pi_\omega(\Ff(\Mb))\Omega_\omega$. An $\epsilon/3$ argument completes the proof of strong continuity. 
The strict locality of the representation follows immediately from
Proposition~\ref{prop:basic}(a), cf.\ Eq.~\eqref{eq:local_action}.

Now let $H_\omega$ be the (possibly trivial) subgroup of $\Aut(\Mb)$ leaving $\omega$ invariant, i.e., $\Af(\psi)^*\omega=\omega$ for all $\psi\in H_\omega$. Then there is also a unitary representation $H_\omega\owns
\psi\mapsto V_\psi$, with $\pi_\omega(\Ff(\psi) F) = V_\psi \pi_\omega(F)V_\psi^{-1}$ and 
$V_\psi\Omega_\omega =\Omega_\omega$ ($\psi\in H_\omega$, $F\in\Ff(\Mb)$). The computation
\[
U_\eta V_\psi\pi_\omega(F) V_\psi^* U_\eta^* = 
\pi_\omega(\eta_\Mb\circ\Ff(\psi)F) = \pi_\omega(\Ff(\psi)\circ\eta_\Mb
F) = V_\psi U_\eta\pi_\omega(F) U_\eta^* V_\psi^*
\]
shows that $U_\eta$ and $V_\psi$ commute up to phase, by irreducibility
of $\pi_\omega$; as both operators leave $\Omega_\omega$ invariant, 
it follow that $U_\eta$ and $V_\psi$ commute. 
\end{proof}

Under the hypotheses of Proposition~\ref{prop:urep}, we may define
a new group topology on $G$, namely the weakest in which the representation
$\eta\mapsto U_\eta$ is strongly continuous (this is necessarily
weaker than the original topology) -- we call this {\em the topology
induced by $\omega$}, or the {\em $\omega$-topology}.  Theorem~\ref{thm:tops} of Appendix~\ref{appx:topology} shows that the  natural weak topology is in fact equivalent to the topology induced by the Minkowski space vacuum state for theories in $\LCT_\grCAS$ obeying 
suitable conditions. Furthermore, Proposition~\ref{prop:Gmax} gives
conditions for these topologies to be compact. 

\subsection{Action on fields}

The gauge group of a theory acts in a natural way on the associated locally covariant
fields. Some relevant definitions are needed from BFV and~\cite{Fewster2007}, adapted slightly to our setting. To start, consider a general category of
physical systems $\Phys$, equipped with a functor $\Vf:\Phys\to\Set$, the category of sets and (not necessarily injective) functions. Fix a functor $\Df:\Loc\to\Set$.
Then any natural transformation $\Phi:\Df\nto\Vf\circ\Tf$ will be described as a field of type $\Df$ associated with $\Tf$. That is, to each $\Mb$ there is a function
$\Phi_\Mb:\Df(\Mb)\to\Vf(\Tf(\Mb))$, (not assumed to be injective) such that
\[
\Vf(\Tf(\psi))\Phi_\Mb(f) = \Phi_\Nb(\Df(\psi)f) 
\]
for each $\psi:\Mb\to\Nb$. We use $\Fld(\Df,\Tf)$ to denote the set of all such fields (suppressing $\Vf$ from the notation). For example, with $\Df(\Mb)=\CoinX{\Mb}$ (as a set) and $\Df(\Mb\stackrel{\psi}{\to}\Nb)$ given by the push-forward $\Df(\Mb\stackrel{\psi}{\to}\Nb)=\psi_*$, where
\begin{equation}\label{eq:psistar}
(\psi_* f )(p) = \begin{cases} f(\psi^{-1}(p)) & p\in \psi(\Mb)\\ 0 &\text{otherwise}
\end{cases}\qquad (f\in\CoinX{\Mb},~p\in\Nb),
\end{equation}
we obtain the scalar fields associated with the theory. Fields indexed by (possibly distributional) sections in more general
bundles can also be described in a similar way -- see, e.g.,~\cite{Fewster2007} -- we restrict to the scalar case in this 
subsection for simplicity.  The functor $\Vf$ is usually
obvious. For $\Phys=\Alg$ or $\CAlg$, we take the forgetful functor sending each algebra to its underlying set and each morphism to the underlying function; for $\grAS$ and $\grCAS$, we use the functor sending $\langle\Ac, \gamma,\Sc\rangle$ to the underlying set of $\Ac$ and morphisms to the underlying functions. 

For a theory $\langle\Ff,\gamma,\Sf\rangle\in\LCT_\grAS$,
the set $\Fld(\Df,\langle\Ff,\gamma,\Sf\rangle)$ (and, similarly,
$\Fld(\Df,\Ff)$) may be given
a unital $*$-algebra structure under pointwise operations inherited from the algebras $\Ff(\Mb)$~\cite{Fewster2007}. 
Thus $(\Phi+\lambda\Psi)_\Mb(f) = \Phi_\Mb(f) + \lambda\Psi_\Mb(f)$, $(\Phi\Psi)_\Mb(f) = 
\Phi_\Mb(f)\Psi_\Mb(f)$, $(\Phi^*)_\Mb(f)= \Phi_\Mb(f)^*$,\footnote{If $\Phi$ is a linear
field, this definition makes $\Phi^*$ conjugate linear.} and the unit field is
$\II_\Mb(f) = \II_{\Ff(\Mb)}$, for all $f\in\CoinX{\Mb}$, $\Mb\in\Loc$.
If $\langle\Ff,\gamma,\Sf\rangle\in\LCT_\grCAS$ 
then
\[
\|\Phi\| = \sup_{\Mb\in\Loc}\sup_{f\in\Df(\Mb)} \|\Phi_\Mb(f)\|_{\Ff(\Mb)}
\]
is a $C^*$-norm on the $*$-subalgebra $\Fld^\infty(\Df,\Ff)$ on which it is finite. (Some
set-theoretical niceties are glossed over here; see~\cite{Fewster2007}.)

These abstract algebras of fields carry an action of the automorphism group $G = \Aut(\langle\Ff,\gamma,\Sf\rangle)$ in an obvious way. Given any $\eta\in G$, and $\Phi\in\Fld(\Df,\langle\Ff,\gamma,\Sf\rangle)$, define $\eta\cdot\Phi$ by
\[
(\eta\cdot\Phi)_\Mb(f) = \eta_\Mb\Phi_\Mb(f)\qquad (f\in\CoinX{\Mb},~\Mb\in\Loc).
\]
This is easily seen to define a field $\eta\cdot\Phi\in\Fld(\Df,\langle\Ff,\gamma,\Sf\rangle)$ by the calculation
\begin{align*}
\Ff(\psi)(\eta\cdot\Phi)_\Mb(f) &= \Ff(\psi)\circ \eta_\Mb(\Phi_\Mb(f)) = 
\eta_\Nb \circ \Ff(\psi) \Phi_\Mb(f) = \eta_\Nb \Phi_\Nb(\psi_*f) \\
&= (\eta\cdot\Phi)_\Nb(\psi_*f) 
\end{align*}
for any $\psi:\Mb\to\Nb$, $f\in\CoinX{\Mb}$. Moreover, the action of $\eta$ is
evidently a $*$-automorphism of $\Fld(\Df,\langle\Ff,\gamma,\Sf\rangle)$ and gives a group homomorphism
$G\mapsto \Aut(\Fld(\Df,\langle\Ff,\gamma,\Sf\rangle))$ [and a $C^*$-automorphism of $\Fld^\infty(\Df,\langle\Ff,\gamma,\Sf\rangle)$,
and corresponding group homomorphism, if relevant]. 
Endowing $\Fld(\Df,\langle\Ff,\gamma,\Sf\rangle)$ with the weakest topology in which every function
$\Phi\mapsto \omega(\Phi_\Mb(f))$ ($\Mb\in\Loc$, $f\in\Df(\Mb)$, $\omega\in\Sf(\Mb)$) is continuous,
this action of $G$  is continuous with respect to the natural weak topology.\footnote{$\eta \mapsto\eta\cdot\Phi$ is continuous iff the functions $\eta\mapsto \omega((\eta\cdot\Phi)_\Mb(f))=
\omega(\eta_\Mb\Phi_\Mb(f))$ are continuous, which they are by definition.}

In particular, this gives a continuous linear representation of $G$ on $\Fld(\Df,\langle\Ff,\gamma,\Sf\rangle)$, regarded as a 
vector space. A {\em multiplet of fields} can now be defined as any subspace of $\Fld(\Df,\langle\Ff,\gamma,\Sf\rangle)$
transforming under an indecomposable representation of $G$, and 
every field can be associated with an equivalence class of $G$-representation.
Let $\rho, \sigma$ be the equivalence classes corresponding to fields
$\Phi$, $\Psi$. Then  $\Phi^*$ transforms in the
complex conjugate representation $\bar{\rho}$ to $\rho$, any linear combination of $\Phi$ and
$\Psi$ transforms in a subrepresentation of a quotient of $\rho\oplus\sigma$, 
and $\Phi\Psi$ and $\Psi\Phi$ transform in (possibly different)
subrepresentations of quotients of $\rho\otimes\sigma$. The quotients
reflect any algebraic relationships among the fields in the multiplets of
$\Phi$, $\Psi$ under the linear combination or product. For example, if $\Phi$ and $\Psi$
belong to a common multiplet, then their linear combinations belong to the same multiplet. 

The fact that both $\sigma$ and $\bar{\sigma}$ appear expresses the 
fundamental particle--antiparticle symmetry of quantum field theory.
Algebras of bi-local and multi-local fields can be defined, and similar 
comments apply to them.

\subsection{The algebra of observables}\label{sect:alg_of_obs}

In AQFT, the local observables are precisely those elements of the
local field algebras that are fixed points under the gauge group. 
An analogous construction may be carried out for any theory
$\Ff:\Loco\to\Phys$,\footnote{For the moment, we restrict to connected spacetimes;
see comments below.} provided that $\Phys$ has equalizers 
over arbitrary families of morphisms: 
in each $\Mb$, let $\alpha_\Mb$ be an equalizer for all the morphisms
$\eta_\Mb$ where $\eta\in\Aut(\Ff)$. Thus
$\eta_\Mb\circ\alpha_\Mb = \alpha_\Mb$
for all $\eta\in\Aut(\Ff)$ and, if some $\beta$ should have the same
property (replacing $\alpha_\Mb$ by $\beta$) then
$\beta=\alpha_\Mb\circ\gamma$ for a uniquely determined $\gamma$. 
We write $\Af(\Mb)$ for the domain of $\alpha_\Mb$. Next, if 
$\psi:\Mb\to\Nb$, observe that
\[
\eta_\Nb\circ \Ff(\psi)\circ\alpha_\Mb= \Ff(\psi)\circ\eta_\Mb\circ
\alpha_\Mb = \Ff(\psi)\circ \alpha_\Mb\qquad(\eta\in\Aut(\Ff)),
\]
so there is a unique
morphism $\Af(\psi):\Af(\Mb)\to\Af(\Nb)$ such that $\Ff(\psi)\circ\alpha_\Mb = \alpha_\Nb
\circ\Af(\psi)$. 
\begin{Prop} $\Af:\Loco\to\Phys$ is a functor, and the maps $\alpha_\Mb$ 
constitute a subtheory embedding $\alpha:\Af\nto\Ff$. Moreover, if
$\beta:\Bf\nto\Ff$ is any subtheory embedding such that $\eta\circ\beta=\beta$
for all $\eta\in\Aut(\Ff)$, then there is a unique $\hat{\beta}:\Bf\nto\Af$ so
that $\beta=\alpha\circ\hat{\beta}$.
\end{Prop}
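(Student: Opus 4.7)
The plan is to verify each of the three claims—functoriality of $\Af$, naturality of $\alpha$, and the universal property—through straightforward diagram chases, invoking the equalizer property in both its existence and uniqueness aspects. I do not anticipate any real obstacle; the argument is routine bookkeeping with universal properties, and the only subtlety is upgrading componentwise uniqueness to uniqueness of natural transformations, which uses that equalizers are monic.

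First I would check functoriality of $\Af$. For identity preservation, observe that
\[
\Ff(\id_\Mb)\circ \alpha_\Mb = \alpha_\Mb = \alpha_\Mb\circ \id_{\Af(\Mb)},
\]
so the uniqueness clause characterizing $\Af(\id_\Mb)$ forces $\Af(\id_\Mb)=\id_{\Af(\Mb)}$. For composition of $\Lb\xrightarrow{\varphi}\Mb\xrightarrow{\psi}\Nb$, chaining the defining identities yields
\[
\Ff(\psi\circ\varphi)\circ\alpha_\Lb = \Ff(\psi)\circ\Ff(\varphi)\circ\alpha_\Lb = \Ff(\psi)\circ\alpha_\Mb\circ\Af(\varphi) = \alpha_\Nb\circ\Af(\psi)\circ\Af(\varphi),
\]
and uniqueness gives $\Af(\psi\circ\varphi)=\Af(\psi)\circ\Af(\varphi)$. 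Naturality of $\alpha$ is then automatic, since the identity $\Ff(\psi)\circ\alpha_\Mb=\alpha_\Nb\circ\Af(\psi)$ is precisely the defining equation of $\Af(\psi)$. Each $\alpha_\Mb$ is monic because every equalizer is monic, so $\alpha$ is a subtheory embedding.

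For the universal property, suppose $\beta:\Bf\nto\Ff$ satisfies $\eta\circ\beta=\beta$ for every $\eta\in\Aut(\Ff)$, i.e., $\eta_\Mb\circ\beta_\Mb=\beta_\Mb$ componentwise. The equalizer property of $\alpha_\Mb$ then produces a unique morphism $\hat\beta_\Mb:\Bf(\Mb)\to\Af(\Mb)$ with $\alpha_\Mb\circ\hat\beta_\Mb=\beta_\Mb$. To verify naturality of $\hat\beta$, I would compute for any $\psi:\Mb\to\Nb$
\[
\alpha_\Nb\circ\hat\beta_\Nb\circ\Bf(\psi) = \beta_\Nb\circ\Bf(\psi) = \Ff(\psi)\circ\beta_\Mb = \Ff(\psi)\circ\alpha_\Mb\circ\hat\beta_\Mb = \alpha_\Nb\circ\Af(\psi)\circ\hat\beta_\Mb,
\]
and cancel $\alpha_\Nb$ using monicity to obtain $\hat\beta_\Nb\circ\Bf(\psi)=\Af(\psi)\circ\hat\beta_\Mb$. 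Uniqueness of $\hat\beta$ as a natural transformation follows immediately, since any candidate must agree with $\hat\beta_\Mb$ componentwise by the uniqueness portion of the equalizer property.
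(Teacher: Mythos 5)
Your proof is correct and follows essentially the same route as the paper: functoriality and naturality are extracted from the defining equation of $\Af(\psi)$ together with monicity of the equalizers, and the universal property is obtained componentwise from the equalizing property and then upgraded to a natural transformation by cancelling the monic $\alpha_\Nb$. The only cosmetic difference is that where you invoke the uniqueness clause of the factorisation, the paper cancels $\alpha_\Nb$ directly, which amounts to the same thing.
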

\begin{proof} The functorial nature of $\Af$ is justified by the calculations
$
\alpha_\Mb\circ \Af(\id_\Mb) = \Ff(\id_\Mb)\circ\alpha_\Mb = \alpha_\Mb
$
and
\[
\alpha_\Nb\circ\Af(\psi)\circ\Af(\psi') = \Ff(\psi)\circ\Ff(\psi')\circ
\alpha_\Mb = \Ff(\psi\circ\psi')\circ\alpha_\Mb =
\alpha_\Nb\circ\Af(\psi\circ\psi')
\]
together with the monic property of the $\alpha_\Mb$.
By construction the $\alpha_\Mb$ constitute a natural
$\alpha:\Af\nto\Ff$, with the property $\eta\circ\alpha = \alpha$ for
all $\eta\in\Aut(\Ff)$. If $\beta:\Bf\nto\Ff$ with
$\eta\circ\beta=\beta$ for all $\eta\in\Aut(\Ff)$, we take components in
$\Mb$ and use the equalizing property of $\alpha_\Mb$ to deduce
that $\beta_\Mb = \alpha_\Mb\circ \hat{\beta}_\Mb$ for uniquely determined
$\hat{\beta}_\Mb:\Bf(\Mb)\to\Af(\Mb)$. We then calculate
\[
\alpha_\Nb\circ\hat{\beta}_\Nb\circ\Bf(\psi) = \beta_\Nb\circ\Bf(\psi) =
\Ff(\psi)\circ\beta_\Mb = \Ff(\psi)\circ \alpha_\Mb\circ \hat{\beta}_\Mb = 
\alpha_\Nb\circ\Af(\psi)\circ\hat{\beta}_\Mb,
\]
which proves (again, because $\alpha_\Nb$ is monic) that $\hat{\beta}:\Bf\nto\Af$ and
$\beta=\alpha\circ\hat{\beta}$. \end{proof}

The theory $\Af$ is a natural candidate for the theory of observables determined
by the field functor $\Ff$. In the case $\Phys=\Alg$, of course, the algebra $\Af(\Mb)$
may be identified concretely with the subalgebra of $\Ff(\Mb)$ of fixed points under $\eta_\Mb$ ($\eta\in\Aut(\Ff)$). 

However, there are various reasons to be cautious regarding this definition.
First, there is no guarantee that $\Aut(\Af)$ is trivial, 
although this is what one would expect if $\Ff$ is a `reasonable' field functor,
and could be used as a selection criterion for candidate field functors $\Ff$. 
As an example of an `unreasonable' field functor, 
suppose that indeed $\Ff$ is given so that $\Aut(\Af)$ is
trivial, and adopt $\Ff\otimes\Af$ as the field functor. In the
simplest case, $\Aut(\Ff\otimes\Af)=\Aut(\Ff)\otimes\id_\Af$,
and the corresponding
observable functor would be $\Af\otimes\Af$, which has a nontrivial
automorphism corresponding to the interchange of factors. Thus
$\Aut(\Af\otimes\Af)$ has a $\ZZ_2$ subgroup. 

Second, if one applies the same construction to theories defined on possibly
disconnected spacetimes $\Loc$, it can result in `observables'
that are built from `unobservable' elements in different 
spacetime components, whose operational significance is
unclear, to say the least (we will see examples in Sec.~\ref{sect:examples}). 
In these circumstances, 
it is tempting to define the `true algebra of observables' to be
the subalgebra of $\Af(\Mb)$ generated by
the images of $\Af(\iota_{\Mb;\Cb})(\Af(\Cb))$ as $\Cb$ runs
over the connected components of $\Mb$, with canonical inclusions
$\iota_{\Mb;\Cb}:\Cb\to\Mb$.  

Third, in some cases it can happen that the theory $\Af$ is trivial. 
For example, in the case of the classical fields
discussed in Sec.~\ref{sect:examples} there are no nonzero
elements of the symplectic space that are invariant under the action
of all elements of the symmetry group. Similar problems occur
if the theory is then quantized using the Weyl algebra, which has
no fixed points (other than multiples of the unit) under a faithful continuous 
group action if there are no fixed-points in the underlying symplectic space. 

Nonetheless, the above definition is worthy of further investigation and
will turn out to give the expected theory of observables in the scalar
field examples, when quantized using the infinitesimal Weyl algebra.

\section{Energy compactness excludes proper endomorphisms}\label{sect:compactness}

One might suspect that theories admitting endomorphisms that are not automorphisms are unphysical in some way. In this section, we confirm
such suspicions for locally covariant theories whose Minkowski space
versions obey standard assumptions of the Haag--Araki--Kastler framework,
of which the most important will be an energy compactness requirement
weaker than the nuclearity conditions of~\cite{BucWic:1986}. Provided 
such a theory has no `accidental' gauge symmetries in Minkowski space---internal symmetries of the Minkowski space net that do not arise from automorphisms
of the locally covariant theory---then all its endomorphisms are automorphisms.
Moreover, we show that the automorphism group can be given the structure
of a compact topological group. Our argument here is more direct than
standard presentations and uses weaker hypotheses; it is
therefore of independent interest.

Energy compactness conditions were first introduced by Haag and Swieca~\cite{HaaSwi:1965} in an attempt to understand the general
conditions under which a quantum field theory admits a particle interpretation.
A major development in this line of thought occurred with the introduction
of nuclearity criteria by Buchholz and Wichmann~\cite{BucWic:1986}, which
gave more stringent criteria closely linked to the split property (itself
linked to a rich mathematical theory of standard split inclusions of von Neumann algebras~\cite{DopLon:1984}) and good thermodynamic behaviour of the theory~\cite{BucJun:1989}. 
A variety of nuclearity conditions have been proposed
subsequently, see~\cite{BucPor:1990} for a review and~\cite{Bos_uPSC:2005}
for a more recent variant.
The underlying physical idea of all these approaches, arising from the uncertainty principle, is that the number of degrees of freedom available in small phase space volumes should be finite; this cannot be implemented literally, and compactness or its variants often stand in for `finiteness' in the technical conditions imposed. 

Given that endomorphisms of a locally covariant theory are injections and  preserve both localization and  the energy scale,
they intuitively map any given volume of phase space into itself in a volume preserving way. Thus the existence of a proper endomorphism,
(i.e., one that is not an automorphism) can be expected to conflict with
energy compactness on physical grounds, and this is exactly what we will establish.
Note that we are not claiming, nor do we expect, that all models violating energy compactness
admit proper endomorphisms. 

Our result will be proved for locally covariant theory $\langle \Ff,\gamma,\Sf\rangle:\Loc\to\grCAS$ obeying a number of conditions that will now be introduced and discussed. The first assumption collects the basic
conditions required in general spacetimes, while the others relate
specifically to Minkowski space and are largely standard assumptions in
algebraic QFT.
\begin{enumerate}%\addtolength{\itemsep}{-0.5\baselineskip}
\item {\em Twisted locality, time-slice, $\Uf$-additivity and local quasi-equivalence} 
Here, $\Uf$ is the usual
faithful functor $\Uf:\grCAS\to \CAlg$, so $\Ff$ is additive as
a theory in $\LCT_\CAlg$.
\item \label{it:unique_inv} {\em Unique Poincar\'e invariant state} In Minkowski space $\Mb_0$, there is a unique state $\omega_0\in \Sf(\Mb_0)$ that is
$\Ff(\psi)$-invariant for all proper orthochronous Poincar\'e transformations $\psi$.
\setcounter{assumptions}{\value{enumi}}
\end{enumerate}
Note that assumption~\eqref{it:unique_inv} does not apply in 
the case of the massless free scalar field, which admits a $1$-parameter
family of Poincar\'e invariant vacuum states in Minkowski space. We will see (albeit not in the $C^*$-setting) that there are no proper endomorphisms of that theory either, but that
the automorphism group is noncompact, in contrast to the situation discussed
in this section.
 
The invariant state $\omega_0$ induces a GNS representation $(\Hc,\pi,\Omega)$ of the theory in Minkowski space, and hence a local net of
$C^*$-algebras $\Ffr(O):=\pi(\Ff^\kin(\Mb_0;O))$ on $\Hc$ indexed by 
relatively compact, connected and nonempty $O\in \OO(\Mb_0)$. Taking
double-commutants we obtain a net of von Neumann algebras $\Mfr(O):=
\Ffr(O)''$, with the same index set. The GNS representation and local
nets are assumed to obey a number
of standard conditions:
\begin{enumerate}\setcounter{enumi}{\value{assumptions}}
%\addtolength{\itemsep}{-0.5\baselineskip}
\item {\em Faithfulness, irreducibility and separability} The GNS representation $\pi$ is a faithful and irreducible representation of $\Ff(\Mb_0)$ on a separable Hilbert space $\Hc$.
\item {\em Covariance and spectrum condition} (a) The algebra automorphisms of $\Ff(\Mb_0)$ induced by 
the proper orthochronous Poincar\'e group can be unitarily implemented in $(\HH,\pi,\Omega)$ by a 
strongly continuous unitary representation $\Lambda\mapsto U(\Lambda)$ so that $U(\Lambda)\Omega=\Omega$ and
\[
U(\Lambda)\Ffr(O)U(\Lambda)^{-1} = \Ffr(\Lambda O);
\]
(b) the self-adjoint generators $P_\mu$ of the translation subgroup have joint spectrum contained
in the forward light-cone. 
\item {\em Reeh--Schlieder}\label{it:ReehSchlieder} For all nonempty $O\in\OO(\Mb_0)$, the subspace $\Ffr(O)\Omega$ is dense in $\Hc$.
\item {\em Energy compactness}\label{it:compactness} For some nonempty $O\in\OO(\Mb_0)$ and $\beta>0$, the set
\[
\cN = \{e^{-\beta H} W\Omega:~W\in\Mfr(O)~\text{s.t.}~W^*W=\II\}
\]
is a relatively compact subset of $\Hc$ (with necessarily dense linear span, by the Reeh--Schlieder condition and because $e^{-\beta H}=(e^{-\beta H})^*$ has trivial kernel), where $H=P_0$ is the Hamiltonian with 
respect to some system of inertial coordinates. 
\setcounter{assumptions}{\value{enumi}}
\end{enumerate}
The last of the standard assumptions, {\em twisted duality}~\cite{DHRi},
requires some notation. Let $\Gamma$ be the unitary implementing $\gamma_{\Mb_0}$,
so $\Gamma^2 = \II$, $\Gamma\Omega= \Omega$ and $\Gamma\pi(A)\Gamma^{-1}=  \pi(\gamma_{\Mb_0}(A))$ for all $A\in\Ff(\Mb_0)$, and define a unitary
\[
Z = \frac{1-i}{2} + \frac{1+i}{2}\Gamma.
\]
For any subset of bounded operators $\Mfr\subset\Bfr(\Hc)$, let $\Mfr^t=Z\Mfr Z^{-1}$ and write $\Mfr^t(O)=\Mfr(O)^t$. If $\gamma_{\Mb_0}(B)=(-1)^\sigma B$ ($\sigma=\{0,1\}$) then
\[
[\pi(A),Z\pi(B)Z^{-1}] = \pi(\bbLbrack A, B \bbRbrack) (-i\Gamma)^{\sigma} 
\]
for any $A\in\Ff(\Mb_0)$, from which the expression for general $B$ may be obtained by linearity. Twisted locality of $\langle\Ff,\gamma,\Sf\rangle$ implies that $\Ffr(O)$ and $\Ffr(\tilde{O})^t$ commute for any spacelike separated $O,\tilde{O}\in\OO(\Mb_0)$, and hence 
$\Mfr(O) \subset \Mfr^t(\tilde{O})'$. Twisted duality is the following more specific statement.
\begin{enumerate}\setcounter{enumi}{\value{assumptions}}
%\addtolength{\itemsep}{-0.5\baselineskip}
\item {\em Twisted duality}\footnote{In the literature, proofs of (twisted) Haag duality for particular models in Minkowski space typically apply to Cauchy developments of sufficiently well-behaved subsets of
constant-time hypersurfaces, e.g., double-cones. While one might expect the same to be true for general diamonds, we err on the side of
caution by allowing the possibility that twisted duality might only
be known for a special class of diamonds, which is nonetheless
large enough to generate any local algebra.} \label{it:twduality} 
There is a subset $\KK_\diamond\subset\OO(\Mb_0)$ such that 
every $\tilde{O}\subset\KK_\diamond$ is a diamond (i.e., a multi-diamond
with one connected component) and  (a) for all nonempty
relatively compact and connected $O\in\OO(\Mb_0)$,
\[
\Ffr(O) = \bigvee_{\tilde{O}\subset O} \Ffr(\tilde{O}),
\]
where the $C^*$-algebraic join is taken over $\tilde{O}\in\KK_\diamond$
contained in $O$,
and (b) for every $O\in\KK_\diamond$, we have
\[
\Mfr(O) = \bigcap_{\tilde{O}\subset O'} \Mfr^t(\tilde{O})',
\]
where the intersection runs over all $\tilde{O}\in\KK_\diamond$ contained in the causal complement $O'=\Mb\setminus \cl J_\Mb(O)$ of $O$. 
\setcounter{assumptions}{\value{enumi}}
\end{enumerate}

As already mentioned, assumptions~(1)--(\ref{it:twduality}) are standard conditions in Minkowski space AQFT and are satisfied, for example by
models of free scalar fields (cf. Sec.~\ref{sec:Weyl}).  Indeed, the version of twisted duality stated here is slightly weaker than that in~\cite{DHRi} and the crucial assumption on
energy compactness is much weaker than the nuclearity condition of~\cite{BucWic:1986}, which would require $\cN$ to be a nuclear subset of $\Hc$, with nuclearity index obeying prescribed bounds in terms of the size of $O$ and the inverse temperature $\beta$. Here, our condition is not required to hold for all $\beta$ or $O$ and therefore can incorporate some theories with a maximum temperature. The exponential energy damping is not critical. One could work just as well
with a spectral projection of $H$, as in the Haag--Swieca criterion~\cite{HaaSwi:1965}; again, our condition would be
weaker, because Haag and Swieca also impose conditions on the `approximate dimension' of the sets they consider. The utility of energy compactness
arises from the next result, which is proved at the end
of this section. 
\begin{Lem}\label{lem:iso_to_unitary}
Suppose $\cN$ is a relatively compact subset of a separable Hilbert space $\Hc$, with dense linear span. (a)  If $T\in\Bfr(\Hc)$ is
an isometry with $T\cN\subset \cN$, then $T$ is unitary, and $T\cl \cN  =\cl\cN$.
(b) Let $G$ be the group of unitary operators $U\in\Bfr(\Hc)$ obeying $U\cl\cN = \cl\cN$. Then $G$ is compact with respect to
the strong operator topology. 
\end{Lem}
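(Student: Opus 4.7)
The plan is to prove (a) by a recurrence argument using iterates of $T$, and then bootstrap (b) from (a) via a diagonalisation in SOT.

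For part (a), I would first note that since $\cN$ is relatively compact, $\cl \cN$ is compact, so $T\cl\cN$ is also compact and hence closed. As $T\cN\subset\cN$ and $T\cl\cN$ is closed, we have $\cl T\cN \subset T\cl\cN \subset \cl \cN$; what we need to promote is the reverse inclusion $\cl\cN\subset T\cl\cN$. Fix $x\in\cN$; iterating the hypothesis gives an orbit $\{T^nx:n\ge 0\}\subset\cN$, which is therefore relatively compact. Extract a convergent subsequence $T^{n_k}x$; because $T$ is an isometry,
\[
\|T^{n_k-n_j}x - x\| = \|T^{n_k}x - T^{n_j}x\| \quad (n_k>n_j),
\]
so after passing to a further subsequence we find $m_i\to\infty$ with $T^{m_i}x\to x$. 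Writing $T^{m_i}x = T(T^{m_i-1}x)$ with $T^{m_i-1}x\in\cN$ exhibits $x\in\cl T\cN\subset T\cl\cN$. Hence $\cN\subset T\cl\cN$ and, by closedness, $\cl\cN\subset T\cl\cN$, yielding $T\cl\cN=\cl\cN$. Consequently $\Ran T$ contains $\cN$, whose linear span is dense in $\Hc$; since the range of an isometry is closed, $T$ is surjective and thus unitary.

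For part (b), since $\Hc$ is separable the unit ball of $\Bfr(\Hc)$ is SOT-metrizable, so it suffices to check that $G$ is sequentially SOT-compact. Fix a sequence $(U_n)$ in $G$. The compact metric space $\cl\cN$ admits a countable dense subset $\{x_k\}$; for each $k$, the sequence $(U_nx_k)$ lies in $\cl\cN$ and has a convergent subsequence. A standard diagonal argument produces a subsequence $(n_j)$ so that $U_{n_j}x_k$ converges to some $y_k$ for every $k$. Because each $U_{n_j}$ preserves inner products, $\langle y_k,y_l\rangle = \langle x_k,x_l\rangle$, so the prescription $Vx_k:=y_k$ extends by linearity and continuity to an isometry on $\cl\Span\{x_k\}$. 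Since $\{x_k\}$ is dense in $\cl\cN$ and $\Span\cN$ is dense in $\Hc$, the space $\Span\{x_k\}$ is dense in $\Hc$; hence $V$ extends uniquely to an isometry on all of $\Hc$. A routine $\epsilon/3$ estimate, using the uniform bound $\|U_{n_j}\|=1$, then upgrades pointwise convergence from $\{x_k\}$ to strong convergence $U_{n_j}\to V$ on $\Hc$.

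It remains to check $V\in G$. For any $x\in\cl\cN$, the sequence $U_{n_j}x$ lies in $\cl\cN$ (since $U_{n_j}\cl\cN=\cl\cN$) and converges to $Vx$; closedness of $\cl\cN$ forces $Vx\in\cl\cN$, so $V\cl\cN\subset\cl\cN$. Applying part (a) with $T=V$ shows that $V$ is unitary and $V\cl\cN=\cl\cN$, i.e., $V\in G$. This establishes sequential SOT-compactness.

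I expect the main obstacle to be the passage in part (a) from $T\cN\subset\cN$ to the surjective equality $T\cl\cN=\cl\cN$ — the temptation is to argue by volume or dimension, but in infinite dimensions one must exploit the isometry property together with the orbit-returning argument above. Once this is secured, part (b) is essentially a diagonal/Arzelà-type construction, with (a) providing exactly the closure property needed for the limit to remain in $G$.
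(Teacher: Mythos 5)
Your proof is correct and follows essentially the same route as the paper's: the recurrence argument $T^{m_i}x\to x$ exploiting compactness plus the isometry property for part (a), and a diagonal extraction over a countable dense/spanning set, with part (a) supplying closure of $G$, for part (b). The only (harmless) variation is at the end of (a), where you deduce surjectivity from the fact that the closed range of $T$ contains the dense set $\Span\cN$, whereas the paper computes $\|T^*\psi\|=\|\psi\|$ and uses the projection identity for $\II-TT^*$; both are valid.
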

This result permits us to give an apparently new proof of the compactness of the maximal global gauge group $G_{\rm max}$, defined by Doplicher, Haag and Roberts~\cite{DHRi}, consisting of all unitary operators $U$ on $\Hc$ that commute with the representation of the Poincar\'e group, preserve the vacuum vector and act strictly locally on the net of local von Neumann algebras, in 
the sense that  $U\Mfr(O) U^{-1} = \Mfr(O)$
for all relatively compact connected nonempty $O\in\OO(\Mb_0)$. Compactness of $G_{\rm max}$ has been proved under various assumptions in the past,
e.g., the existence of an asymptotically complete scattering theory with finite particle multiplets~\cite{DHRi}, or under the assumption of nuclearity, which implies the split property~\cite{BucWic:1986} and
hence compactness of $G_{\rm max}$ by (the proof of)~\cite[Theorem~10.4]{DopLon:1984}. However,
we wish to point out that compactness of $G_{\rm max}$ may be established directly and under the weaker energy compactness condition assumed here,
as a consequence of Lemma~\ref{lem:iso_to_unitary}(b). (Twisted duality
is not needed for this argument.)
\begin{Prop} \label{prop:Gmax} Under assumptions (1)--(\ref{it:compactness}),  
the group $G_{\rm max}$ is compact in the strong operator topology,
and $\Aut(\langle \Ff,\gamma,\Sf\rangle)$ is compact in the 
natural weak topology.
\end{Prop}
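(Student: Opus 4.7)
The plan is to exhibit both $G_{\rm max}$ and the image of $\Aut(\langle\Ff,\gamma,\Sf\rangle)$ as closed subgroups of the compact group $G_*$ of unitaries preserving $\cl\cN$ that is guaranteed by Lemma~\ref{lem:iso_to_unitary}(b). The inclusion $G_{\rm max}\subset G_*$ is essentially a bookkeeping calculation: for $U\in G_{\rm max}$, commutation with the Poincar\'e implementers gives $Ue^{-\beta H}=e^{-\beta H}U$, and combined with $U\Omega=\Omega$ and $U\Mfr(O)U^{-1}=\Mfr(O)$ this yields $Ue^{-\beta H}W\Omega=e^{-\beta H}(UWU^{-1})\Omega\in\cN$ for every isometry $W\in\Mfr(O)$, so $U\cN\subset\cN$; applying the same reasoning to $U^{-1}$ gives $U\cl\cN=\cl\cN$.

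To see that $G_{\rm max}$ is closed in $G_*$, I would first note that vacuum invariance and commutation with each fixed bounded operator $U(\Lambda)$ are manifestly preserved under strong limits. For the algebraic condition I would use that on the unitary group the strong operator topology is a group topology, so if $U_\alpha\to U$ strongly in $G_*$ then $U_\alpha^{-1}\to U^{-1}$ strongly as well; for any $A\in\Mfr(O)$, joint SOT-continuity of multiplication on norm-bounded sets then gives $U_\alpha A U_\alpha^{-1}\to U A U^{-1}$ strongly, and the limit lies in $\Mfr(O)$ by SOT-closedness of the von Neumann algebra. The symmetric argument yields $U\Mfr(O)U^{-1}=\Mfr(O)$, so $G_{\rm max}$ is a closed---hence compact---subgroup of $G_*$. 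Twisted duality is not needed anywhere in this portion of the argument.

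For the $\Aut$ claim, Lemma~\ref{lem:invariant_states} together with the uniqueness assumption~\eqref{it:unique_inv} shows that every $\eta\in\Aut(\langle\Ff,\gamma,\Sf\rangle)$ preserves $\omega_0$; Proposition~\ref{prop:urep} then supplies a faithful strongly continuous representation $\eta\mapsto U_\eta$, whose image actually lies in $G_{\rm max}$. Strict locality at the $C^*$-level, together with the double commutant theorem applied to $U_\eta\Ffr(O)U_\eta^{-1}=\Ffr(O)$, gives preservation of each $\Mfr(O)$; commutation with the Poincar\'e representation follows from irreducibility of $\pi$ and the final clause of Proposition~\ref{prop:urep}. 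By Theorem~\ref{thm:tops} the natural weak topology on $\Aut(\langle\Ff,\gamma,\Sf\rangle)$ coincides with the $\omega_0$-topology, which is by definition the subspace topology inherited from the strong topology on $G_{\rm max}$, so $\eta\mapsto U_\eta$ is a topological embedding.

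The main obstacle is then to verify that the image of this embedding is strongly closed inside $G_{\rm max}$. I would approach this by analysing a net $U_{\eta_\alpha}\to U$ in $G_{\rm max}$: since $\pi$ is faithful, any putative pre-image $\eta$ is forced to have $\pi(\eta_{\Mb_0}F)=U\pi(F)U^{-1}$, and Theorem~\ref{thm:reduce_to_Mink} together with additivity and the time-slice property then determines the components of $\eta$ in all other spacetimes. The delicate point --- which is exactly where the remaining Minkowski-space hypotheses have to do real work --- is that strong limits of elements of the $C^*$-algebra $\Ffr(O)$ generally land only in the weak closure $\Mfr(O)$, so one must argue that preservation of $\omega_0$, compatibility with the grading $\gamma$, twisted duality and the local quasi-equivalence of states keep the candidate endomorphism $\eta_{\Mb_0}$ inside $\Ff(\Mb_0)$ itself. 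Once that closedness is in hand, compactness of $\Aut(\langle\Ff,\gamma,\Sf\rangle)$ follows at once from that of $G_{\rm max}$.
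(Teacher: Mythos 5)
Your treatment of the first assertion is correct and is essentially the paper's argument: show $G_{\rm max}$ preserves $\cl\cN$, invoke both parts of Lemma~\ref{lem:iso_to_unitary}, and observe that $G_{\rm max}$ is closed under strong limits (the paper cites~\cite{DHRi} for this last point, where you spell out the joint SOT-continuity of multiplication on bounded sets; your remark that twisted duality plays no role here also agrees with the paper). The bookkeeping $Ue^{-\beta H}W\Omega = e^{-\beta H}(UWU^{-1})\Omega$ and the use of $U^{-1}$ to upgrade $U\cN\subset\cN$ to $U\cl\cN=\cl\cN$ are exactly what is needed.

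For the second assertion there is, however, a genuine gap: you never establish that the image of $\Aut(\langle\Ff,\gamma,\Sf\rangle)$ under $\eta\mapsto U_\eta$ is closed in $G_{\rm max}$, and you say so yourself. As it stands your argument only shows that $\Aut(\langle\Ff,\gamma,\Sf\rangle)$, in the natural weak topology, is homeomorphic to a (not necessarily closed) subgroup of a compact group, which does not yield compactness. The obstruction you identify is real --- a strong limit of $U_{\eta_\alpha}\pi(F)U_{\eta_\alpha}^{-1}=\pi((\eta_\alpha)_{\Mb_0}F)$ lands a priori only in $\pi(\Ff(\Mb_0))''$, so the limiting unitary need not implement an element of $\End(\Ff(\Mb_0))$, let alone a component of a natural transformation --- but your sketch of how to overcome it is not a proof, and it appeals to twisted duality, which is assumption~(\ref{it:twduality}) and hence \emph{not} available under the hypotheses (1)--(\ref{it:compactness}) of this proposition. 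You should be aware that the paper's own proof is no more detailed at exactly this point: it passes from compactness of $G_{\rm max}$ directly to compactness of the $\omega_0$-topology on $\Aut(\langle\Ff,\gamma,\Sf\rangle)$ (identified with the natural weak topology via Theorem~\ref{thm:tops}) without an explicit closedness argument. So you have correctly located the delicate step and correctly assembled the surrounding machinery (Lemma~\ref{lem:invariant_states}, Proposition~\ref{prop:urep}, Theorems~\ref{thm:reduce_to_Mink} and~\ref{thm:tops}), but the proposal does not close the argument for the second half of the statement.
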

\begin{proof} Any $U\in G_{\rm max}$ preserves the relatively compact set
$\cN = e^{-\beta H} \Mfr(O)\Omega$, because $U$ acts strictly locally,
commutes with the Hamiltonian and obeys $U\Omega=\Omega$. Combining
both parts of
Lemma~\ref{lem:iso_to_unitary},  $G_{\rm max}$ is therefore contained in a group of unitaries
that is compact in the strong operator topology. As $G_{\rm max}$ is
closed in this topology, because its defining relations are preserved under strong limits~\cite{DHRi}, it is compact. Hence, the 
$\omega_0$-topology on $\Aut(\langle \Ff,\gamma,\Sf\rangle)$ [defined
at the end of Sec.~\ref{sec:gauge_gp}] is
compact, and is equivalent to the natural weak topology by Theorem~\ref{thm:tops} of Appendix~\ref{appx:topology}.
\end{proof}

Returning to the main theme of excluding proper endomorphisms, 
our first task is to show that any endomorphism is $\langle \Ff,\gamma,\Sf\rangle$ 
is indistinguishable, in Minkowski space, from a gauge transformation in $G_{\rm max}$. In what follows, we abuse notation slightly by using the same symbol for both an endomorphism of $\langle \Ff,\gamma,\Sf\rangle$ and the underlying endomorphism of $\Ff$.
Our first result extends the unitary
representation of $\Aut(\langle \Ff,\gamma,\Sf\rangle)$ from Proposition~\ref{prop:urep}.
\begin{Thm} \label{thm:rho}
Under assumptions (1)--(\ref{it:twduality}),
there is a faithful homomorphism of monoids $\rho:\End(\langle \Ff,\gamma,\Sf\rangle)\to G_{\rm max}$, 
 obeying 
\begin{equation}\label{eq:rho_def}
\rho(\eta)\pi(A)\Omega = \pi(\eta_{\Mb_0}A)\Omega \qquad (\eta\in\End(\langle \Ff,\gamma,\Sf\rangle),~A\in\FF(\Mb_0));
\end{equation}
in particular, each $\rho(\eta)$ unitarily implements $\eta_{\Mb_0}$.
\end{Thm}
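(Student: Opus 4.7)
The plan is to exploit the unique Poincar\'e-invariant state $\omega_0$ to define $\rho(\eta)$ via GNS implementation, and then to combine Lemma~\ref{lem:iso_to_unitary}(a) with twisted duality to show $\rho(\eta)\in G_{\rm max}$.

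The pullback $\eta_{\Mb_0}^*\omega_0$ lies in $\Sf(\Mb_0)$ (as $\grCAS$-morphisms preserve state spaces) and is Poincar\'e-invariant by Lemma~\ref{lem:invariant_states}; assumption~\eqref{it:unique_inv} therefore forces $\eta_{\Mb_0}^*\omega_0=\omega_0$. This invariance makes the prescription $\rho(\eta)\pi(A)\Omega:=\pi(\eta_{\Mb_0}A)\Omega$ isometric on the dense subspace $\pi(\Ff(\Mb_0))\Omega$ (density from Reeh--Schlieder), extending uniquely to an isometry on $\Hc$. Reading off from this definition gives $\rho(\eta)\Omega=\Omega$, the intertwining identity $\rho(\eta)\pi(A)=\pi(\eta_{\Mb_0}A)\rho(\eta)$ as bounded operators, the monoid law $\rho(\eta\circ\eta')=\rho(\eta)\rho(\eta')$, and commutation of $\rho(\eta)$ with every Poincar\'e unitary $U(\Lambda)$ (naturality of $\eta$) and with $\Gamma$ (since $\grCAS$-morphisms are $\gamma$-equivariant), hence with $Z$ too.

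The main obstacle is to upgrade $\rho(\eta)$ from an isometry to a unitary. Lemma~\ref{lem:iso_to_unitary}(a) achieves this provided $\rho(\eta)$ preserves a relatively compact subset with dense linear span, but the set $\cN$ from assumption~\eqref{it:compactness} is not directly suitable: for isometric $W\in\Mfr(O)$ the conjugate $\rho(\eta)W\rho(\eta)^*$ is generally only a partial isometry. I would instead use $\cN':=e^{-\beta H}\Mfr(O)_1\Omega$, where $\Mfr(O)_1$ is the closed unit ball. Russo--Dye represents $\Mfr(O)_1$ as the closed convex hull of its unitaries, each of which satisfies the isometry condition defining $\cN$; thus $\cN'\subseteq\overline{\mathrm{conv}}(\cN)$, which is compact by Mazur's theorem, and its linear span $e^{-\beta H}\Mfr(O)\Omega$ is dense by Reeh--Schlieder together with the dense range of $e^{-\beta H}$. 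For invariance, the intertwining identity and local action~\eqref{eq:local_action} give $\rho(\eta)\Ffr(O)_1\Omega\subseteq\Ffr(O)_1\Omega$; meanwhile $\Mfr(O)_1\Omega$ is weakly compact, hence norm-closed, as the continuous image of the weak-$*$ compact $\Mfr(O)_1$ under $W\mapsto W\Omega$, and Kaplansky density identifies it with $\overline{\Ffr(O)_1\Omega}$. Continuity of $\rho(\eta)$ then yields $\rho(\eta)\Mfr(O)_1\Omega\subseteq\Mfr(O)_1\Omega$, and commutation with $e^{-\beta H}$ gives $\rho(\eta)\cN'\subseteq\cN'$, so Lemma~\ref{lem:iso_to_unitary}(a) makes $\rho(\eta)$ unitary.

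With $\rho(\eta)$ unitary, the intertwining identity gives $\rho(\eta)\Ffr(O)\rho(\eta)^{-1}\subseteq\Ffr(O)$ and hence $\rho(\eta)\Mfr(O)\rho(\eta)^{-1}\subseteq\Mfr(O)$ by weak-operator continuity of inner automorphisms. The reverse inclusion is supplied by twisted duality: applying the same argument to $\Mfr^t(\tilde O)$ (using $[\rho(\eta),Z]=0$) yields $\rho(\eta)\Mfr^t(\tilde O)\rho(\eta)^{-1}\subseteq\Mfr^t(\tilde O)$, and taking commutants under unitary conjugation gives $\rho(\eta)\Mfr^t(\tilde O)'\rho(\eta)^{-1}\supseteq\Mfr^t(\tilde O)'$. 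Intersecting over $\tilde O\subset O'$ via~\eqref{it:twduality}(b) then produces $\rho(\eta)\Mfr(O)\rho(\eta)^{-1}\supseteq\Mfr(O)$ for $O\in\KK_\diamond$; \eqref{it:twduality}(a) extends the resulting equality to all relatively compact connected nonempty $O$, placing $\rho(\eta)\in G_{\rm max}$. Finally, for faithfulness: if $\rho(\eta)=I$ then $\pi(\eta_{\Mb_0}A-A)\Omega=0$ for all $A\in\Ff(\Mb_0)$; Reeh--Schlieder together with twisted locality makes $\Omega$ separating for every local $\Ffr(O)$, so $\eta_{\Mb_0}$ fixes each $\Ff^\kin(\Mb_0;O)$, and $\Uf$-additivity combined with Theorem~\ref{thm:reduce_to_Mink} promotes this to $\eta=\id_\Ff$.
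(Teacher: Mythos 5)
Your proposal is correct and its overall architecture coincides with the paper's: define $\rho(\eta)$ on $\pi(\Ff(\Mb_0))\Omega$ using $\eta_{\Mb_0}^*\omega_0=\omega_0$ (forced by uniqueness of the invariant state via Lemma~\ref{lem:invariant_states}), upgrade the isometry to a unitary with Lemma~\ref{lem:iso_to_unitary}(a), obtain $\rho(\eta)\Mfr(O)\rho(\eta)^{-1}\subset\Mfr(O)$ from locality, reverse the inclusion with twisted duality, and finish faithfulness with Theorem~\ref{thm:reduce_to_Mink}. The one step you handle genuinely differently is the construction of the invariant compact set. You enlarge to the full unit ball $e^{-\beta H}\Mfr(O)_1\Omega$, invoking Russo--Dye, Mazur's theorem and Kaplansky density; this is correct but heavier than necessary. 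The paper instead \emph{shrinks} to $\cN=e^{-\beta H}\Ffr_{(1)}(O)\Omega$, the image of the isometries of the $C^*$-algebra $\Ffr(O)$: this is contained in the assumed relatively compact set $e^{-\beta H}\Mfr_{(1)}(O)\Omega$, still has dense linear span (unitaries already span $\Ffr(O)$, and $e^{-\beta H}$ is injective), and is manifestly preserved because the relevant question is not whether $T W T^*$ is an isometry but whether $TW\Omega=\pi(\eta_{\Mb_0}A)\Omega$ lies in the set---which it does, since $\eta_{\Mb_0}$ is a unital $*$-monomorphism mapping $\Ff^\kin(\Mb_0;O)$ into itself and hence sends isometries to isometries at the algebra level. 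So the obstacle you flag is real at the von Neumann level but is circumvented more cheaply by working with the $C^*$-algebraic isometries. Your remaining variations---taking commutants and intersecting over $\tilde O\subset O'$ rather than the paper's commutator computation, and the separating-vector route to faithfulness---are equivalent reformulations of the same steps.
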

\begin{proof} Let $\eta\in\End(\langle \Ff,\gamma,\Sf\rangle)$.
By definition of the category $\LCT_\grCAS$, we have $\eta\circ\gamma=\gamma\circ\eta$ and $\eta^*\Sf\subset\Sf$.
As $\omega_0$ is the unique Poincar\'e invariant state, Lemma~\ref{lem:invariant_states} entails that
$\eta_{\Mb_0}^*\omega_0=\omega_0$. Then the calculation, for arbitrary $A\in\Ff(\Mb_0)$, 
\[
\|\pi(\eta_{\Mb_0} A)\Omega\|^2 = \omega_0((\eta_{\Mb_0} A)^*\eta_{\Mb_0}(A))
= \omega_0(A^*A) = \|\pi(A)\Omega\| ^2
\]
shows that the equation $T\pi(A)\Omega = \pi(\eta_{\Mb_0} A)\Omega$ (for all $A\in\Ff(\Mb_0)$)
defines $T$ unambiguously on a dense domain in the GNS Hilbert space, and extends
by continuity to define an isometry of $\Hc$ into itself.

Next, let $\tau_t:(x^0,\xb)\mapsto (x^0+t,\xb)$ be the time translation automorphism of $\Mb_0$ in some system of standard inertial coordinates,  unitarily implemented so that
$
e^{iHt} \pi(A)\Omega = \pi(\Ff(\tau_t)A)\Omega$,
where $H$ is the Hamiltonian in these coordinates. Because $\eta$ is natural, we must have
$\eta_{\Mb_0}\circ \Ff(\tau_t) = \Ff(\tau_t)\circ\eta_{\Mb_0}$, which gives
\[
Te^{iHt}\pi(A)\Omega = \pi(\eta_{\Mb_0}\circ \Ff(\tau_t) A)\Omega = 
 \pi(\Ff(\tau_t) \circ \eta_{\Mb_0}A)\Omega = e^{iHt} T\pi(A)\Omega 
\]
i.e., the isometry $T$ commutes with $e^{iHt}$ on a dense domain and hence all of $\Hc$.
We deduce that $T$ also commutes with $e^{-\beta H}$ for any $\beta \ge 0$.\footnote{
Take any Schwartz test function $f$ with $\hat{f}(\lambda)=e^{-\beta\lambda}$ for $\lambda\ge 0$;
then $\int_\RR f(t)\ip{\varphi}{e^{-iHt}\psi}dt =\ip{\varphi}{e^{-\beta H}\psi}$ for any $\varphi,\psi\in\Hc$, 
from which  $Te^{-\beta H}=e^{-\beta H}T$ follows (cf.\ the proof of Theorem~VIII.13 in~\cite{ReedSimon:vol1}).} 

For any unital $C^*$-algebra $\Ac$, we write $\Ac_{(1)}=\{A\in\Ac:~A^*A=\II\}$. 
According to the energy compactness assumption, we may choose nonempty $O\in\OO(\Mb_0)$ and $\beta>0$ so that $\cN=e^{-\beta H}\Ffr_{(1)}(O)\Omega$ 
is a subset of the relatively compact set $e^{-\beta H}\Mfr_{(1)}(O)\Omega$.
Hence $\cN$ is relatively compact, with dense linear span. 
Moreover,
\begin{align*}
T\cN &= e^{-\beta H} T\pi(\Ff^\kin_{(1)}(\Mb_0;O))\Omega
=  e^{-\beta H} \pi(\eta_{\Mb_0}\Ff^\kin_{(1)}(\Mb_0;O))\Omega \\
&\subset e^{-\beta H} \pi(\Ff^\kin_{(1)}(\Mb_0;O))\Omega = \cN,
\end{align*}
and  Lemma~\ref{lem:iso_to_unitary}(a) entails that $T$ is unitary. 
Consequently, considering the 
action on the dense domain $T\pi(\Ff(\Mb_0))\Omega$, we see that $T$ implements $\eta_{\Mb_0}$, i.e., $T\pi(A)T^{-1}=\pi(\eta_{\Mb_0}A)$. Therefore
\[
T\Ffr(O)T^{-1} = \pi(\eta_{\Mb_0}(\Ff^\kin(\Mb_0;O))) \subset \pi(\Ff^\kin(\Mb_0;O)) =\Ffr(O)
\]
for all nonempty $O\in\OO(\Mb_0)$. Were $T^{-1}$ known to implement
an endomorphism of $\langle\Ff,\gamma,\Sf\rangle$ (as would be the case if $\eta$ was an automorphism), the above inclusion would become an equality.\footnote{There are cases where proper endomorphisms
of algebras are unitarily implemented, e.g., shrinking scale transformation on a suitable local algebra
in a conformally covariant theory.} Lacking such knowledge, however, we
proceed using twisted duality.

Passing to the local von Neumann algebras $\Mfr(O)$, the argument so far has established that $T\Mfr(O)T^{-1}\subset \Mfr(O)$, 
for all relatively compact connected nonempty $O\in\OO(\Mb_0)$. As $\eta$ commutes with $\gamma$, we have 
$T\Gamma=\Gamma T$, and therefore $TZ=ZT$. In particular, $T\Mfr^t(O)T^{-1} = (T\Mfr(O)T^{-1})^t\subset \Mfr^t(O)$. Thus, if $\tilde{O}$ is spacelike to $O$,
\[
[T^{-1}\Mfr(O)T,\Mfr^t(\tilde{O})] = T^{-1}[\Mfr(O),T\Mfr^t(\tilde{O})T^{-1}] T
\subset T^{-1}[\Mfr(O),\Mfr^t(\tilde{O})] T = \{0\}
\]
and by twisted duality, we see that $T^{-1}\Mfr(O)T\subset \Mfr(O)$
for all $O\in\KK_\diamond$.  Putting this together
with our earlier result gives $T\Mfr(O)T^{-1} = \Mfr(O)$ for all $O\in\KK_\diamond$. Now if $O\in\OO(\Mb_0)$ is nonempty, connected
and relatively compact, the additivity assumption~(\ref{it:twduality}a)
yields
\[
\Ffr(O) =\bigvee_{\tilde{O}\subset O}\Ffr(\tilde{O})
\qquad\text{and hence}\qquad
\Mfr(O) =\bigvee_{\tilde{O}\subset O}\Mfr(\tilde{O}),
\]
where the joins run over $\tilde{O}\in\KK_\diamond$; the first is
taken in the sense of $C^*$-algebras, while the second is taken in the
von Neumann sense and follows on taking weak closures. It follows immediately that $T\Mfr(O)T^{-1} = \Mfr(O)$ for all such $O$. 

Summarising: $T$ is a unitary operator, commuting with the unitary representation of the proper, orthochronous Poincar\'e group, acting strictly locally on the net and preserving $\Omega$. Setting
$\rho(\eta)=T$, we have a map $\rho:\End(\langle \Ff,\gamma,\Sf\rangle)\to G_{\rm max}$ obeying Eq.~\eqref{eq:rho_def}
and hence $\rho(\eta)\pi(A)\rho(\eta)^{-1} = \pi(\eta_{\Mb_0}A)$. 
It is evident from these equations that $\rho$ is a homomorphism and obeys $\rho(\id_{\langle \Ff,\gamma,\Sf\rangle})=\II_\HH$. 
Further, as $\pi$ is faithful, $\rho(\eta)=\II$ if and only if $\eta_{\Mb_0}=\id_{\Ff(\Mb_0)}$. By Theorem~\ref{thm:reduce_to_Mink}
we have $\eta=\id_{\langle \Ff,\gamma,\Sf\rangle}$, which 
completes the proof that $\rho$ is a faithful monoid homomorphism. \end{proof}

To say more, we require an additional assumption. 
\begin{enumerate}\setcounter{enumi}{\value{assumptions}}
%\addtolength{\itemsep}{-0.5\baselineskip}
\item {\em Implementation of $\rce$}\label{it:rce} The relative Cauchy evolution can be unitarily implemented in the
GNS representation of $\omega_0$, i.e., to each $\hb\in H(\Mb_0)$,
there is a unitary $V(\hb)$ on $\Hc$, such that
\[
V(\hb)\pi(A)V(\hb)^{-1} = \pi(\rce_\Mb[\hb] A) \qquad (A\in\Ff(\Mb)).
\]
We assume in addition that $\ip{\Omega}{V(\hb)\Omega}\neq 0$ for
all $\hb\in H(\Mb_0)$. 
\setcounter{assumptions}{\value{enumi}}
\end{enumerate} 
Note that $V(\hb)$ cannot leave the vacuum invariant (unless the relative Cauchy
evolution is trivial) by Proposition~\ref{prop:localV} below. 
The assumption that $\ip{\Omega}{V(\hb)\Omega}\neq 0$ 
is motivated by the idea that $V(\hb)\Omega$ is a `squeezed' vacuum.
This can be seen explicitly for free Bose theories, where the relative
Cauchy evolution of the QFT is a Bogoliubov transformation induced by  
the relative Cauchy evolution in the underlying classical theory.
(One might
suspect that it actually follows from the other axioms, but we do not have a proof of this at present.) In fact, the assumption should hold generally
at least for $\hb$ in a neighbourhood of the zero test tensor, if the
relative Cauchy evolution is assumed to be continuous at $\hb=\Ob$,
which in turn is a precondition for the existence of a stress-energy
tensor as a field in the GNS representation of $\omega_0$. The utility of this
condition is made clear in the following result. 
\begin{Prop} \label{prop:UVprime}
Under assumptions (1)-(\ref{it:rce}),
the homomorphism $\rho$ of Theorem~\ref{thm:rho} obeys
\[
\Im \rho\subset G_\rce := G_{\rm max}\cap\{V(\hb):\hb\in H(\Mb_0)\}'.
\]
The group $G_\rce$ is compact in the strong operator topology of $\HH$.  
\end{Prop}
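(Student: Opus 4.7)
The plan is to establish the inclusion and the compactness separately, using Propositions~\ref{prop:basic}(b) and~\ref{prop:Gmax} together with the implementation hypothesis~(\ref{it:rce}). For the inclusion, I would start from the algebraic identity $\eta_{\Mb_0}\circ \rce_{\Mb_0}[\hb] = \rce_{\Mb_0}[\hb]\circ \eta_{\Mb_0}$ supplied by Proposition~\ref{prop:basic}(b), and transport it to the GNS representation using $\rho(\eta)\pi(A)\rho(\eta)^{-1}=\pi(\eta_{\Mb_0}A)$ and $V(\hb)\pi(A)V(\hb)^{-1}=\pi(\rce_{\Mb_0}[\hb]A)$. A short manipulation then shows that the unitary $W:=V(\hb)^{-1}\rho(\eta)^{-1}V(\hb)\rho(\eta)$ commutes with every $\pi(A)$, so irreducibility of $\pi$ (assumption~3) forces $W=c\,\II$ for some $c$ of unit modulus. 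Equivalently, $V(\hb)\rho(\eta)=c\,\rho(\eta)V(\hb)$.

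The key step is then to pin down $c=1$. For this I would apply both sides to the vacuum $\Omega$ and use $\rho(\eta)\Omega=\Omega$ from Theorem~\ref{thm:rho} to obtain $V(\hb)\Omega=c\,\rho(\eta)V(\hb)\Omega$. Taking the inner product with $\Omega$ and shifting $\rho(\eta)$ to the left factor via $\rho(\eta)^{-1}\Omega=\Omega$ yields $\ip{\Omega}{V(\hb)\Omega}=c\,\ip{\Omega}{V(\hb)\Omega}$; the non-vanishing part of assumption~(\ref{it:rce}) then forces $c=1$. Hence $\rho(\eta)\in\{V(\hb):\hb\in H(\Mb_0)\}'$, and combined with $\rho(\eta)\in G_{\rm max}$ from Theorem~\ref{thm:rho} this gives $\Im\rho\subset G_\rce$.

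For compactness, I would observe that $\{V(\hb):\hb\in H(\Mb_0)\}'$ is weak-operator closed, hence strong-operator closed, being an intersection of commutants of single unitaries. Therefore $G_\rce=G_{\rm max}\cap\{V(\hb):\hb\in H(\Mb_0)\}'$ is a strong-operator closed subgroup of $G_{\rm max}$, which is itself strong-operator compact by Proposition~\ref{prop:Gmax}, and so $G_\rce$ is compact. The main obstacle in the argument is eliminating the $U(1)$-phase ambiguity left over from the irreducibility step; the precise purpose of the nondegeneracy condition $\ip{\Omega}{V(\hb)\Omega}\neq 0$ built into assumption~(\ref{it:rce}) is to rule this out, and without it one would only obtain a projective intertwining of $\rho(\eta)$ with $V(\hb)$, insufficient to place $\rho(\eta)$ in the literal commutant defining $G_\rce$.
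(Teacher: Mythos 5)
Your proposal is correct and follows essentially the same route as the paper: the intertwining identity from Proposition~\ref{prop:basic}(b) combined with irreducibility yields $\rho(\eta)V(\hb)=\alpha V(\hb)\rho(\eta)$ up to a phase, which is eliminated using $\rho(\eta)\Omega=\Omega=\rho(\eta)^*\Omega$ and $\ip{\Omega}{V(\hb)\Omega}\neq 0$, and compactness follows from closedness of $G_\rce$ inside the compact group $G_{\rm max}$ of Proposition~\ref{prop:Gmax}. The packaging via the commutator unitary $W$ and the observation that commutants are weak-operator closed are only cosmetic variations on the paper's argument.
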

\begin{proof} Let $\eta\in\End(\langle \Ff,\gamma,\Sf\rangle)$. 
For any $A\in\Ff(\Mb_0)$ and $\hb\in H(\Mb_0)$, the intertwining property~\eqref{eq:intertwine} entails
\begin{align*}
\rho(\eta)V(\hb)\pi(A)V(\hb)^{-1}\rho(\eta)^{-1} &= \pi(\eta_{\Mb_0}\circ\rce_{\Mb_0}[\hb] A) = 
\pi(\rce_{\Mb_0}[\hb]\circ \eta_{\Mb_0} A) \\ &= V(\hb)\rho(\eta)\pi(A) \rho(\eta)^{-1}V(\hb)^{-1}.
\end{align*}
As $\pi$ is irreducible, this implies that $\rho(\eta)V(\hb) = \alpha_\eta(\hb)V(\hb)\rho(\eta)$, for
some constant $\alpha_\eta(\hb)$, necessarily of unit modulus. But $\rho(\eta)\Omega=\Omega=\rho(\eta)^*\Omega$ 
and $\ip{\Omega}{V(\hb)\Omega}\neq 0$, so
$\alpha_\eta(\hb)=1$. Thus $\Im\rho\subset G_{\rce}$; as the
defining relations of this group are preserved under strong limits,
it is a closed subset of $G_{\rm max}$ and hence compact, by Proposition~\ref{prop:Gmax}.
 \end{proof} 

As a minor digression, we also note the following application of  Proposition~\ref{prop:UVprime}
(from which the affiliation of the stress-energy tensor to the local net would follow). 
\begin{Prop} \label{prop:localV}
 Subject to assumptions (1)-(\ref{it:rce}) above,
if $\hb\in H(\Mb_0;O)$ for some $O\in\KK_\diamond$, then $V(\hb)\in\Mfr(O)$.  
Moreover, $V(\hb)\Omega\in\CC\Omega$ only if $\rce_{\Mb_0}[\hb]$ is the identity automorphism.
\end{Prop}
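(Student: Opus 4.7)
\medskip

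\noindent\textbf{Proof plan for Proposition \ref{prop:localV}.} My strategy is to reduce the first claim to an application of twisted duality, and the second to a Reeh--Schlieder separating argument inside the local algebra produced by the first claim.

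For the inclusion $V(\hb)\in\Mfr(O)$, I would begin by checking that $V(\hb)$ commutes with the grading operator $\Gamma$ (and hence with $Z$). Since $\rce_{\Mb_0}[\hb]$ commutes with $\gamma_{\Mb_0}$, the operator $V(\hb)\Gamma V(\hb)^{-1}$ implements $\gamma_{\Mb_0}$, so by irreducibility of $\pi$ it equals $c\Gamma$ for some phase $c$ with $c^{2}=1$. Applying this identity to $\Omega$ and pairing with $\Omega$ gives $\ip{\Omega}{V(\hb)\Omega}=c\ip{\Omega}{V(\hb)\Omega}$, whence assumption~(\ref{it:rce}) forces $c=1$ and thus $[V(\hb),Z]=0$. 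Next I would invoke the standard fact (built into the construction of the relative Cauchy evolution in BFV) that $\rce_{\Mb_0}[\hb]$ restricts to the identity on $\Ff^\kin(\Mb_0;\tilde O)$ whenever $\tilde O$ is causally disjoint from $\supp\hb\subset O$. Consequently $V(\hb)\in\Ffr(\tilde O)'\subset\Mfr(\tilde O)'$ for every such $\tilde O$, and the commutation with $Z$ upgrades this to $V(\hb)\in\Mfr^{t}(\tilde O)'$. Intersecting over $\tilde O\in\KK_\diamond$ with $\tilde O\subset O'$ and applying twisted duality~(\ref{it:twduality}b) then gives $V(\hb)\in\Mfr(O)$, as required.

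For the second claim, suppose $V(\hb)\Omega=c\Omega$ for some scalar $c$; necessarily $|c|=1$ since $V(\hb)$ is unitary. Then $V(\hb)-c\II$ lies in $\Mfr(O)$ by the first part and annihilates $\Omega$. Since $O\in\KK_\diamond$ is a diamond in Minkowski space, its causal complement is nonempty, so I can pick a nonempty $\tilde O\in\KK_\diamond$ spacelike to $O$. Reeh--Schlieder (assumption~\ref{it:ReehSchlieder}) makes $\Omega$ cyclic for $\Mfr(\tilde O)$, hence also for the unitarily transformed algebra $\Mfr^{t}(\tilde O)=Z\Mfr(\tilde O)Z^{-1}$ (using $Z\Omega=\Omega$), and twisted locality yields $\Mfr(O)\subset\Mfr^{t}(\tilde O)'$, so $\Omega$ is separating for $\Mfr(O)$. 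Therefore $V(\hb)-c\II=0$, and conjugation by a scalar is the identity automorphism, so $\pi(\rce_{\Mb_0}[\hb]A)=\pi(A)$ for all $A\in\Ff(\Mb_0)$; faithfulness of $\pi$ (assumption~3) gives $\rce_{\Mb_0}[\hb]=\id_{\Ff(\Mb_0)}$.

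I expect the main subtlety to lie in the first claim, specifically in pinning down that $V(\hb)$ and $\Gamma$ commute on the nose rather than only up to sign: without the hypothesis $\ip{\Omega}{V(\hb)\Omega}\neq 0$ from assumption~(\ref{it:rce}) one would only conclude $V(\hb)\Gamma=\pm\Gamma V(\hb)$, and the minus sign would obstruct the passage from $V(\hb)\in\Mfr(\tilde O)'$ to $V(\hb)\in\Mfr^{t}(\tilde O)'$ needed to invoke twisted duality. Everything else reduces to standard manipulations: the trivial action of $\rce_\Mb[\hb]$ on observables localized away from $\supp\hb$, the Reeh--Schlieder separating property for local algebras with nonempty causal complement, and the faithfulness of the GNS representation.
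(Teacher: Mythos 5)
Your proposal is correct and follows essentially the same route as the paper: trivial action of $\rce_{\Mb_0}[\hb]$ outside $\supp\hb$ gives $V(\hb)\in\Mfr(\tilde O)'$, commutation with $Z$ upgrades this to $\Mfr^t(\tilde O)'$, twisted duality yields $V(\hb)\in\Mfr(O)$, and the separating property of $\Omega$ (Reeh--Schlieder plus twisted locality) together with faithfulness of $\pi$ gives the second claim. The only cosmetic difference is that you re-derive $[V(\hb),\Gamma]=0$ by hand via irreducibility and $\ip{\Omega}{V(\hb)\Omega}\neq 0$, whereas the paper obtains it by applying Proposition~\ref{prop:UVprime} to $\gamma$ — the underlying mechanism is identical.
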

\begin{proof} If $\tilde{O}\in\OO(\Mb_0)$ is any
relatively compact and connected subset of the causal complement $O'=
\Mb_0\setminus \cl J_{\Mb_0}(O)$ of $O$, then $\rce_\Mb[\hb]A=A$ for all $A\in \Ff^\kin(\Mb;\tilde{O})$
by Proposition~3.5 of~\cite{FewVer:dynloc_theory}, and hence $V(\hb)\in \Ffr(\tilde{O})'
=\Mfr(\tilde{O})'$. Applying Proposition~\ref{prop:UVprime} to $\gamma$, we see that 
$V(\hb)$ commutes with $\Gamma$ and hence $Z$, so we also have $V(\hb)\in \Mfr^t(\tilde{O})'$.
The result $V(\hb)\in\Mfr(O)$ for $O\in\KK_\diamond$ now follows from twisted duality and arbitrariness of $\tilde{O}$.
If $V(\hb)\Omega=\alpha\Omega$ ($\alpha\in\CC$), then $V(\hb)=\alpha\II$ as
$\Omega$ is a separating vector for $\Mfr(O)$ by twisted locality and the Reeh--Schlieder property. 
As $\pi$ is faithful, this implies $\rce_{\Mb_0}[\hb]$ is trivial.
\end{proof}

Subject to the assumptions made so far, we have shown that any endomorphism of $\langle \Ff,\gamma,\Sf\rangle$ acts as an {\em automorphism} of the Minkowski space net. Provided
that all such automorphisms are related to symmetries of the 
full theory, i.e., there are no `accidental symmetries', we may then exclude proper endomorphisms. Thus, our final assumption is:
\begin{enumerate}\setcounter{enumi}{\value{assumptions}}
%\addtolength{\itemsep}{-0.5\baselineskip}
\item {\em Absence of accidental gauge symmetries} To each $U\in G_\rce$ 
there is an automorphism $\zeta(U)\in\Aut(\langle \Ff,\gamma,\Sf\rangle)$ such that $\rho(\zeta(U))=U$. \label{it:final}
\setcounter{assumptions}{\value{enumi}}
\end{enumerate} 
 
\begin{Thm} \label{thm:end_aut}
For a theory $\langle\FF,\gamma,\Sf\rangle\in\LCT_\grCAS$ obeying
assumptions (1)--(\ref{it:final}) above, 
\[
\End(\langle \Ff,\gamma,\Sf\rangle)=\Aut(\langle \Ff,\gamma,\Sf\rangle)\cong G_\rce
\]
(an isomorphism of groups). Moreover, $G_{\rm max}$ and $G_\rce$ are compact in the strong operator topology, while $\Aut(\langle \Ff,\gamma,\Sf\rangle)$ is compact in the natural weak topology given in Sec.~\ref{sec:gauge_gp}.
\end{Thm}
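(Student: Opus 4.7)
The plan is to assemble the pieces already constructed in Theorem~\ref{thm:rho} and Proposition~\ref{prop:UVprime} and then apply assumption~(\ref{it:final}). First I would establish the identity $\End(\langle \Ff,\gamma,\Sf\rangle) = \Aut(\langle \Ff,\gamma,\Sf\rangle)$. Given any $\eta\in\End(\langle \Ff,\gamma,\Sf\rangle)$, Theorem~\ref{thm:rho} and Proposition~\ref{prop:UVprime} show that $\rho(\eta)\in G_\rce$; assumption~(\ref{it:final}) then provides an automorphism $\zeta(\rho(\eta))\in\Aut(\langle \Ff,\gamma,\Sf\rangle)$ with $\rho(\zeta(\rho(\eta)))=\rho(\eta)$. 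Faithfulness of $\rho$ forces $\eta=\zeta(\rho(\eta))$, so $\eta$ is an automorphism.

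Next, the restriction $\rho|_{\Aut(\langle \Ff,\gamma,\Sf\rangle)}\colon \Aut(\langle \Ff,\gamma,\Sf\rangle)\to G_\rce$ is injective (by faithfulness) and surjective (by assumption~(\ref{it:final})), hence is a bijective monoid homomorphism; since both the source and target are groups, it is automatically a group isomorphism, which yields the claimed identification.

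For the compactness claims, Proposition~\ref{prop:Gmax} already establishes that $G_{\rm max}$ is compact in the strong operator topology. The group $G_\rce=G_{\rm max}\cap\{V(\hb):\hb\in H(\Mb_0)\}'$ is the intersection of $G_{\rm max}$ with a strongly closed set, hence is a closed subgroup and therefore itself strongly compact. To transport compactness to $\Aut(\langle \Ff,\gamma,\Sf\rangle)$, I would work in the $\omega_0$-topology defined at the end of Sec.~\ref{sec:gauge_gp}. This is by construction the weakest group topology making $\eta\mapsto U_\eta=\rho(\eta)$ strongly continuous, i.e., it is the initial topology with respect to the injection $\rho|_{\Aut(\langle \Ff,\gamma,\Sf\rangle)}$ into $G_\rce$ equipped with the SOT. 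The bijection $\rho|_{\Aut(\langle \Ff,\gamma,\Sf\rangle)}$ is therefore a homeomorphism onto $G_\rce$, whence $\Aut(\langle \Ff,\gamma,\Sf\rangle)$ is compact in the $\omega_0$-topology. Theorem~\ref{thm:tops} of Appendix~\ref{appx:topology} identifies the $\omega_0$-topology with the natural weak topology under the standing hypotheses, delivering the last assertion.

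There is no real obstacle here: all the substantive analytic content lies in Theorem~\ref{thm:rho} (energy compactness forces $\rho(\eta)$ to be unitary and to act strictly locally via twisted duality), Proposition~\ref{prop:UVprime} (the vacuum-expectation condition on $V(\hb)$ forces the phase to be $1$, so $\rho(\eta)$ commutes with the relative Cauchy evolution), and Proposition~\ref{prop:Gmax} (compactness from Lemma~\ref{lem:iso_to_unitary}(b)). The only point needing care is the bookkeeping that the $\omega_0$-topology really is the initial topology pulled back from $G_\rce$, which is immediate from its definition together with faithfulness of $\rho$.
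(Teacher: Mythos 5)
Your proposal is correct and follows essentially the same route as the paper: apply Theorem~\ref{thm:rho}, Proposition~\ref{prop:UVprime} and assumption~(\ref{it:final}) to get $\eta=\zeta(\rho(\eta))$ (the paper phrases the injectivity step via faithfulness of $\pi$ plus Theorem~\ref{thm:reduce_to_Mink}, which is exactly the content of the faithfulness of $\rho$ you invoke), conclude that $\rho$ restricts to a group isomorphism onto $G_\rce$, and import the compactness statements from Propositions~\ref{prop:Gmax} and~\ref{prop:UVprime} together with Theorem~\ref{thm:tops}. Your extra remark that the $\omega_0$-topology is the initial topology pulled back through $\rho$, making the bijection onto the compact group $G_\rce$ a homeomorphism, is a slightly more explicit account of the bookkeeping the paper leaves implicit, but it is not a different argument.
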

\begin{proof} Let $\eta\in \End(\langle \Ff,\gamma,\Sf\rangle)$.
Proposition~\ref{prop:UVprime} and assumption~(\ref{it:final}) yield
$\rho(\eta)=\rho(\zeta(\rho(\eta))$; faithfulness
of $\pi$ gives $\eta_{\Mb_0} = \zeta(\rho(\eta))_{\Mb_0}$
and hence $\eta = \zeta(\rho(\eta))$ by Theorem~\ref{thm:reduce_to_Mink}. 
Thus $\End(\langle \Ff,\gamma,\Sf\rangle)=\Aut(\langle \Ff,\gamma,\Sf\rangle)$ and the monoid homomorphism $\rho$, already
known to be faithful, is now bijective and becomes a group
isomorphism. The compactness statements were proved 
in Propositions~\ref{prop:Gmax} and~\ref{prop:UVprime}. 
\end{proof}

Some remarks are in order. First, it would be interesting derive
the content of assumption~(\ref{it:final}) from the other axioms or another more primitive requirement. We leave this
as an open problem. Alternatively, one could drop this assumption and read our results
as proving that proper endomorphisms are associated with
accidental symmetries of the Minkowski net. Second, in general 
categories, objects admitting no proper monic endomorphisms are called {\em Dedekind finite}: for the category of sets these are indeed
the finite sets, while for the category of vector spaces 
they are the finite-dimensional spaces (see~\cite{Stout:1987} for other examples of Dedekind finiteness in different categories, including 
a number of salutary counterexamples). Thus  theories obeying our
assumptions are Dedekind-finite objects in the category of locally covariant theories.  An immediate consequence is a strong version of the 
Schr\"oder--Bernstein property for locally covariant theories.  
\begin{Cor}
Suppose $\langle \Af,\gamma,\Sf\rangle$ and $\langle \Bf,\delta,\Tf\rangle$ are theories in $\LCT_\grCAS$,
at least one of which obeys assumptions (1)--(\ref{it:final}). If there are 
subtheory embeddings $\alpha:\langle \Af,\gamma,\Sf\rangle\nto \langle \Bf,\delta,\Tf\rangle$
and $\beta:\langle \Bf,\delta,\Tf\rangle\nto \langle \Af,\gamma,\Sf\rangle$
then both $\alpha$ and $\beta$ are  isomorphisms. 
\end{Cor}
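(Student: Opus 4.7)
The plan is to exploit the Dedekind-finiteness content of Theorem~\ref{thm:end_aut}: the composition $\beta\circ\alpha$ is an endomorphism of whichever theory satisfies the hypotheses, and hence an automorphism, and from this one readily constructs a two-sided inverse for $\beta$ (and then also for $\alpha$). Without loss of generality I would assume that $\langle\Af,\gamma,\Sf\rangle$ satisfies assumptions (1)--(\ref{it:final}); the other case is symmetric, with $\alpha\circ\beta$ playing the role of $\beta\circ\alpha$.

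First I would observe that the composition $\beta\circ\alpha:\langle\Af,\gamma,\Sf\rangle\nto\langle\Af,\gamma,\Sf\rangle$ is an element of $\End(\langle\Af,\gamma,\Sf\rangle)$, since naturality is preserved under composition and each component $\beta_\Mb\circ\alpha_\Mb$ is a composition of $\grCAS$-morphisms. By Theorem~\ref{thm:end_aut} it is in fact an automorphism. Setting $\theta=(\beta\circ\alpha)^{-1}\in\Aut(\langle\Af,\gamma,\Sf\rangle)$ and $\phi:=\alpha\circ\theta:\langle\Af,\gamma,\Sf\rangle\nto\langle\Bf,\delta,\Tf\rangle$, I would then have $\beta\circ\phi=(\beta\circ\alpha)\circ\theta=\id_{\langle\Af,\gamma,\Sf\rangle}$, exhibiting $\phi$ as a right inverse for $\beta$ in $\LCT_\grCAS$.

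To promote this to a two-sided inverse I would work componentwise in an arbitrary spacetime $\Mb$: from $\beta_\Mb\circ\phi_\Mb=\id_{\Af(\Mb)}$ one obtains $\beta_\Mb\circ(\phi_\Mb\circ\beta_\Mb)=\beta_\Mb=\beta_\Mb\circ\id_{\Bf(\Mb)}$, and monic cancellation of $\beta_\Mb$ (recall that all $\grCAS$-morphisms are monic) yields $\phi_\Mb\circ\beta_\Mb=\id_{\Bf(\Mb)}$. Thus every $\beta_\Mb$ is a $\grCAS$-isomorphism with inverse $\phi_\Mb$, and conjugating the naturality squares for $\beta$ by these inverses shows that $\phi$ is itself natural. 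Therefore $\beta$ is an isomorphism in $\LCT_\grCAS$, and $\alpha=\beta^{-1}\circ(\beta\circ\alpha)$ is then an isomorphism as a composition of such.

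The whole argument is categorical once Theorem~\ref{thm:end_aut} is in hand, so I do not anticipate a serious obstacle. The one point worth being explicit about is the passage from componentwise to global invertibility, which rests squarely on the monic nature of $\grCAS$-morphisms: a monic split epimorphism is automatically an isomorphism, and the resulting componentwise inverses then assemble into a natural transformation by conjugation of the naturality squares.
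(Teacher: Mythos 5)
Your proposal is correct and follows essentially the same route as the paper: form $\eta=\beta\circ\alpha$, invoke Theorem~\ref{thm:end_aut} to conclude it is an automorphism, and then cancel the monic $\beta$ from $\beta\circ(\alpha\circ\eta^{-1})\circ\beta=\beta$ to exhibit $\alpha\circ\eta^{-1}$ as a two-sided inverse of $\beta$. The only cosmetic difference is that you perform the cancellation componentwise and then reassemble, whereas the paper does it directly at the level of natural transformations (and note that your $\phi=\alpha\circ\theta$ is natural by construction, so the final conjugation-of-naturality-squares step is not actually needed).
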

\begin{proof} Without loss, suppose that $\langle \Af,\gamma,\Sf\rangle$ obeys the
assumptions (1)--(\ref{it:final}). Then
$\eta=\beta\circ\alpha$ is an endomorphism of $\langle \Af,\gamma,\Sf\rangle$, 
and hence an automorphism by Theorem~\ref{thm:end_aut}. As $\beta$ is
monic, we may deduce that both $\beta$ and $\alpha$ are isomorphisms.\footnote{
Evidently,
$\beta\circ(\alpha\circ\eta^{-1})=\id_{\langle \Af,\gamma,\Sf\rangle}$, and $\beta\circ (\alpha\circ\eta^{-1})\circ\beta = 
\beta=\beta\circ\id_{\langle \Bf,\delta,\Tf\rangle}$, entailing that $(\alpha\circ\eta^{-1})\circ\beta=\id_\Bf$ 
using the monic property of $\beta$. Hence $\beta$ is invertible, and so is $\alpha=\beta^{-1}\circ\eta$.}
\end{proof}

Finally, we comment on an alternative (though more restrictive) energy compactness condition for Minkowski space theories, known as the microscopic phase space condition~\cite{Bos_uPSC:2005}.  Theories obeying this condition, along with
other standard assumptions, have a definable field content forming finite dimensional subspaces 
that are sufficient to describe the theory at different orders of a short-distance approximation at sufficiently low energies, and reproduce the total field content of~\cite{FreHer:1981}. This might offer an alternative
approach to the results of this section that avoids the need for twisted locality and twisted duality.

It remains to prove the Hilbert space result used in our argument.

\begin{proofof}{Lemma~\ref{lem:iso_to_unitary}}  (a) For any $\psi\in\cN$, the sequence $T^k\psi$ is contained in
$\cN$ and therefore has a subsequence $T^{k_r}\psi$ converging in $\Hc$. In particular, 
for any $\epsilon>0$ there is $R>0$ such that $\|(T^{k_s}-T^{k_r})\psi\|<\epsilon$ for
all $s> r>R$. But $T$ is an isometry, so we also have $\|T^{k_s-k_r}\psi - \psi\|<\epsilon$
and we may deduce the existence of a sequence $j_r\to\infty$ with $T^{j_r}\psi\to\psi$. 
Hence $\cN\subset \cl(T\cN)$, which, together with $T\cN\subset \cN$,
implies $\cl(T\cN)= \cl(\cN)$. Moreover, as
$T^{(j_r-1)}\psi= T^*T^{j_r}\psi \to T^*\psi$, we have $\|T^*\psi\| =
\lim_{r} \|T^{j_r-1}\psi\| =  \|\psi\|$ for all $\psi\in\cN$. Now $\II-TT^*$ is a projection, so the elementary identity
$\| (\II-TT^*)\psi\|^2 = \ip{\psi}{(\II-TT^*)\psi} = \|\psi\|^2 -\|T^*\psi\|^2 = 0$ gives $TT^*\psi=\psi$ for all $\psi\in\cN$. As $\cN$ has dense linear span, we conclude that $T$ is unitary.
Accordingly a sequence in $\HH$ is Cauchy if and only if its image under $T$ is, from which  
we may obtain $T\cl\cN = \cl T\cN = \cl\cN$. 

(b) 
As $\Hc$ is separable, the strong operator topology is metrisable on any bounded subset of $\Bfr(\Hc)$
\cite[Proposition~2.7]{Takesaki_vol1} and convergence and compactness can be determined sequentially. Now, $G$ is certainly closed: any strong limit of a sequence in $G$ must be an isometry mapping the compact set $\cl\cN$ into itself; by Lemma~\ref{lem:iso_to_unitary}(a)
the limit is therefore contained in $G$. Turning to compactness, choose a sequence $U_n$ in $G$, and fix a countable linearly independent set of vectors $\psi_j\in\cN$ with dense linear span. The sequence $U_n\psi_1$ in $\cl\cN$ must have convergent subsequences, and we choose a  subsequence $U_{n^{(1)}(r)}$ so that
$U_{n^{(1)}(r)}\psi_1$ converges. Proceeding inductively, we choose successive subsequences $U_{n^{(k)}(r)}$ so that $U_{n^{(k)}(r)}\psi_j$ converges for each $1\le j\le k$. The diagonal
subsequence $V_k = U_{n^{(k)}(k)}$ converges strongly on each $\psi_j$
and hence on any finite linear combination of the $\psi_j$. As $\Span\{\psi_j\}$ is dense and the $V_k$ are uniformly bounded,  it follows that the sequence $V_k$ converges strongly to a limit in $G$. 
Thus $G$ is sequentially compact and hence compact. 
\end{proofof}

\section{Scalar fields}\label{sect:examples}

As a concrete example, we consider theories of finitely many free
minimally coupled scalar fields, specified by a finite set $\Mc\subset [0,\infty)$ representing the mass spectrum and a function
$\nu:\Mc\to\NN$ giving the number of field species with each mass
(NB $\nu(m)>0$ for each $m\in\Mc$). The total number of
species is denoted $|\nu|=\sum_{m\in\Mc}\nu(m)$. We consider
both the  classical and quantum theories, working in a fixed spacetime dimension $n\ge 2$, though some of our results on the quantum field theory hold only in dimension $n\ge 3$. 

\subsection{Classical theory}

The classical theory is described using  the category of complexified symplectic spaces,
$\Sympl$, objects of which are triples
$(V,\Gamma,\sigma)$, consisting of a complex
vector space $V$, an antilinear conjugation $\Gamma:V\to V$ and
a weakly nondegenerate antisymmetric bilinear form $\sigma:V\times V\to\CC$
such that $\sigma(\Gamma v,\Gamma w) = \overline{\sigma(v,w)}$ for all $v,w\in V$.
In our applications, $V$ will be a
space of complex-valued functions and $\Gamma$ will be induced by complex conjugation: $(\Gamma f)(p)=\overline{f(p)}$.
A morphism $S:(V,\Gamma,\sigma)\to (V',\Gamma',\sigma')$ in $\Sympl$ is a 
(necessarily injective) complex linear
map $S:V\to V'$, such that $\sigma'(Sv,Sw) = \sigma(v,w)$ and $\Gamma' Sv =
S\Gamma v$ hold for all $v,w\in V$. Relaxing the requirement for the bilinear
forms to be nondegenerate, we obtain the category of complexified
presymplectic spaces $\preSympl$. There is an obvious forgetful
functor $\Uf:\Sympl\to\preSympl$.

The theory of a single minimally coupled field of mass $m\ge 0$ is 
standard and was described in detail from the functorial perspective in~\cite{FewVer:dynloc2}. 
We recall the essential facts only. For any $\Mb\in\Loc$, let $\Sol_m(\Mb)$ be the 
space of complex-valued solutions $\phi$ to
\begin{equation}\label{eq:KG}
(\Box_\Mb+m^2)\phi=0
\end{equation} that have compact support on Cauchy surfaces in
$\Mb$.\footnote{Our notation differs slightly from that of ~\cite{FewVer:dynloc2}, where the mass
was not indicated explicitly; in particular, for $m=0$ the present $\Sol_0(\Mb)$ does {\em not} coincide with the space $\Sol_0(\Mb)$ studied in~\cite{FewVer:dynloc2}, which permits solutions that are compactly supported on Cauchy surfaces following modification by a locally constant function. See also Sec.~\ref{sect:observables} here.} Equipping $\Sol_m(\Mb)$ with complex conjugation and the antisymmetric bilinear
form 
\[
\sigma_{m,\Mb}(\phi,\phi') = \int_\Sigma \left(\phi n^a\nabla_a\phi' - \phi'
n^a\nabla_a\phi\right) d\Sigma,
\]
where $\Sigma$ is a Cauchy surface with future-pointing unit normal $n^a$, 
$\Sol_m(\Mb)$ becomes a complexified symplectic space (independent 
of the choice of $\Sigma$). The space $\Sol_m(\Mb)$
can also be expressed as $\Sol_m(\Mb)=E_{m,\Mb}\CoinX{\Mb}$,
where $E_{m,\Mb}$ is the solution operator obtained as the difference
of the advanced and retarded Green functions for \eqref{eq:KG}. 
If $\psi:\Mb\to\Nb$ in $\Loc$, the push-forward $\psi_*:\CoinX{\Mb}\to\CoinX{\Nb}$, defined in Eq.~\eqref{eq:psistar}, 
induces a unique linear map  $\Sol_m(\psi):\Sol_m(\Mb)\to \Sol_m(\Nb)$ 
such that $\Sol_m(\psi)E_{m,\Mb}f = E_{m,\Nb}\psi_*f$; this is in fact
a $\Sympl$-morphism and makes $\Sol_m:\Loc\to\Sympl$ a functor.

The full theory is described by a functor $\Sol:\Loc\to\Sympl$ with
\[
\Sol(\Mb) = \bigoplus_{m\in \Mc} \Sol_m(\Mb)\otimes \CC^{\nu(m)} ,
\]
equipped with complex conjugation and antisymmetric form
\[
\sigma_\Mb((\phi_m\otimes z_m)_{m\in\Mc},(\phi'_m\otimes
z'_m)_{m\in\Mc}) = \sum_{m\in\Mc} \sigma_m(\phi_m,\phi'_m)z_m^T z'_m,
\]
where each $\phi_m\in \Sol_m(\Mb)$, $z_m\in \CC^{\nu(m)}$; here, we regard 
$\CC^{k}$ as a space of column vectors and write $T$ for transpose. 

The theory $\Sol$ inherits the timeslice property from the theories $\Sol_m$~\cite{FewVer:dynloc2} and therefore has a relative Cauchy evolution, which is differentiable in the weak symplectic 
topology (see~\cite{FewVer:dynloc2}).  Let $\Sym(\Mb)$ denote the space of smooth symmetric second rank covariant tensor fields of compact support on each $\Mb\in\Loc$. Then, for each $\Mb\in\Loc$ and $\fb\in\Sym(\Mb)$, there
exists a linear and symplectically skew-adjoint map $F_\Mb[\fb]$ on $\Sol(\Mb)$ such that
\begin{equation}\label{eq:wsdiff}
\sigma_\Mb(F_\Mb[\fb]\phi,\phi')= \left.\frac{d}{ds} \sigma_\Mb(\rce_\Mb[s\fb]\phi,\phi')\right|_{s=0}\qquad (\phi\in\Sol(\Mb)).
\end{equation}
The maps $F_\Mb[\fb]$ are related to the classical stress-energy tensor: in fact, 
\begin{equation}
\sigma_\Mb(F_\Mb[\fb]\phi, \overline{\phi}) = \int f_{ab} T_\Mb^{ab}[\phi]\,\dvol_\Mb ,
\end{equation}
where $\Tb_\Mb[\phi]$ is the stress-energy tensor of $\phi\in\Sol(\Mb)$. 
In view of the intertwining property \eqref{eq:intertwine}, any endomorphism $\eta\in\Sol$ obeys $F_\Mb[\fb]\eta_\Mb = \eta_\Mb F_\Mb[\fb]$ and indeed
\[
\Tb_\Mb[\eta_\Mb\phi] = \Tb_\Mb[\phi] \qquad (\phi\in\Sol(\Mb)).
\]

Although $\Sympl$ does not admit general categorical unions,  $\preSympl$ does, namely  the linear span of (conjugation-invariant) subspaces. Moreover, 
$\preSympl$ also has equalizers; the equalizer of two morphisms being the
inclusion morphism of the subspace on which they agree (i.e., the kernel of their difference) in their common domain. Let $\Mb\in\Loc$ be
any spacetime and $\phi\in\Sol(\Mb)$. We may write $\phi=E_\Mb f$ for some
$f\in \CoinX{\Mb;\CC^{|\nu|}}$ and then use a partition of unity to write $f$ as a finite sum $f=\sum_i f_i$ where each $f_i$ is supported in a diamond $D_i$ in $\Mb$. Then it is evident
that $\phi$ is contained in the span of $E_\Mb \CoinX{D_i;\CC^{|\nu|}}$, which is the image of $\Uf(\Sol(\iota_{\Mb;D_i}))$. Hence $\Sol$ is $\Uf$-additive. 
 
The endomorphisms of $\Sol$ will now be classified in terms of the group
\[
{\rm O}(\nu) = \prod_{m\in\Mc} {\rm O}(\nu(m)),
\]
where the factors are the standard groups of real orthogonal matrices.

\begin{Lem} \label{lem:Onu_action}
Each $R=(R_m)_{m\in\Mc}\in{\rm O}(\nu)$ induces an automorphism $S(R)$
of $\Sol$ given by
\begin{equation}\label{eq:SR_action}
S(R)_\Mb = \bigoplus_{m\in\Mc} \II_{\Sol_m(\Mb)}\otimes R_m,
\end{equation}
and the map $R\mapsto S(R)$ is an injective group homomorphism of $O(\nu)$
into $\Aut(\Sol)$.
\end{Lem}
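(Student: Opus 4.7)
The plan is to verify, in order, that (a) each $S(R)_\Mb$ is a well-defined $\Sympl$-isomorphism of $\Sol(\Mb)$, (b) the family $\{S(R)_\Mb\}_{\Mb\in\Loc}$ is natural in $\Mb$, (c) the assignment $R\mapsto S(R)$ is a group homomorphism, and (d) that this homomorphism is injective. None of these steps is expected to present serious difficulty, because $S(R)$ acts tensorially, touching only the internal multiplicity factors $\CC^{\nu(m)}$ and leaving the solution-space factors $\Sol_m(\Mb)$ untouched.

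For (a), I would observe that $S(R)_\Mb$ is complex linear, and that it commutes with the conjugation on $\Sol(\Mb)$ because each $R_m$ is a real matrix and therefore intertwines the standard conjugation on $\CC^{\nu(m)}$. Symplecticity is the computation
\begin{align*}
\sigma_\Mb\bigl(S(R)_\Mb(\phi_m\otimes z_m)_m,S(R)_\Mb(\phi'_m\otimes z'_m)_m\bigr)
&=\sum_{m\in\Mc}\sigma_m(\phi_m,\phi'_m)\,(R_m z_m)^T(R_m z'_m)\\
&=\sum_{m\in\Mc}\sigma_m(\phi_m,\phi'_m)\,z_m^T R_m^T R_m z'_m\\
&=\sum_{m\in\Mc}\sigma_m(\phi_m,\phi'_m)\,z_m^T z'_m,
\end{align*}
using $R_m^T R_m = \II_{\nu(m)}$. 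Invertibility is immediate, with $S(R)_\Mb^{-1}=S(R^T)_\Mb=\bigoplus_m \II_{\Sol_m(\Mb)}\otimes R_m^T$, which is again of the claimed form because $R^T \in O(\nu)$.

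For (b), the naturality square $S(R)_\Nb\circ\Sol(\psi)=\Sol(\psi)\circ S(R)_\Mb$ holds because $\Sol(\psi)$ is the direct sum of the morphisms $\Sol_m(\psi)\otimes\II_{\CC^{\nu(m)}}$ acting trivially on the second tensor factor, while $S(R)_\Mb$ acts trivially on the first; the two therefore commute on each summand. This gives $S(R)\in\End(\Sol)$, and in view of (a) in fact $S(R)\in\Aut(\Sol)$.

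Parts (c) and (d) are essentially bookkeeping. The homomorphism property follows from $(\II\otimes R_m)(\II\otimes R'_m)=\II\otimes (R_m R'_m)$ in each summand, and $S(\II)_\Mb=\id_{\Sol(\Mb)}$ in every spacetime, so $S(\II)=\id_{\Sol}$. For injectivity, suppose $S(R)=\id_\Sol$; pick any $\Mb\in\Loc$ (e.g.\ Minkowski) and any $m\in\Mc$. Since $\Sol_m(\Mb)\neq\{0\}$, we may choose a nonzero $\phi_m\in\Sol_m(\Mb)$; then $\phi_m\otimes z=\phi_m\otimes R_m z$ for every $z\in\CC^{\nu(m)}$, which forces $R_m=\II_{\nu(m)}$. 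Doing this for each $m\in\Mc$ yields $R=\II$, completing the proof. The only step that requires any genuine thought is (a), and even there the main content is the elementary identity $R_m^T R_m=\II_{\nu(m)}$; no step poses a substantive obstacle.
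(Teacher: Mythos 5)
Your proposal is correct and follows essentially the same route as the paper: the paper likewise checks naturality by the direct computation on elementary tensors and notes that $S(R)_\Mb$ is invertible with inverse $S(R^{-1})_\Mb$, simply leaving the symplecticity, conjugation-equivariance, homomorphism and injectivity checks as "easily seen" or "clear", which you have spelled out correctly (in particular the key identities $R_m^TR_m=\II_{\nu(m)}$ and $\overline{R_mz}=R_m\overline{z}$ for real orthogonal $R_m$). No gaps.
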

\begin{proof} It is easily seen that $S(R)_\Mb$ is an endomorphism
of $\Sol(\Mb)$ and has inverse $S(R^{-1})_\Mb$. If
$\psi:\Mb\to\Nb$ then
\begin{align*}
S(R)_\Nb\circ \Sol(\psi)\bigoplus_{m\in\Mc} \phi_m\otimes z_m &= 
S(R)_\Mb \bigoplus_{m\in\Mc} (\Sol_m(\psi)\phi_m)\otimes z_m \\
&=
\bigoplus_{m\in\Mc} \Sol_m(\psi)\phi_m\otimes R_m z_m
= \Sol(\psi) \bigoplus_{m\in\Mc} \phi_m\otimes z_m \\
&= 
\Sol(\psi) \circ S(R)_\Mb \bigoplus_{m\in\Mc} \phi_m\otimes z_m
\end{align*}
for arbitrary $\bigoplus_{m\in\Mc} \phi_m\otimes z_m \in \Sol(\Mb)$
and (extending by linearity) we see that $S(R)$ is natural.
Thus $S(R)\in\Aut(\Sol)$; the homomorphism and injectivity properties are clear. 
\end{proof}

The main result of this section is that these are the only endomorphisms of $\Sol$.
\begin{Thm}\label{thm:classical_endos}
Every endomorphism of $\Sol$ is an automorphism, and
\[
\End(\Sol)=\Aut(\Sol)\cong {\rm O}(\nu),
\]
with the isomorphism given by the homomorphism of Lemma~\ref{lem:Onu_action}.
\end{Thm}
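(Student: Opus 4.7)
The plan is to reduce the classification to Minkowski space via Theorem~\ref{thm:reduce_to_Mink} and then carry out a Fourier-analytic analysis of $\eta_{\Mb_0}$. The category $\preSympl$ admits unions (linear spans of conjugation-invariant presymplectic subspaces) and equalizers (kernels of differences of morphisms), and $\Sol$ has just been shown to be $\Uf$-additive with respect to $\Uf:\Sympl\to\preSympl$; as $\Sol$ inherits the timeslice axiom from the $\Sol_m$, Theorem~\ref{thm:reduce_to_Mink} tells us any $\eta\in\End(\Sol)$ is determined by its Minkowski component $\eta_{\Mb_0}$. Since Lemma~\ref{lem:Onu_action} already produces an injection ${\rm O}(\nu)\hookrightarrow\Aut(\Sol)$ via $R\mapsto S(R)$, it suffices to show that every such $\eta_{\Mb_0}$ has the form $S(R)_{\Mb_0}$ for some $R\in{\rm O}(\nu)$.

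By Proposition~\ref{prop:basic}(a), $\eta_{\Mb_0}$ commutes with $\Sol(\psi)$ for every $\psi\in\End(\Mb_0)$, in particular for the induced action of the proper orthochronous Poincar\'e group. Decomposing $\eta_{\Mb_0}$ into blocks $T^{m',m}_{j,k}:\Sol_m(\Mb_0)\to\Sol_{m'}(\Mb_0)$, each $T=T^{m',m}_{j,k}$ is complex-linear and intertwines the translation action. The spacetime Fourier transform of any $\phi\in\Sol_m(\Mb_0)$ is a distribution supported on the mass hyperboloid $\Omega_m=\{p:p\cdot p=m^2\}$, and translations act by multiplication by $e^{-ip\cdot a}$; disjointness $\Omega_m\cap\Omega_{m'}=\emptyset$ for distinct $m,m'\in\Mc$ therefore forces $T^{m',m}_{j,k}=0$ in every off-diagonal block, and $\eta_{\Mb_0}$ preserves the mass decomposition.

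Within a given mass sector, a Poincar\'e-intertwining $T:\Sol_m(\Mb_0)\to\Sol_m(\Mb_0)$ acts in Fourier space as a Lorentz-invariant multiplier on $\Omega_m=\Omega_m^+\sqcup\Omega_m^-$; transitivity of the proper orthochronous Lorentz group on each sheet makes it constant on each half, and compatibility with the complex conjugation $\Gamma$ forces the two constants to be complex conjugates, so the multiplier equals some $c\in\CC$ on $\Omega_m^+$ and $\bar c$ on $\Omega_m^-$. Writing $c=a+ib$, the real part contributes $a\,\II_{\Sol_m(\Mb_0)}$, while a nonzero imaginary part would require the positive-frequency projection $\phi\mapsto\phi^+$ to map $\Sol_m(\Mb_0)$ into itself. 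But a positive-frequency solution with compactly supported Cauchy data must vanish identically: by finite propagation it vanishes on the complement of the causal hull of its Cauchy support, while as a boundary value of a function holomorphic in the forward tube it is real-analytic on $\Mb_0$. Hence $b=0$ and $T=a\,\II_{\Sol_m(\Mb_0)}$ with $a\in\RR$.

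Combining these steps, $\eta_{\Mb_0}=\bigoplus_{m\in\Mc}\II_{\Sol_m(\Mb_0)}\otimes R_m$ for real matrices $R_m\in\RR^{\nu(m)\times\nu(m)}$. Preservation of $\sigma_{\Mb_0}$ then imposes $R_m^TR_m=\II$, so $R_m\in{\rm O}(\nu(m))$ and, assembling $R=(R_m)_{m\in\Mc}\in{\rm O}(\nu)$, we obtain $\eta_{\Mb_0}=S(R)_{\Mb_0}$; Theorem~\ref{thm:reduce_to_Mink} then yields $\eta=S(R)$, completing the classification and in particular showing $\End(\Sol)=\Aut(\Sol)$. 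The main obstacle is the single-sector Fourier-multiplier step---specifically ruling out the frequency-splitting contribution---which is the only part of the argument that relies on genuine analytic input about Minkowski KG solutions rather than on functorial or categorical manipulation.
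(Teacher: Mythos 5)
Your reduction to Minkowski space via $\Uf$-additivity, the timeslice property and Theorem~\ref{thm:reduce_to_Mink} matches the paper exactly, as does the final step (reality of the $R_m$ plus preservation of $\sigma_{\Mb_0}$ giving $R_m\in{\rm O}(\nu(m))$). The middle of your argument, however, departs from the paper and contains a genuine gap. The paper does not classify $\eta_{\Mb_0}$ from Poincar\'e covariance alone: it first records that $\eta_{\Mb_0}$ preserves the classical stress-energy tensor (via Eq.~\eqref{eq:intertwine} and the weak differentiability of the relative Cauchy evolution) and then invokes Proposition~\ref{prop:SET}, whose proof uses the pointwise identity $T_{ab}\ell^a\ell^b|_p=\|\ell^a\nabla_a\phi|_p\|^2$ to obtain an a priori estimate forcing the induced map on Cauchy data to be a distribution supported at a point. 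That estimate is precisely what substitutes for continuity; Proposition~\ref{prop:SET} is explicitly stated with no continuity assumed, because none is available for a bare natural transformation.

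Your proposal discards the stress-energy input entirely and asserts that translation-equivariance of the blocks $T^{m',m}_{j,k}$ ``forces'' them to act as Fourier multipliers, hence to vanish off the mass diagonal and to be Lorentz-invariant functions on the mass shell on the diagonal. This step is unjustified: $\eta_{\Mb_0}$ is merely a linear map commuting with the unsmeared translation group, and passing from that to a multiplier requires commuting $T$ with smeared translations $\int f(a)\tau_a\,da$, i.e., some continuity, which you neither assume nor derive. (The translation orbit of a generic $\phi\in\Sol_m(\Mb_0)$ satisfies no nontrivial finite linear relations, so there is no purely algebraic obstruction to wild intertwiners between different mass sectors.) A secondary issue: for $\nu(0)>0$ and $n=2$ the forward light cone minus the origin is not a single orbit of the proper orthochronous Lorentz group, so even granting the multiplier structure your transitivity argument does not determine the massless multiplier in two dimensions, a case the theorem covers. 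Your frequency-splitting argument (a positive-frequency solution with compactly supported Cauchy data vanishes, by finite propagation speed plus analyticity from the forward tube) is correct and is a nice observation, but it rests on the unproved multiplier claim. To repair the proof, either import the stress-energy preservation and appeal to Proposition~\ref{prop:SET} as the paper does, or establish the required continuity of $\eta_{\Mb_0}$ by other means.
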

\begin{proof} Owing to the timeslice property, $\Uf$-additivity and
Theorem~\ref{thm:reduce_to_Mink},  any $\eta\in \End(\Sol)$ is uniquely determined by its Minkowski space component $\eta_{\Mb_0}$. Using the fact that
$\eta_{\Mb_0}$ preserves the stress-energy tensor and commutes with
translations of Minkowski space and complex conjugation, Proposition~\ref{prop:SET} in Appendix~\ref{sect:SET} shows that  there are real orthogonal matrices $R_m$ such that
$\eta_{\Mb_0} = S(R)_{\Mb_0}$, where 
$R=(R_m)_{m\in\Mc}\in {\rm O}(\nu)$. Thus $\eta = S(R)\in\Aut(\Sol)$. In summary, all endomorphisms are automorphisms, and the homomorphism of Lemma~\ref{lem:Onu_action} is surjective and hence an isomorphism. 
\end{proof}

\subsection{Quantized theory: Field algebra}

Any $(V,\sigma,C)\in \Sympl$ has  a quantization given by
the unital $*$-algebra $\Qf(V,\sigma,C)$, whose underlying complex vector space is the symmetric tensor vector space over $V$,
\begin{equation}\label{eq:Q1}
\Qf(V,\sigma,C) = \Gamma_\odot(V) \stackrel{\text{def}}{=} \bigoplus_{n\in\NN_0} V^{\odot n} ,
\end{equation}
(all tensor products and direct sums being purely algebraic) 
with a product such that
\begin{equation}\label{eq:Q2}
u^{\odot m} \cdot v^{\odot n} = \sum_{r=0}^{\min\{m,n\}}
\left(\frac{i\sigma(u,v)}{2}\right)^r \frac{m!n!}{r!(m-r)!(n-r)!}
S\left( u^{\otimes (m-r)}\otimes v^{\otimes (n-r)}\right),
\end{equation}
where $S$ denotes symmetrisation, and a $*$-operation defined by $(u^{\odot n})^* = (C u)^{\odot n}$; both operations being extended by (anti-)linearity to general elements of $\Gamma_\odot(V)$. By convention $u^{\odot 0}=1\in V^{\odot 0}=\CC$. 
The nondegeneracy of $\sigma$ ensures that the algebra $\Qf(V,\sigma,C)$ is simple. 
Moreover, the quantization becomes a functor from $\Sympl$ to $\Alg$ if we also assign
\begin{equation}\label{eq:Q3}
\Qf(f) = \Gamma_\odot(f) = \bigoplus_{n\in\NN_0}^\infty f^{\odot n}
\end{equation}
to any morphism $f:(V,\sigma,C)\to (V',\sigma',C')$ in $\Sympl$. We refer
to~\cite{FewVer:dynloc2} for a full exposition. 

The quantization of the classical theory $\Sol$ studied in the previous subsection is given by $\Ff = \Qf\circ \Sol$,\footnote{In \cite{FewVer:dynloc2},
the analogous theory was denoted as $\Af$, as is traditional in QFT in CST;
here, we adopt the AQFT convention, that field algebras are denoted with an `F' and observable algebras
with an `A'.} which is additive because $\Sol$ is $\Uf$-additive.
It will be useful to describe $\Ff$ using `symplectically smeared fields': for
each $\Mb\in\Loc$, let $\cPhi_\Mb:\Sol(\Mb)\to\Gamma_\odot(\Sol(\Mb))$ be the canonical injection $\cPhi_\Mb(\phi) = 0\oplus\phi\oplus 0 \oplus\cdots$. 
Then $\cPhi$ is a field of type $\Sol$; $\cPhi\in\Fld(\Sol,\Ff)$ [we suppress the forgetful functors from $\Sympl$ and $\Alg$ to $\Set$], obeying 
\begin{align}
\cPhi_\Mb:\phi &\to \cPhi_\Mb(\phi) \quad\text{is $\CC$-linear} \notag\\
\cPhi_\Mb(\phi)^* &=\cPhi(\overline{\phi}) \label{eq:cPhi_rels}\\
[\cPhi_\Mb(\phi),\cPhi_\Mb(\phi')] &= i\sigma_\Mb(\phi,\phi')\II_{\Ff(\Mb)}\notag
\end{align}
for all $\phi,\phi'\in \Sol(\Mb)$. Indeed, these relationships characterize the theory:
given any map $\cPsi_\Mb$ from $\Sol(\Mb)$ to some $\Ac\in\Alg$ obeying the
above relations, there is a $\Alg$-morphism\footnote{The 
morphism is necessarily monic because $\Ff(\Mb)$ is simple (and $\Ac$ is
not the zero algebra).} $\alpha:\Ff(\Mb)\to\Ac$ such that $\alpha\cPhi_\Mb(\phi)
=\cPsi_\Mb(\phi)$, 
which is an isomorphism if $\Ac$ is generated by the $\cPsi_\Mb(\phi)$ and $\II_{\Ac}$. 
The relative Cauchy evolution of $\Ff$ is closely linked to that of $\Sol$ by $\rce_\Mb^{(\Ff)}[\hb]
=\Qf(\rce_\Mb^{(\Sol)}[\hb])$ and hence
\begin{equation}\label{eq:rce_Phi}
\rce_\Mb^{(\Ff)}[\hb] \cPhi(\phi) =  \cPhi(\rce_\Mb^{(\Sol)}[\hb] \phi).
\end{equation}
The usual spacetime smeared fields of the theory are obtained by setting $\Phi_\Mb(f) = 
\cPhi_\Mb(E_\Mb f)$ for $f\in\CoinX{\Mb;\CC^{|\nu|}}\cong \bigoplus_{m\in\Mc}
\CoinX{\Mb;\CC^{\nu(m)}}$, where
\[
E_\Mb = \bigoplus_{m\in\Mc} E_{m,\Mb}\otimes\II_{\nu(m)}.
\]

We introduce a state space as follows. For each $\Mb$, let $\Sf(\Mb)$ be the set of all
states $\omega$ on $\Ff(\Mb)$ such that all $k$-point functions are distributional, 
in the sense that
\[
\CoinX{\Mb;\CC^{|\nu|}}^{\otimes k}\owns f_1\otimes \cdots\otimes f_k \mapsto 
\omega(\Phi_\Mb( f_1)\cdots \Phi_\Mb( f_k))
\]
is continuous in the usual test-function topology for each $k\in\NN$. This is a much larger state space than
the Hadamard class usually employed; our results would be unaltered by restricting
to this class. It is easily seen that $\Sf(\Mb)$ is closed under convex linear combinations and with respect to operations induced by $\Ff(\Mb)$; moreover,
the pull-back of a distribution by a smooth embedding is also a distribution, so 
$\Ff(\psi)^*\Sf(\Nb)\subset\Sf(\Mb)$ for any $\psi:\Mb\to\Nb$. Thus,
$\Sf(\psi)$ may be defined uniquely so that the diagram 
\[
\begin{tikzpicture}[description/.style={fill=white,inner sep=2pt}]
\matrix (m) [ampersand replacement=\&,matrix of math nodes, row sep=2em,
column sep=4em, text height=1.5ex, text depth=0.25ex]
{  \Sf(\Nb) \&  \Sf(\Mb) \\
  \Ff(\Nb)^* \&  \Ff(\Mb)^* \\ };
\path[->]
(m-1-1) edge[right hook->]  (m-2-1)
(m-1-1) edge node[auto] {$\Sf(\psi)$} (m-1-2)
(m-1-2) edge[right hook->]  (m-2-2)
(m-2-1) edge node[auto] {$\Ff(\psi)^*$} (m-2-2);
\end{tikzpicture}
\]
commutes, where the unlabelled maps are the obvious inclusions. With
this definition, $\Sf$ becomes a state space functor. Combining $\Ff$ and
$\Sf$ we obtain a theory 
$\langle \Ff,\id,\Sf\rangle\in\LCT_\grAS$ that obeys locality in the sense
of commutation at spacelike separation.

Our task is now to classify $\End(\langle \Ff,\id,\Sf\rangle)$,  i.e.,
natural transformations $\eta:\Ff\nto\Ff$ such that $\eta_\Mb^*\Sf(\Mb)\subset\Sf(\Mb)$ for all $\Mb$.\footnote{While it would be interesting to obtain a (purely algebraic) classification of $\End(\Ff)$, our current proof involves continuity arguments that 
require the specification of a state space.}
As a first observation, note that any $\zeta\in\End(\Sol)$ lifts 
to an endomorphism $\Qf[\zeta]$ of $\Ff$ with components
$\Qf[\zeta]_\Mb = \Qf(\zeta_\Mb)$, and that $\Qf[\zeta]$ is an automorphism
if $\zeta\in\Aut(\Sol)$, because functors preserve isomorphisms. 
Thus Theorem~\ref{thm:classical_endos}
shows that ${\rm O}(\nu)\owns R\mapsto \Qf[S(R)]\in\Aut(\Ff)$ is a group homomorphism. For theories with no massless components (i.e., $\nu(0)=0$) this
will turn out to give the full class of endomorphisms of $\langle \Ff,\id,\Sf\rangle$. However,  massless theories admit additional automorphisms
and the general result Theorem~\ref{thm:main} is that $\Aut(\langle \Ff,\id,\Sf\rangle)$ is isomorphic to
\[
G(\nu) = {\rm O}(\nu) \ltimes \RR^{\nu(0)*},
\]
where $\RR^{k*}$ is the additive group of real $k$-dimensional
row vectors (with the convention that this is the trivial group if
$k=0$) and the semidirect product is defined by
\[
\left((R_m)_{m\in\Mc},\ell\right)\cdot
\left((R'_m)_{m\in\Mc},\ell'\right)
= \left((R_mR'_m)_{m\in\Mc},\ell R_0' +\ell'\right).
\]
To explain the action of $G(\nu)$, we require some notation. 
For any $\phi\in\Sol(\Mb)$, let $\phi_0$ be the component
of $\phi$ in $\Sol_0(\Mb)\otimes\CC^{\nu(0)}$, regarded as a $\nu(0)$-dimensional column vector with entries in $\Sol_0(\Mb)$. Similarly, for $f\in\CoinX{\Mb;\CC^{|\nu|}}$, $f_0$ denotes the component in $\CoinX{\Mb;\CC^{\nu(0)}}$, so that
$\phi_0 = E_{0,\Mb}f_0$. With these conventions, we set
\begin{equation}\label{eq:ell_def}
\langle \ell, E_\Mb f\rangle_\Mb=\int_\Mb \ell \cdot f_0\,\dvol_\Mb ,
\end{equation}
which is well-defined because all elements of $\ker E_{0,\Mb}=\Box_\Mb\CoinX{\Mb}$
have vanishing spacetime integral. Equivalently, we have
$\langle \ell, \phi\rangle_\Mb=\sigma_{\Mb}(\ell\cdot\phi_0, 1_\Mb)$, where
$1_\Mb$ is the unit constant on $\Mb$ (extending the notation
if $\Mb$ has noncompact Cauchy surfaces, so $1_\Mb\neq \Sol(\Mb)$). 
It is easily 
verified that 
\begin{equation}\label{eq:ell_covariance}
\langle \ell, \Sol(\psi)\phi\rangle_\Nb = \langle \ell, \phi\rangle_\Mb
\end{equation} 
holds for all $\phi\in\Sol(\Mb)$ and $\psi:\Mb\to\Nb$.
The group  $G(\nu)$ acts in the following way. 
\begin{Prop} \label{prop:Gnu_action}
There is a  group monomorphism $\zeta:G(\nu)\to\Aut(\langle \Ff,\id,\Sf\rangle)$
such that 
\begin{equation}\label{eq:zeta_defn}
\zeta(R,\ell)_\Mb\cPhi_\Mb(\phi) = 
\cPhi_\Mb(S(R)_\Mb\phi) + \langle\ell, \phi\rangle_\Mb\II_{\Ff(\Mb)}
\end{equation}
for all $\Mb\in\Loc$, $\phi\in \Sol(\Mb)$ and $(R,\ell)\in G(\nu)$.
\end{Prop}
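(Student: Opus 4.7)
My plan is to construct each component $\zeta(R,\ell)_\Mb$ via the universal characterisation of $\Ff(\Mb)=\Qf(\Sol(\Mb))$ recalled just below~\eqref{eq:cPhi_rels}: any map from $\Sol(\Mb)$ into a unital $*$-algebra satisfying the three relations~\eqref{eq:cPhi_rels} extends uniquely to a unital $*$-algebra morphism out of $\Ff(\Mb)$. For fixed $(R,\ell)\in G(\nu)$ and $\Mb\in\Loc$ I would set
\[
\cPsi_\Mb(\phi):=\cPhi_\Mb(S(R)_\Mb\phi)+\langle\ell,\phi\rangle_\Mb\II_{\Ff(\Mb)}\qquad(\phi\in\Sol(\Mb))
\]
and verify the three defining properties. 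Complex linearity is immediate from linearity of $S(R)_\Mb$ and of $\phi\mapsto\langle\ell,\phi\rangle_\Mb$. The $*$-relation reduces to $\overline{S(R)_\Mb\phi}=S(R)_\Mb\overline\phi$ and $\overline{\langle\ell,\phi\rangle_\Mb}=\langle\ell,\overline\phi\rangle_\Mb$, both holding because $R$ and $\ell$ are real and $E_\Mb$ is a real operator. The commutation relation $[\cPsi_\Mb(\phi),\cPsi_\Mb(\phi')]=i\sigma_\Mb(\phi,\phi')\II$ reduces to $\sigma_\Mb(S(R)_\Mb\phi,S(R)_\Mb\phi')=\sigma_\Mb(\phi,\phi')$ since multiples of $\II$ are central, and this holds because $S(R)\in\Aut(\Sol)$ is a symplectic automorphism. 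The universal property then supplies a unique unital $*$-morphism $\zeta(R,\ell)_\Mb:\Ff(\Mb)\to\Ff(\Mb)$ obeying~\eqref{eq:zeta_defn}.

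Next I would check naturality and the group law. Naturality is verified on generators $\cPhi_\Mb(\phi)$ and $\II$, and reduces for $\psi:\Mb\to\Nb$ to the naturality $\Sol(\psi)\circ S(R)_\Mb=S(R)_\Nb\circ\Sol(\psi)$ from Lemma~\ref{lem:Onu_action} together with the functorial invariance~\eqref{eq:ell_covariance} of $\langle\ell,\cdot\rangle$. For the group law, applying $\zeta(R,\ell)_\Mb\circ\zeta(R',\ell')_\Mb$ to $\cPhi_\Mb(\phi)$ gives
\[
\cPhi_\Mb(S(R)_\Mb S(R')_\Mb\phi)+\langle\ell,S(R')_\Mb\phi\rangle_\Mb\II+\langle\ell',\phi\rangle_\Mb\II.
\]
Here $S(R)\circ S(R')=S(RR')$ by Lemma~\ref{lem:Onu_action}, while from~\eqref{eq:ell_def} and the observation that if $\phi=E_\Mb f$ then the mass-$0$ test function of $S(R')_\Mb\phi$ is $R'_0f_0$ one obtains the key identity $\langle\ell,S(R')_\Mb\phi\rangle_\Mb=\langle\ell R'_0,\phi\rangle_\Mb$. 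This collapses the display to $\zeta((R,\ell)\cdot(R',\ell'))_\Mb\cPhi_\Mb(\phi)$, realising the semidirect product law. Since $\zeta(I,0)$ restricts to the identity on generators and is therefore $\id_\Ff$, the homomorphism property provides $\zeta((R,\ell)^{-1})$ as a two-sided inverse of $\zeta(R,\ell)$, so each $\zeta(R,\ell)$ lies in $\Aut(\Ff)$.

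For state-space invariance, observe that $\zeta(R,\ell)_\Mb\Phi_\Mb(f)=\Phi_\Mb(Rf)+\langle\ell,E_\Mb f\rangle_\Mb\II$, where $(Rf)_m=R_mf_m$; both $f\mapsto Rf$ and $f\mapsto\langle\ell,E_\Mb f\rangle_\Mb$ are continuous in the standard test-function topology. Expanding $(\zeta(R,\ell)_\Mb^*\omega)(\Phi_\Mb(f_1)\cdots\Phi_\Mb(f_k))$ for $\omega\in\Sf(\Mb)$ therefore yields a finite sum of products of lower-order $\omega$-correlators in transformed test functions with continuous linear functionals of the $f_i$, each factor separately distributional. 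Hence $\zeta(R,\ell)_\Mb^*\Sf(\Mb)\subset\Sf(\Mb)$ and $\zeta(R,\ell)\in\Aut(\langle\Ff,\id,\Sf\rangle)$.

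Finally, for injectivity of $\zeta$ I would exploit the $\NN_0$-grading of $\Ff(\Mb)=\Gamma_\odot(\Sol(\Mb))$: if $\zeta(R,\ell)=\id$ then $\cPhi_\Mb((S(R)_\Mb-\II)\phi)+\langle\ell,\phi\rangle_\Mb\II=0$ for every $\phi\in\Sol(\Mb)$, and the degree-$1$ and degree-$0$ components must vanish separately. This forces $S(R)_\Mb=\II$ (so $R=(I_{\nu(m)})_{m\in\Mc}$ by Lemma~\ref{lem:Onu_action}) together with $\langle\ell,\phi\rangle_\Mb=0$ for all $\phi$; varying $f_0$ in~\eqref{eq:ell_def} then forces $\ell=0$. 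The main bookkeeping burden is the second paragraph, where the semidirect structure emerges from tracking how the additive shift $\langle\ell,\cdot\rangle_\Mb$ interacts with the $S(R')$-twist; everything else is either universal-property bookkeeping or routine continuity in the test-function topology.
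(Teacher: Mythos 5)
Your proposal is correct and follows essentially the same route as the paper's proof: define $\cPsi_\Mb$ by the right-hand side of~\eqref{eq:zeta_defn}, verify the relations~\eqref{eq:cPhi_rels} to invoke the universal property, then check naturality via~\eqref{eq:ell_covariance}, the semidirect-product law via $\langle\ell,S(R')_\Mb\phi\rangle_\Mb=\langle\ell R_0',\phi\rangle_\Mb$, and state-space invariance via continuity of $f\mapsto\langle\ell,E_\Mb f\rangle_\Mb$. You merely spell out two points the paper leaves implicit (the key identity behind the group law and the grading argument for injectivity), both correctly.
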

\begin{proof} Let $(R,\ell)\in G(\nu)$ be arbitrary, and write the
right-hand side of 
\eqref{eq:zeta_defn} as 
$\cPsi_\Mb(\phi)$. 
Then $\phi\mapsto\cPsi_\Mb(\phi)$ is a
linear map of $\Sol(\Mb)$ to $\Ff(\Mb)$ obeying $\cPsi_\Mb(\phi)^*=\cPsi_\Mb(\overline{\phi})$ and
$[\cPsi_\Mb(\phi),\cPsi_\Mb(\phi')] =
i\sigma_\Mb(\phi,\phi')\II_{\Ff(\Mb)}$.
Thus there is a unique endomorphism $\zeta(R,\ell)_\Mb$ of $\Ff(\Mb)$ such
that \eqref{eq:zeta_defn} holds. This is invertible, with inverse
$\zeta((R,\ell)^{-1})_\Mb$, so $\zeta(R,\ell)_\Mb\in\Aut(\Ff(\Mb))$. 
Moreover, if a state $\omega$ has distributional $k$-point functions, so does
$\zeta(R,\ell)_\Mb^*\omega$; here, we use the fact that $\langle \ell, E_\Mb f\rangle_\Mb$
is evidently continuous in $f\in\CoinX{\Mb;\CC^{|\nu|}}$.  Thus $\zeta(R,\ell)_\Mb^*\Sf(\Mb)= 
\Sf(\Mb)$, as $\zeta(R,\ell)_\Mb$ is an isomorphism.
Moreover, we have 
\begin{align*}
\Ff(\psi)\zeta(R,\ell)_\Mb\cPhi_\Mb(\phi) &=
\Ff(\psi)\left(\cPhi_\Mb(S(R)_\Mb\phi) +
\langle\ell,\phi\rangle_{\Mb}\II_{\Ff(\Mb)} \right)\\
&=
\cPhi_\Nb(\Sol(\psi) S(R)_\Mb\phi) +
\langle\ell,\phi\rangle_{\Mb}\II_{\Ff(\Nb)} \\
&=
\cPhi_\Nb(S(R)_\Nb \Sol(\psi) \phi) +
\langle\ell,\Sol(\psi)\phi\rangle_{\Nb}\II_{\Ff(\Nb)}  \\
&= \zeta(R,\ell)_\Nb\Ff(\psi)\cPhi_\Mb(\phi),
\end{align*}
for any $\Mb\stackrel{\psi}{\to}\Nb$, using $S(R)\in\End(\Sol)$ and
Eq.~\eqref{eq:ell_covariance}, so the components $\zeta(R,\ell)_\Mb$ form a natural transformation $\zeta(R,\ell):\Ff\nto\Ff$, whereupon
$\zeta(R,\ell)\in\Aut(\langle \Ff,\id,\Sf\rangle)$. 
That $\zeta:G(\nu)\owns(R,\ell)\mapsto \zeta(R,\ell)\in
\Aut(\langle \Ff,\id,\Sf\rangle)$ is a homomorphism
holds because
\begin{align*}
\zeta(R,\ell)_\Mb \zeta(R',\ell')_\Mb \cPhi_\Mb(\phi) &= 
\zeta(R,\ell)_\Mb \left[ \cPhi(S(R')_{\Mb}\phi) +
\langle\ell',\phi\rangle_\Mb\II_{\Ff(\Mb)}\right] \\
&=
\cPhi(S(R)_{\Mb}S(R')_{\Mb}\phi) + \left(
\langle\ell,S(R')_{\Mb}\phi\rangle+
\langle\ell',\phi\rangle_\Mb\right)\II_{\Ff(\Mb)} \\
&=\zeta(RR',\ell R'_0+\ell')\cPhi_\Mb(\phi),
\end{align*}
and the proof that $\zeta$ is a monomorphism is straightforward. \end{proof}

The main result of this section is that every endomorphism of $\langle\Ff, \id,\Sf\rangle$ 
is one of the automorphisms just constructed. Our proof uses the 
Fock representation of the Minkowski vacuum state; as there is no such state
for massless fields in $n=2$ spacetime dimensions, we must exclude this
case from the current treatment. 
\begin{Thm} \label{thm:main}
For spacetime dimension $n\ge 3$, every endomorphism of $\langle\Ff, \id,\Sf\rangle$  is an automorphism. Moreover,
the monomorphism $\zeta:G(\nu)\to \Aut(\langle\Ff, \id,\Sf\rangle)$ of Proposition~\ref{prop:Gnu_action}
is an isomorphism of groups, so we have
\[
\End(\langle\Ff, \id,\Sf\rangle) = \Aut(\langle\Ff, \id,\Sf\rangle) \cong G(\nu).
\]
The same result holds in two dimensions if $\nu(0)=0$, whereupon
 $\Aut(\langle\Ff, \id,\Sf\rangle) \cong {\rm O}(\nu)$. 
\end{Thm}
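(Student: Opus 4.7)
The plan is to apply Theorem~\ref{thm:reduce_to_Mink} to reduce to the Minkowski component $\eta_{\Mb_0}$, to show via the vacuum representation that $\eta_{\Mb_0}$ acts on symplectically smeared fields as a linear symplectic map plus a central shift, and then to use Theorem~\ref{thm:classical_endos} on the linear part together with a Poincaré-invariance argument on the shift to identify $\eta$ with some $\zeta(R,\ell)$.

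As a preliminary, I would note that $\Ff=\Qf\circ\Sol$ inherits the timeslice property from $\Sol$ and is $\Uf$-additive, with $\Uf$ the forgetful-type functor sending a $*$-algebra generated by fields $\cPhi_\Mb(\phi)$ and $\II$ to its underlying presymplectic span; the partition-of-unity argument used for $\Sol$ lifts essentially unchanged. Theorem~\ref{thm:reduce_to_Mink} then reduces the classification of any $\eta\in\End(\langle\Ff,\id,\Sf\rangle)$ to determining its action in $\Mb_0$.

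The core step is to prove $\eta_{\Mb_0}\cPhi_{\Mb_0}(\phi)=\cPhi_{\Mb_0}(T\phi)+c(\phi)\II$ for some complex-linear, conjugation-commuting, symplectic map $T:\Sol(\Mb_0)\to\Sol(\Mb_0)$ and some linear $c:\Sol(\Mb_0)\to\CC$. Using the Fock GNS representation $(\Hc,\pi_0,\Omega_0)$ of the Minkowski vacuum $\omega_0\in\Sf(\Mb_0)$ (which exists for $n\ge 3$, and for $n=2$ when $\nu(0)=0$; the excluded case is precisely the well-known absence of a vacuum for the two-dimensional massless scalar), Lemma~\ref{lem:invariant_states} applied to every Poincaré transformation shows that $\eta_{\Mb_0}^*\omega_0$ is Poincaré-invariant; uniqueness of a Poincaré-invariant state with distributional $n$-point functions (via the spectrum condition and cluster decomposition applied in the GNS representation of the pull-back) forces $\eta_{\Mb_0}^*\omega_0=\omega_0$. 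Hence $\eta_{\Mb_0}$ is implemented by an isometry $U_\eta$ fixing $\Omega_0$ and commuting with the unitary Poincaré representation, and by Proposition~\ref{prop:basic}(b) with the implementer of the relative Cauchy evolution. Expanding $U_\eta\pi_0(\cPhi(\phi))\Omega_0$ in its Fock-number decomposition and using that the commutator $[\eta_{\Mb_0}\cPhi(\phi),\eta_{\Mb_0}\cPhi(\phi')]=i\sigma(\phi,\phi')\II$ is a central $c$-number, a symmetric-algebra degree analysis on the symmetric product~\eqref{eq:Q2} forces all contributions of Fock degree above one to vanish, yielding the asserted form with $c(\phi)=\omega_0(\eta_{\Mb_0}\cPhi(\phi))$. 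By naturality of $\eta$, $T$ extends to components on every spacetime compatible with the structure of $\Sol$, so Theorem~\ref{thm:classical_endos} yields $T=S(R)_{\Mb_0}$ for some $R\in{\rm O}(\nu)$. For the shift, $*$-preservation forces $c$ to intertwine conjugation with complex conjugation, and translation-naturality of $\eta$ forces Poincaré-invariance of $c$; the only such linear functionals vanish on each massive summand $\Sol_m(\Mb_0)\otimes\CC^{\nu(m)}$ (no nonzero Poincaré-translation-invariant linear functional exists on massive one-particle wavefunctions) and reduce on the massless summand to the pairing $\langle\ell,\cdot\rangle_{\Mb_0}$ with $\ell\in\RR^{\nu(0)*}$. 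Thus $\eta=\zeta(R,\ell)$, and since $\zeta$ is injective by Proposition~\ref{prop:Gnu_action}, it is bijective.

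The main obstacle is the degree-purity step: showing that $U_\eta\pi_0(\cPhi(\phi))\Omega_0$ has no components of Fock degree above one. The commutator identity above is a necessary condition but not obviously sufficient, since in principle higher-degree Wick monomials could have commutators that collectively reduce to a $c$-number. Converting the necessary condition into a genuine degree bound requires a careful spectral argument, for instance using that commutation with time translations plus the spectrum condition pins down the energy-momentum content of $\eta_{\Mb_0}\cPhi(\phi)$ to the union of one-particle mass shells, thereby controlling its Wick-polynomial degree; the massless sector also needs extra attention because vacuum invariance of $\omega_0$ must survive despite the $\langle\ell,\cdot\rangle_{\Mb_0}$ freedom, which holds because the shift contributes only to one-point functions that vanish on any quasifree vacuum.
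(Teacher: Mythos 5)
Your overall architecture (reduce to $\Mb_0$ via Theorem~\ref{thm:reduce_to_Mink}, prove a ``linear part plus central shift'' form in the vacuum representation, then identify the linear part via the stress-energy/translation/conjugation constraints and the shift via translation invariance) matches the paper's, which carries out the middle step in Proposition~\ref{prop:Minkowski} and the identification via Proposition~\ref{prop:SET}. However, there is a genuine gap at the foundation of your vacuum-representation argument: you claim that $\eta_{\Mb_0}^*\omega_0=\omega_0$ by uniqueness of the Poincar\'e-invariant state. This is false precisely in the massless case $\nu(0)>0$ that the theorem is meant to cover: the shift automorphisms $\zeta(\II,\ell)$ with $\ell\neq 0$ pull $\omega_0$ back to a distinct Poincar\'e-invariant state (its one-point function is $\langle\ell,\phi\rangle_{\Mb_0}\neq 0$), which is exactly the one-parameter ambiguity of the massless vacuum noted in Sec.~\ref{sect:compactness}. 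Even for $\nu(0)=0$, uniqueness of the invariant state within the large state space $\Sf(\Mb_0)$ of states with distributional $n$-point functions is not established, and your appeal to cluster decomposition of the pulled-back state is not available. Consequently the isometry $U_\eta$ fixing $\Omega_0$, on which your Fock-decomposition argument rests, need not exist. The paper avoids this entirely: it never implements $\eta_{\Mb_0}$, but instead analyses the vectors $\pi(\eta_{\Mb_0}\cPhi_{\Mb_0}(\phi))\Omega$ directly, using only naturality under translations and the fact that $\eta_{\Mb_0}^*\omega_0\in\Sf(\Mb_0)$ is translation invariant with distributional two-point function.

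The second weak point is the one you yourself flag: the degree-purity step. Your primary route (the commutator being a $c$-number) is, as you note, not sufficient, and your proposed repair --- pinning the energy-momentum content to the mass shells via the spectrum condition --- is indeed the paper's actual mechanism ($P_aP^a\,\pi(\cPsi_{\Mb_0}(\phi))\Omega=\pi(\cPsi_{\Mb_0}(\Mc^2\phi))\Omega$, so $\Mc$-eigenvectors land in $\Hc_m$), but you do not carry it out, and the remaining step from ``the vector lies in the $0\oplus 1$-particle space'' to ``the algebra element has symmetric-tensor degree at most one'' still requires the normal-ordering/degree count and the separating property of $\Omega$ supplied by Lemma~\ref{lem:lift}. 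Finally, your identification of the linear part by ``extending $T$ to a natural transformation of $\Sol$'' so as to invoke Theorem~\ref{thm:classical_endos} is an unjustified detour; the correct (and simpler) move is to apply Proposition~\ref{prop:SET} directly to $T$ in Minkowski space, using that it commutes with the relative Cauchy evolution (hence preserves the classical stress-energy tensor), with translations and with conjugation.
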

{\noindent\bf Remark:} In the purely massive case ($\nu(0)=0$), it is well-known that the maximal DHR group for the affiliated local $C^*$-algebraic net 
is precisely $G_{\rm max} = {\rm O}(\nu)$. Thus the conclusion of
Theorem~\ref{thm:end_aut} holds in this case (with $G_\rce=G_{\rm max}$). 
A direct application of the theory of Sec.~\ref{sect:compactness}
to the Weyl algebra is described in Sec.~\ref{sec:Weyl}. 

\begin{proof} Given any $\eta\in\End(\langle\Ff, \id,\Sf\rangle)$, the
transformed symplectically smeared field  $\cPsi=\eta\cdot\cPhi$ obeys many
of the same properties as $\cPhi$; namely, its behaviour under adjoints
\begin{equation}\label{eq:Psi_adjoint}
\cPsi_\Mb(\phi)^* = (\eta_\Mb\cPhi_\Mb(\phi))^* = 
\eta_\Mb (\cPhi_\Mb(\phi))^* = \eta_\Mb\cPhi_\Mb(\overline{\phi}) =
\cPsi_\Mb(\overline{\phi})
\end{equation}
and the commutation relations
\begin{equation}\label{eq:Psi_CCRs}
[\cPsi_\Mb(\phi),\cPsi_\Mb(\phi')] = \eta_\Mb[\cPhi_\Mb(\phi),\cPhi_\Mb(\phi')]
= i\sigma_\Mb(\phi,\phi')\II_{\Ff(\Mb)}.
\end{equation}
Moreover, Eqs.~\eqref{eq:rce_Phi} and \eqref{eq:intertwine} entail that
\begin{equation}\label{eq:Psi_rce}
\rce_\Mb^{(\Ff)}[\hb] \cPsi_\Mb(\phi) = \cPsi_\Mb(\rce_\Mb^{(\Sol)}[\hb] \phi).
\end{equation}

We now focus on Minkowski space $\Mb_0$. Using properties of
the vacuum representation, we show in Proposition~\ref{prop:Minkowski} below
that $\cPsi_{\Mb_0}$ takes the form
\begin{equation} 
\cPsi_{\Mb_0}(\phi) = 
\cPhi_{\Mb_0}(S\phi) + \langle\ell, \phi\rangle_{\Mb_0}\II_{\Ff(\Mb_0)}
\qquad (\phi\in\Sol(\Mb_0))
\end{equation}
for some linear map $S:\Sol(\Mb_0)\to\Sol(\Mb_0)$ and $\ell\in\CC^{\nu(0)*}$ [with $\ell=0$ if $\nu(0)=0$]. Substituting in Eq.~\eqref{eq:Psi_rce}
and using Eq.~\eqref{eq:rce_Phi}, we find
\[
\cPhi_{\Mb_0}(\rce_{\Mb_0}^{(\Sol)}[\hb] S\phi) +
\langle\ell, \phi\rangle_{\Mb_0}\II_{\Ff(\Mb_0)} = 
\cPhi_{\Mb_0}(S \rce_{\Mb_0}^{(\Sol)}[\hb]\phi) + 
\langle\ell, \rce_{\Mb_0}^{(\Sol)}[\hb]\phi\rangle_{\Mb_0}\II_{\Ff(\Mb_0)}
\]
and may deduce that $S$ commutes with the relative Cauchy evolution in $\Mb_0$
(the identity $\langle\ell, \rce_{\Mb_0}^{(\Sol)}[\hb]\phi\rangle_{\Mb_0}=\langle\ell, \phi\rangle_{\Mb_0}$
gives no additional constraint). 
Accordingly, $\phi$ and $S\phi$ have identical (classical)
stress-energy tensors; moreover $S$ commutes with the action of spacetime translations by naturality of $\eta$, and obeys $S\overline{\phi}=\overline{S\phi}$
owing to Eq.~\eqref{eq:Psi_adjoint}.  Proposition~\ref{prop:SET} implies that 
$S=S(R)_{\Mb_0}$ for some $R=(R_m)_{m\in\Mc}\in{\rm O}(\nu)$, and
Eq.~\eqref{eq:Psi_adjoint} also shows that $\ell$ is real.  Hence $\eta_{\Mb_0} = \zeta_{\Mb_0}(R,\ell)$, and as $\langle\Ff,\id,\Sf\rangle$
obeys the timeslice property and is $\Uf$-additive with respect to the
forgetful functor $\Uf:\grAS\to\Alg$, 
Theorem~\ref{thm:reduce_to_Mink} entails that $\eta=\zeta(R,\ell)\in
\Aut(\langle\Ff,\id,\Sf\rangle)$, and the result  follows immediately. \end{proof}

The remaining task is to prove Proposition~\ref{prop:Minkowski}. Our argument
uses particular features of the standard Minkowski vacuum state $\omega$, which
exists for $n\ge 3$, or $n=2$ provided $\nu(0)=0$. Let us briefly recall
some properties of the induced GNS representation $(\Fc,\pi,\Dc,\Omega)$
of $\Ff(\Mb_0)$. First, we will denote
the one-particle Hilbert space by $\Hc$, so $\Fc=\Fc_{\odot}(\Hc)$.
Second, to each translation $\tau^{y}(x)=x+y$,
there is a unitary $U(y)=\exp(iy^a P_a)$ so that $U(y)\pi(A)U(y)^{-1} = \pi(\Ff(\tau^y)A)$
for all $A\in\Ff(\Mb_0)$. The momentum operators $P_a$ are defined on a
domain in $\Fc$
including $\Dc=\pi(\Ff(\Mb_0))\Omega$, on which $y\mapsto U(y)$ is
therefore strongly differentiable; the joint spectrum of the $P_a$ lies
in the closed forward lightcone. Third, the discrete spectrum of the 
mass-squared operator $P^a
P_a$ is $\sigma_{\rm disc}(P^a P_a) = \{0\}\cup\{m^2:m\in\Mc\}$. 
Denoting the eigenspace of eigenvalue
$m^2$ by $\Hc_m$, the subspaces $\Hc_m$ for $m>0$ are subspaces of the
one-particle space $\Hc$, while $\Hc_0$ is spanned by the one-particle
states of any massless fields and the vacuum vector, i.e., 
\[
\Hc_0 = \CC\Omega \oplus \left(\bigoplus_{m\in\Mc\backslash\{0\}} \Hc_m
\right)^\perp \subset \Fc,
\]
where the orthogonal complement is taken in $\Hc$. Fourth, 
the vacuum vector $\Omega$ is separating for the representation $\pi$, i.e., $\pi(A)\Omega=0$ 
iff $A=0$.  Fifth, we have
the following:
\begin{Lem} \label{lem:lift}
Suppose $A\in\Ff(\Mb_0)$ is such that
$\pi(A)\Omega\in\Hc_m$ for some $m\ge 0$. Then
$A = \cPhi_{\Mb_0}(\phi) + \alpha\II_{\Ff(\Mb_0)}$
for unique $\phi\in\Sol(\Mb_0)$ and $\alpha\in\CC$, which obey
$\Mc\phi=m\phi$. If $m>0$, $\alpha$ must vanish.
\end{Lem}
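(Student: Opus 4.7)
The plan is to exploit the vector-space identification $\Ff(\Mb_0) = \Qf(\Sol(\Mb_0)) = \bigoplus_{n\ge 0}\Sol(\Mb_0)^{\odot n}$ from~\eqref{eq:Q1}. Writing $A = \sum_{n=0}^N A^{(n)}$ with $A^{(n)}\in\Sol(\Mb_0)^{\odot n}$ and $A^{(N)}\ne 0$ (the case $A=0$ being trivial), the aim is to force $N \le 1$. Under this identification $\cPhi_{\Mb_0}(\phi)$ coincides with $\phi\in\Sol(\Mb_0)^{\odot 1}$, and $A^{(0)}\in\CC$ is a multiple of $\II_{\Ff(\Mb_0)}$; uniqueness of the decomposition $A = \alpha\II_{\Ff(\Mb_0)} + \cPhi_{\Mb_0}(\phi)$ will then be automatic from uniqueness of the symmetric tensor grading.

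First I would establish that the Fock representation respects this grading in the following precise sense: for any $B\in\Sol(\Mb_0)^{\odot n}$, the vector $\pi(B)\Omega\in\Fc=\Fc_\odot(\Hc)$ has vanishing components in all particle-number sectors above $n$, and its $n$-particle component is (up to a combinatorial factor) $\iota^{\odot n}(B)$, where $\iota:\Sol(\Mb_0)\to\Hc$ denotes the one-particle embedding $\iota(\phi) = \pi(\cPhi_{\Mb_0}(\phi))\Omega$. This is the standard link between the symmetric-product structure of $\Qf$ and Wick ordering relative to the vacuum state: the CCR corrections in the twisted product~\eqref{eq:Q2} generate only strictly lower-order terms in particle number, while the leading term corresponds to the untwisted symmetric tensor. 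Consequently the top particle-number component of $\pi(A)\Omega$ is a nonzero multiple of $\iota^{\odot N}(A^{(N)})$.

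Next, since $\pi(A)\Omega\in\Hc_m$ lies inside $\CC\Omega\oplus\Hc$, all particle-number components of $\pi(A)\Omega$ of order $n\ge 2$ vanish. If $N\ge 2$, the top-component identification forces $\iota^{\odot N}(A^{(N)})=0$. But $\iota$ is injective: $\iota(\phi)=0$ means $\pi(\cPhi_{\Mb_0}(\phi))\Omega = 0$, and since $\Omega$ is separating and $\pi$ faithful (as $\Ff(\Mb_0)$ is simple), this yields $\cPhi_{\Mb_0}(\phi)=0$ and hence $\phi=0$. Injectivity of $\iota^{\odot N}$ on $\Sol(\Mb_0)^{\odot N}$ then gives $A^{(N)}=0$, contradicting the choice of $N$. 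Hence $N\le 1$ and the desired decomposition $A = \alpha\II_{\Ff(\Mb_0)} + \cPhi_{\Mb_0}(\phi)$ holds, with $\alpha=A^{(0)}$ and $\phi=A^{(1)}$.

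Finally, with $\pi(A)\Omega = \alpha\Omega + \iota(\phi)$ split orthogonally into vacuum and one-particle parts, the assumption $\pi(A)\Omega\in\Hc_m$ yields the stated conclusions. For $m>0$, $\Hc_m\subset\Hc$ is orthogonal to $\Omega$, forcing $\alpha=0$. In both cases $\iota(\phi)\in\Hc_m$; since $\iota$ intertwines the Klein--Gordon mass decomposition $\Sol(\Mb_0)=\bigoplus_{m'\in\Mc}\Sol_{m'}(\Mb_0)\otimes\CC^{\nu(m')}$ with the action of $P^aP_a$ on $\Hc$, this pins $\phi$ to the mass-$m$ subspace $\Sol_m(\Mb_0)\otimes\CC^{\nu(m)}$, which is the content of $\Mc\phi=m\phi$. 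The main obstacle is the grading-preservation statement for $\pi(\cdot)\Omega$: one must verify carefully that the CCR corrections in~\eqref{eq:Q2} only contribute to strictly lower particle-number sectors while the top sector records the untwisted symmetric tensor embedding. This is a standard Wick-theorem computation for a quasi-free state, best organised inductively on the degree $n$.
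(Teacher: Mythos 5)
Your proposal is correct and follows essentially the same route as the paper: both arguments use the symmetric-tensor grading of $\Ff(\Mb_0)$, the fact that the top particle-number component of $\pi(A)\Omega$ faithfully records the top-degree component of $A$ (which you phrase via the injectivity of $\iota^{\odot N}$ and the paper phrases via the vacuum normal-ordering operation and the degree-drop identity $\deg(A-{:}A{:}_\Omega)=\max\{-1,\deg A-2\}$), and the separating property of $\Omega$ to force $\deg A\le 1$. The Wick-theorem step you flag as the main obstacle is exactly the ``key fact'' the paper also invokes without detailed proof, so the two arguments match in both substance and level of rigour.
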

\begin{proof} As $\Ff(\Mb_0)=\Gamma_\odot(\Sol(\Mb_0))$,
each nonzero $A\in \Ff(\Mb_0)$ therefore has a degree $\deg A$, which is the maximum $n\in\NN_0$
such that $A^{(n)}\neq 0$, where $A^{(n)}$ is the component of $A$ in $\Sol(\Mb_0)^{\odot n}$;
we assign a degree $-1$ to the zero element of $\Ff(\Mb_0)$. 
Moreover, the vacuum $\Omega$ induces a normal ordering operation on $\Ff(\Mb_0)$, so that
$\pi({:}A{:}_{\Omega})={:}\pi(A){:}$, where the normal ordering on the right-hand side is that
of the Fock space $\Fc$. A key fact is that $\deg(A-{:}A{:}_\Omega) = \max\{-1,\deg(A)-2\}$. 
Now $\deg A$ is also the maximum $n\in\NN_0$ for which
$\Upsilon=\pi(A)\Omega$ has nonvanishing component in the $n$-particle space in Fock space:
to see this, note that $\Upsilon$ clearly has vanishing component in all higher $n$-particle spaces,
while its projection onto the $\deg A$-particle space coincides with ${:}\pi(A){:}\Omega = 
\pi({:}A{:}_\Omega)\Omega$; if this should vanish then ${:}A{:}_\Omega=0$ because $\Omega$ is separating for $\pi$, and hence $\deg A=\max\{\deg(A)-2,-1\}$, which implies $A=0$. 

Now suppose that $\pi(A)\Omega\in\Hc_m$, which lies in the $1$-particle space if $m\neq 0$, 
or the span of the $0$ and $1$-particle spaces if $m=0$. Accordingly, if $A\neq 0$,
we must have $\deg A=1$ for $m>0$ and $\deg A\le 1$ if $m=0$. The result follows.  \end{proof}

Finally, we may state and prove the remaining result. 
\begin{Prop} \label{prop:Minkowski}
Let $\eta\in\End(\langle\Ff, \id,\Sf\rangle)$ and define $\cPsi = \eta\cdot\cPhi$. Then
there exists a unique linear map $S:\Sol(\Mb_0)\to\Sol(\Mb_0)$ and
$\ell\in\CC^{\nu(0)*}$ [with $\ell=0$ if $\nu(0)=0$] and
\[
\cPsi_{\Mb_0}(\phi) = \cPhi_{\Mb_0}(S\phi) +
\langle\ell,\phi\rangle_{\Mb_0}\II_{\Ff(\Mb_0)},
\]
for all $\phi\in \Sol(\Mb_0)$.
\end{Prop}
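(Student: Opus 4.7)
The plan is to show that $\pi(\cPsi_{\Mb_0}(\phi))\Omega$ is an eigenvector of $P^aP_a$ with eigenvalue $m^2$ whenever $\phi\in\Sol_m(\Mb_0)\otimes\CC^{\nu(m)}$, and then to invoke Lemma~\ref{lem:lift}. To obtain the eigenvalue equation, I would combine naturality of $\eta$ with the translational covariance $\Ff(\tau^y)\cPhi_{\Mb_0}(\phi) = \cPhi_{\Mb_0}(\Sol(\tau^y)\phi)$, which yields the intertwining
\begin{equation*}
\Ff(\tau^y)\cPsi_{\Mb_0}(\phi) = \cPsi_{\Mb_0}(\Sol(\tau^y)\phi),
\end{equation*}
whose GNS form reads $U(y)\pi(\cPsi_{\Mb_0}(\phi))\Omega = \pi(\cPsi_{\Mb_0}(\Sol(\tau^y)\phi))\Omega$. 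Since $\Dc = \pi(\Ff(\Mb_0))\Omega$ is stable under each $P_a$ (visible from the first-order identity $P_a\pi(\cPsi_{\Mb_0}(\phi))\Omega = i\pi(\cPsi_{\Mb_0}(\partial_a\phi))\Omega\in\Dc$), one can differentiate twice in $y$ at $y=0$ to obtain $P^aP_a\,\pi(\cPsi_{\Mb_0}(\phi))\Omega = -\pi(\cPsi_{\Mb_0}(\Box\phi))\Omega = m^2\pi(\cPsi_{\Mb_0}(\phi))\Omega$ via the Klein--Gordon equation.

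Because $m^2\in\sigma_{\rm disc}(P^aP_a)$, the vector lies in $\Hc_m$, and Lemma~\ref{lem:lift} supplies a unique decomposition $\cPsi_{\Mb_0}(\phi) = \cPhi_{\Mb_0}(\phi'_m) + \alpha_m(\phi)\II_{\Ff(\Mb_0)}$ with $\phi'_m$ of mass $m$ and $\alpha_m(\phi)\in\CC$ that is forced to vanish when $m>0$. Uniqueness makes $\phi\mapsto\phi'_m=:S_m\phi$ and $\phi\mapsto\alpha_m(\phi)$ linear. For a general $\phi = \sum_{m\in\Mc}\phi_m$, linearity of $\cPsi_{\Mb_0}$ combines these into
\begin{equation*}
\cPsi_{\Mb_0}(\phi) = \cPhi_{\Mb_0}(S\phi) + \alpha(\phi_0)\II_{\Ff(\Mb_0)},
\end{equation*}
with $S:=\bigoplus_{m\in\Mc}S_m$ a linear endomorphism of $\Sol(\Mb_0)$ and $\alpha:=\alpha_0$ a linear functional on $\Sol_0(\Mb_0)\otimes\CC^{\nu(0)}$ (vacuous if $\nu(0)=0$, giving $\ell=0$).

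When $\nu(0)>0$, the remaining task is to identify $\alpha$ with $\langle\ell,\cdot\rangle_{\Mb_0}$. Setting $\tilde{\alpha}(f_0):=\alpha(E_{0,\Mb_0}f_0)$ on $\CoinX{\Mb_0;\CC^{\nu(0)}}$ and noting that $\pi(\cPhi_{\Mb_0}(S\phi))\Omega$ is orthogonal to $\Omega$, one gets $\tilde{\alpha}(f_0) = \omega_0(\cPsi_{\Mb_0}(E_{0,\Mb_0}f_0)) = (\eta_{\Mb_0}^*\omega_0)(\Phi_{\Mb_0}(f_0))$; the condition $\eta_{\Mb_0}^*\Sf(\Mb_0)\subset\Sf(\Mb_0)$ built into $\End(\langle\Ff,\id,\Sf\rangle)$ then makes $\tilde{\alpha}$ a distribution, and the earlier intertwining renders it translation-invariant, so it must be constant: $\tilde{\alpha}(f_0) = \int_{\Mb_0}\ell\cdot f_0\,\dvol = \langle\ell,\phi\rangle_{\Mb_0}$ for a unique $\ell\in\CC^{\nu(0)*}$. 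Uniqueness of $S$ and $\ell$ follows from examining the action on $\Omega$. The principal obstacle I anticipate is the mass-eigenspace step---iterating the differentiation of $U(y)\pi(\cPsi_{\Mb_0}(\phi))\Omega$ rigorously via the stability of $\Dc$ under each $P_a$, and confirming that the eigenvalue $m^2$ lies in the discrete spectrum so Lemma~\ref{lem:lift} is directly applicable; the rest is essentially bookkeeping.
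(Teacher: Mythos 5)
Your overall architecture is exactly that of the paper's proof: use naturality and translation covariance to show $\pi(\cPsi_{\Mb_0}(\phi))\Omega$ lies in the mass eigenspace $\Hc_m$, apply Lemma~\ref{lem:lift}, and then identify $\alpha$ as a translationally invariant distribution via $\eta_{\Mb_0}^*\omega_0$. However, the step you flag as the ``principal obstacle'' is a genuine gap, and your parenthetical justification for it is circular: you claim the stability of $\Dc$ under $P_a$ is ``visible from the first-order identity $P_a\pi(\cPsi_{\Mb_0}(\phi))\Omega = i\pi(\cPsi_{\Mb_0}(\partial_a\phi))\Omega$'', but that identity is precisely what has to be established. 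Differentiability of the left-hand side $U(se_a)\pi(\cPsi_{\Mb_0}(\phi))\Omega$ is unproblematic (the vector lies in $\Dc$, on which $U$ is strongly differentiable); the real issue is identifying the limit of the difference quotients on the right-hand side, i.e.\ showing that $\pi(\cPsi_{\Mb_0}(s^{-1}(\Sol(\tau^{se_a})\phi-\phi)))\Omega \to -\pi(\cPsi_{\Mb_0}(\nabla_a\phi))\Omega$.

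The missing ingredient is a continuity statement: writing $\phi=E_{\Mb_0}f$, one has $s^{-1}(\tau^{se_a}_*f-f)+\nabla_a f\to 0$ in $\CoinX{\Mb_0;\CC^{|\nu|}}$, and one needs $h\mapsto \pi(\cPsi(E_{\Mb_0}h))\Omega$ to be continuous from the test-function topology into $\Fc$. The paper obtains this from
\[
\|\pi(\cPsi(E_{\Mb_0}h))\Omega\|^2 = (\eta_{\Mb_0}^*\omega_0)\bigl(\cPhi(E_{\Mb_0}\overline{h})\,\cPhi(E_{\Mb_0}h)\bigr),
\]
which is a two-point function of the state $\eta_{\Mb_0}^*\omega_0\in\Sf(\Mb_0)$ and hence a distribution, precisely because $\eta^*\Sf\subset\Sf$ is built into $\End(\langle\Ff,\id,\Sf\rangle)$. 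You already invoke this state-space condition later (to make $\tilde{\alpha}$ a distribution), so the tool is in your hands; it just needs to be deployed at the differentiation step as well. With that insertion your argument closes and coincides with the paper's; the remaining parts (the second application of $P_a$, the use of the discrete spectrum, the identification of $\alpha$ with $\langle\ell,\cdot\rangle_{\Mb_0}$ via translation invariance, and uniqueness) are correct as you have them.
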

\begin{proof} In the vacuum representation,
consider vectors of the form $\pi(\cPsi_{\Mb_0}(\phi))\Omega$ for
$\phi\in \Sol(\Mb_0)$. The properties of $U(y)$ mentioned above entail that
\begin{equation}\label{eq:UPsi}
U(y)\pi(\cPsi_{\Mb_0}(\phi))\Omega = \pi(\Ff(\tau^y)\cPsi_{\Mb_0}(\phi))\Omega
= \pi(\cPsi_{\Mb_0}(L(\tau^y)\phi))\Omega .
\end{equation}
Let $e_a$ ($a=0,\ldots,n-1$) be standard inertial basis vectors. Putting
$y=s e_a$, the left-hand side may be differentiated with respect to $s$ to
give $i P_a \pi(\cPsi_{\Mb_0}(\phi))\Omega$ at $s=0$. 
To evaluate the derivative of the right-hand side, we set $\phi=E_\Mb f$
for $f\in\CoinX{\Mb_0;\CC^{|\nu|}}$. Then
\[
s^{-1}(\Sol(\tau^{s e_a})\phi - \phi) +  \nabla_a \phi = 
E_{\Mb_0} \left( s^{-1}[\tau^{s e_a}_*f - f] +  \nabla_a f\right)  
\]
and the parenthesis on the right-hand side tends to $0$ in $\CoinX{\Mb_0;\CC^{|\nu|}}$. 
Now 
\[
\|\pi(\cPsi(E_{\Mb_0}h))\Omega\|^2 = (
\eta_{\Mb_0}^*\omega)(\cPhi(E_{\Mb_0}\overline{h})\cPhi(E_{\Mb_0}h )),
\]
which is a $2$-point function for the state $\eta_{\Mb_0}^*\omega\in\Sf(\Mb_0)$,
and therefore a distribution.  Thus 
$\pi(\cPsi(E_{\Mb_0}h))\Omega\to 0$
in $\Fc$ as $h\to 0$ in $\CoinX{\Mb_0;\CC^{|\nu|}}$ and the left-hand side of~\eqref{eq:UPsi} has
the derivative one would expect, namely $-\pi(\cPsi_{\Mb_0}(\nabla_a\phi))\Omega$. Accordingly, $P_a \pi(\cPsi_{\Mb_0}(\phi))\Omega =
\pi(\cPsi_{\Mb_0}(i\nabla_a\phi))\Omega$
for any $\phi\in\Sol(\Mb_0)$ and hence
\begin{equation}\label{eq:Psquared}
P_a P^a \pi(\cPsi_{\Mb_0}(\phi))\Omega =
\pi(\cPsi_{\Mb_0}(-\Box_{\Mb_0}\phi))\Omega = \pi(\cPsi_{\Mb_0}(\Mc^2\phi))\Omega ,
\end{equation}
where we have abused notation by writing
$\Box_\Mb$ and $\Mc^2$ as a shorthand for the operators
\[
\bigoplus_{m\in\Mc} \Box_\Mb\otimes\II_{\nu(m)}, \qquad
\bigoplus_{m\in\Mc} m^2 \II_{L(\Mb)}\otimes \II_{\nu(m)}
\]
on $\Sol(\Mb)$. 
From Eq.~\eqref{eq:Psquared} we see that  $\Mc\phi=m\phi$ implies $\pi(\cPsi_{\Mb_0}(\phi))\Omega\in \Hc_m$. According to
Lemma~\ref{lem:lift}, we may therefore deduce that, for general $\phi\in \Sol(\Mb_0)$,
\begin{equation}\label{eq:cPsi_intermediate}
\cPsi_{\Mb_0}(\phi) = \cPhi_{\Mb_0}(S\phi) + \alpha(\phi_0)\II_{\Ff(\Mb_0)},
\end{equation}
where $\phi_0$ is the component of $\phi$ in
$\Sol_0(\Mb_0)\otimes\CC^{\nu(0)}$ and $S:\Sol(\Mb_0)\to \Sol(\Mb_0)$ and
$\alpha:\Sol_0(\Mb_0)\otimes\CC^{\nu(0)}\to\CC$ are uniquely determined and necessarily linear maps
(we can also deduce $S\Mc=\Mc S$).  
It remains to determine $\alpha$. 
By Eq.~\eqref{eq:cPsi_intermediate}, it is clear that
\[
\alpha(\phi_0) = \ip{\Omega}{\cPsi_{\Mb_0}(\phi)\Omega} =
(\eta_{\Mb_0}^*\omega)(\cPhi_{\Mb_0}(\phi)).
\]
As $\eta_{\Mb_0}^*\omega$ is a translationally invariant
state in $\Sf(\Mb_0)$ by Lemma~\ref{lem:invariant_states}, it follows
that 
\[
\DD(\Mb_0)\otimes\CC^{\nu(0)} \owns f\mapsto  
(\alpha\circ( E_{0\,\Mb_0}\otimes \II_{\nu(0)}))(f) \in \CC
\]
may be regarded as a row vector of translationally invariant distributions,
each component of which must be constant.
Thus there is $\ell\in\CC^{\nu(0)*}$ such that
\[
(\alpha\circ (E_{0,\Mb_0}\otimes\II_{\nu(0)}))(f) = 
\int_{\Mb_0} \ell\cdot f \dvol_{\Mb_0}
\] 
and in fact we must have $\ell\in\RR^{\nu(0)*}$ to obtain 
$\alpha(\overline{\phi_0}) = \overline{\alpha(\phi_0)}$,
whereupon 
$\alpha(\phi_0) =\langle \ell,\phi\rangle_{\Mb_0}$
for all $\phi\in \Sol(\Mb_0)$, completing the proof.  
\end{proof}

\subsection{Quantized theory: Algebra of Observables}
\label{sect:observables}

In each spacetime $\Mb$, the algebra of observables $\Af(\Mb)$ may be concretely
constructed as the subalgebra of the field algebra $\Ff(\Mb)$ of elements invariant under $\zeta(R,\ell)_\Mb$ for all $(R,\ell)\in G(\nu)={\rm O}(\nu)\ltimes \RR^{\nu(0)*}$. 

We begin with the issue of $\RR^{\nu(0)*}$-invariance (assuming $\nu(0)>0$).  Let $\widehat{\Sol}_0(\Mb)$ be the subspace of $\Sol_0(\Mb)$ consisting of solutions with vanishing symplectic product with the constant unit solution. We call these `charge-zero' solutions, because this symplectic product
is precisely the Noether charge corresponding to the rigid gauge freedom to add a constant solution
(the same constant in all connected components of $\Mb$). 
The subspace $\widehat{\Sol}_0(\Mb)$ has codimension $1$ in $\Sol_0(\Mb)$; we choose any $\theta\in\Sol_0(\Mb)$ with  $\sigma_{0,\Mb}(\theta,1_\Mb)=1$, which then spans a complementary subspace to $\widehat{\Sol}_0(\Mb)$ in $\Sol_0(\Mb)$. For notational simplicity,
it is also convenient to write $\widehat{\Sol}_m(\Mb)=\Sol_m(\Mb)$ for any $m>0$. 
With these choices, any 
$A\in\Af(\Mb)$ may be written in the form
\[
A = \sum_{k=0}^{\deg A} \sum_{|\alpha|\le k} 
S \left(\left(\bigotimes_{i=1}^{\nu(0)} (\theta\otimes e_i)^{\otimes \alpha_i}\right)
\otimes Z_{k,\alpha}\right),
\]
where the sum runs over all muti-indices $\alpha$ of total order $|\alpha|\le k$,
the $e_i$ are a standard real basis for $\CC^{\nu(0)}$, $S$ is the symmetrisation operator
and 
\[
Z_{k,\alpha}\in \widehat{\Sol}(\Mb)^{\odot (k-|\alpha|)}, \quad\text{where}\quad 
\widehat{\Sol}(\Mb):=
\bigoplus_{m\in \Mc} \widehat{\Sol}_m(\Mb)\otimes\CC^{\nu(m)}.
\]
Let $e_i^*\in\RR^{\nu(0)*}$ be the dual basis to $e_i$. Then because
the polynomial $\lambda\mapsto \zeta(\II,\lambda e_j^*)_\Mb A$ is constant, the coefficient of  
$\lambda$ must vanish, i.e., 
\[
\sum_{k=1}^{\deg A} \sum_{|\alpha|\le k} \alpha_j
S \left(\left(\bigotimes_{i=1}^{\nu(0)} (\theta\otimes e_i)^{\otimes (\alpha_i-\delta_{ij})}\right)
\otimes Z_{k,\alpha}\right) = 0,
\] 
whereupon every $Z_{k,\alpha}$ vanishes for which $\alpha_j>0$. (To see this, one works
downwards in degree.) As $1\le j\le \nu(0)$ is arbitrary, this
gives $Z_{k,\alpha}=0$ for all $|\alpha|>0$. Accordingly, all nontrivial generators of $A\in\Af(\Mb)$
belong to $\widehat{\Sol}(\Mb)$, so $\Af(\Mb)\subset \Gamma_{\odot}(\widehat{\Sol}(\Mb))$. 

Turning to the ${\rm O}(\nu)$ invariance, let us now suppose that $A\in \Gamma_{\odot}(\widehat{\Sol}(\Mb))$ obeys $\zeta(R,0)_\Mb A = A$ for all $R\in {\rm O}(\nu)$. Because $\zeta(R,0)_\Mb = \Gamma_\odot(S(R)_\Mb)$, where $S(R)_\Mb$ is defined in 
Eq.~\eqref{eq:SR_action}, the component $A_k$ of $A$ in each $\widehat{\Sol}(\Mb)^{\otimes k}$
must be invariant under $S(R)_\Mb^{\otimes k}$. Now we may identify
\begin{align*}
\widehat{\Sol}(\Mb)^{\otimes k} &
=\left(\bigoplus_{m\in\Mc}\widehat{\Sol}_m\otimes\CC^{\nu(m)}\right)^{\otimes k} \cong 
\bigoplus_{\underline{m}\in \Mc^{\times k}}
\bigotimes_{i=1}^k (\widehat{\Sol}_{m_i}(\Mb)\otimes \CC^{\nu(m_i)}) \\
&\cong \bigoplus_{\underline{m}\in \Mc^{\times k}}
\bigotimes_{m'\in\underline{m}} 
\left(\left(\widehat{\Sol}_{m'}(\Mb)^{\otimes \mu_{\underline{m}}(m')}\right) \otimes (\CC^{\nu(m')})^{\otimes \mu_{\underline{m}}(m')}\right),
\end{align*}
where $\underline{m}\in \Mc^{\times k}$ is a $k$-tuple $\underline{m}=\langle m_1,\ldots,m_k\rangle$, 
and $\mu_{\underline{m}}(m')$ is the multiplicity of $m'$ as an element of $\underline{m}$, 
and the product in the last expression is indexed over elements of $\underline{m}$ {\em disregarding
multiplicity}. With respect to the last decomposition, we have
\[
 S(R)_\Mb^{\otimes k} = \bigoplus_{\underline{m}\in \Mc^{\times k}}
\bigotimes_{m'\in\underline{m}} \id \otimes R_{m'}^{\otimes \mu_{\underline{m}}(m')}.
\]
Owing to the direct sum structure, the element $A_k$ decomposes
into components $A_{k,\underline{m}}$, each of which
is an eigenvector of unit eigenvalue for every $\bigotimes_{m'\in\underline{m}} (\id\otimes R_{m'}^{\otimes \mu_{\underline{m}}(m')})$ ($(R_{m'})_{m'\in\underline{m}}\in\prod_{m'\in\underline{m}} {\rm O}(\nu(m'))$). Some multi-linear
algebra (cf.\ e.g.,~\cite[Appendix A]{FewVer:dynloc2}) entails that 
\[
A_{k,\underline{m}}\in \bigotimes_{m'\in\underline{m}} 
\widehat{\Sol}_{m'}(\Mb)^{\otimes  \mu_{\underline{m}}(m')}
\otimes Y_{m'} ,
\]
where each $Y_{m'}\subset (\CC^{\nu(m')})^{\otimes \mu_{\underline{m}}(m')}$ is
an eigenspace of unit eigenvalue for $R^{\otimes \mu_{\underline{m}}(m')}$ for all $R\in {\rm O}(\nu(m'))$. Its elements are thus isotropic tensors (under the full orthogonal group) of rank $\mu_{\underline{m}}(m')$ in $\nu(m')$ dimensions. By classical results \cite[\S 2.9]{Weyl}, these are known to be scalars at rank $0$, products of Kronecker deltas for other even ranks, and vanishing for odd rank. 
As $A$ is an element of the symmetric
tensor vector space, we have shown that 
$\Af(\Mb)$ is contained in the $*$-subalgebra of $\Ff(\Mb)$ generated by all bilinear elements of the form
\begin{equation}\label{eq:bilinears}
\sum_{i=1}^{\nu(m)} \cPhi(\phi\otimes e_i)\cPhi(\phi'\otimes e_i)
\end{equation}
for $m\in\Mc$, $\phi,\phi'\in\widehat{\Sol}_m(\Mb)$ and with $e_i$ the standard real basis of $\CC^{\nu(m)}$, 
orthonormal with respect to the standard inner product. As this subalgebra is manifestly
invariant under the action of $G(\nu)$, we have proved:
\begin{Thm} 
The algebra of observables $\Af(\Mb)$ is the $*$-subalgebra of $\Ff(\Mb)$
generated by all bilinear elements of the form Eq.~\eqref{eq:bilinears} where
$\phi,\phi'\in\widehat{\Sol}_m(\Mb)$ and $m\in\Mc$.
\end{Thm}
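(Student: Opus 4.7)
The plan is to compute $\Af(\Mb)$ as the fixed-point subalgebra of $\Ff(\Mb) = \Gamma_\odot(\Sol(\Mb))$ under the group action $\zeta(R,\ell)_\Mb$ determined in Proposition~\ref{prop:Gnu_action}, handling the two factors of the semidirect product $G(\nu) = {\rm O}(\nu)\ltimes\RR^{\nu(0)*}$ in succession. First I isolate the invariants under $\RR^{\nu(0)*}$, then refine by ${\rm O}(\nu)$-invariance.

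For the first step, fix $A\in\Af(\Mb)$ and write it in the form displayed just before the theorem, using the splitting $\Sol_0(\Mb) = \widehat{\Sol}_0(\Mb)\oplus \CC\theta$ and a basis $e_i$ of $\CC^{\nu(0)}$, so that every generator is either in $\widehat{\Sol}(\Mb)$ or of the form $\theta\otimes e_i$. Observe that for each fixed $j$, $\lambda\mapsto \zeta(\II,\lambda e_j^*)_\Mb A$ acts on a generator $\theta\otimes e_i$ by adding a multiple of $\langle \lambda e_j^*,\theta\otimes e_i\rangle_\Mb\II$, i.e., a scalar linear in $\lambda$; since $\theta$ has unit symplectic pairing with $1_\Mb$, this scalar is $\lambda\delta_{ij}$. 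The total result is a polynomial in $\lambda$, and the vanishing of its coefficient of $\lambda^1$ (required by invariance) forces $Z_{k,\alpha}=0$ whenever $\alpha_j>0$. Running $j$ over $1,\dots,\nu(0)$, all generators involving $\theta\otimes e_i$ must be absent and $A\in\Gamma_\odot(\widehat{\Sol}(\Mb))$.

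For the second step, I decompose each component $A_k \in \widehat{\Sol}(\Mb)^{\odot k}$ by mass tuples $\underline{m}\in\Mc^{\times k}$, so that
\[
A_k = \sum_{\underline{m}} A_{k,\underline{m}}, \qquad
A_{k,\underline{m}}\in \bigotimes_{m'\in\underline{m}}
\Bigl(\widehat{\Sol}_{m'}(\Mb)^{\otimes \mu_{\underline{m}}(m')} \otimes
(\CC^{\nu(m')})^{\otimes \mu_{\underline{m}}(m')}\Bigr),
\]
with the outer product indexed over the distinct elements of $\underline{m}$ and $\mu_{\underline{m}}(m')$ denoting their multiplicities. Since ${\rm O}(\nu)$ acts trivially on the $\widehat{\Sol}_{m'}(\Mb)$ factors and as $R_{m'}^{\otimes\mu_{\underline{m}}(m')}$ on each $(\CC^{\nu(m')})^{\otimes\mu_{\underline{m}}(m')}$ factor, invariance is equivalent to the $\CC^{\nu(m')}$-tensor factor being an isotropic tensor for ${\rm O}(\nu(m'))$ in each slot.

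The classical theorem of Weyl (cf.\ \cite[\S 2.9]{Weyl}) identifies the space of isotropic tensors of rank $r$ under ${\rm O}(N)$: it is zero for odd $r$, and for even $r$ it is spanned by products of Kronecker deltas indexed by pair partitions. Substituting this structure and exploiting the symmetrisation (which is already built into $\odot$), every $A_{k,\underline{m}}$ with nonzero $\mu_{\underline{m}}(m')$ of odd multiplicity vanishes, while the others are symmetric combinations of products of bilinears $\sum_i \cPhi(\phi\otimes e_i)\cPhi(\phi'\otimes e_i)$ with $\phi,\phi'\in\widehat{\Sol}_m(\Mb)$ for the various $m$. Thus every invariant element lies in the $*$-subalgebra generated by such bilinears. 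Conversely each such bilinear is manifestly ${\rm O}(\nu)$-invariant and, since $\phi,\phi'\in\widehat{\Sol}_m(\Mb)$ pair trivially with $1_\Mb$, also $\RR^{\nu(0)*}$-invariant, so the claimed generators all lie in $\Af(\Mb)$, which completes the proof. The main obstacle is the careful bookkeeping in the ${\rm O}(\nu)$ step, particularly the matching of the symmetric tensor structure on $\widehat{\Sol}(\Mb)^{\odot k}$ with the Kronecker-delta decomposition of isotropic tensors to extract exactly the bilinear generators.
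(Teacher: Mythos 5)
Your proposal is correct and follows essentially the same route as the paper: first killing the $\theta\otimes e_i$ generators via the vanishing of the linear-in-$\lambda$ coefficient of $\zeta(\II,\lambda e_j^*)_\Mb A$ (working down in degree), then decomposing by mass tuples and invoking Weyl's classification of isotropic tensors to reduce the ${\rm O}(\nu)$-invariants to the bilinear generators, with the converse being manifest invariance of those bilinears. No substantive differences to report.
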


In the case where $\Mb$ has more than one connected component, we see that 
the solutions $\phi,\phi'$ appearing in Eq.~\eqref{eq:bilinears} may 
have support in more than one component of the spacetime. One might not
wish to regard these as observables. Adopting the more restricted  `true algebra of observables' described at the end
of Sec.~\ref{sect:alg_of_obs}, the generating set would be restricted to 
 bilinears for which $\phi,\phi'$ have support in a single common component
of $\Mb$. 

Finally, let us specialize to the case of a single massless field,\footnote{Analogous
comments apply to any of the theories with $\nu(0)>0$.} and compare the algebra of observables obtained here with the discussion of the massless current in~\cite{FewVer:dynloc2}, in which  the classical invariance of the action under addition of locally constant solutions (which may take distinct values on different connected components of spacetime) is treated as a classical gauge symmetry. This gives a classical phase
space of gauge equivalence classes $[\phi]$ of solutions to the massless Klein--Gordon equation,\footnote{These solutions are permitted to have noncompact support,
but must be locally constant outside the causal future and past of some compact
set.} whose symplectic products with locally constant functions must be taken to vanish in order to obtain a well-defined (and, in fact, weakly non-degenerate) symplectic product on the quotient.
Quantizing, a locally covariant theory $\Cf$ is obtained, in which all $\Cf(\Mb)$ are simple, with generators $\check{\Js}_\Mb([\phi])$ obeying relations analogous to those in~\eqref{eq:cPhi_rels}. 

This is most directly comparable to the 
fixed-point subalgebra $\widehat{\FF}(\Mb)$ of $\Ff(\Mb)$ under the noncompact factor in the gauge group $\ZZ^2\ltimes\RR^*$. Here, the generators are labelled by solutions with vanishing
symplectic product with the constant solution $1_\Mb$, i.e., the space $\widehat{\Sol}_0(\Mb)$. 
If $\Mb$ is connected, and has noncompact Cauchy surfaces, then 
$\cPhi_\Mb(\phi)\mapsto \check{\Js}_\Mb([\phi])$ determines an isomorphism of the two algebras. 
However, this isomorphism breaks down if $\Mb$ is disconnected (because not all solutions in $\widehat{\Sol}_0(\Mb)$ have vanishing symplectic product with every locally constant function)
or has compact Cauchy surface (because then $\widehat{\Sol}_0(\Mb)$ contain nonzero locally constant
functions $\chi$, which correspond to nonzero elements in the centre of $\widehat{\FF}(\Mb)$; on the other hand, $[\chi]$ and hence $\check{\Js}_\Mb([\chi])$ vanish). 

The first problem may be removed by passing to the `true algebra of observables',
namely the $*$-subalgebra $\widehat{\Ff}_{o}(\Mb)$ of $\widehat{\Ff}(\Mb)$ generated by the $\check{\Js}(\phi)$ with
vanishing symplectic product with the characteristic function of each component of $\Mb$, and hence with every locally constant function. However, the second problem remains, whenever $\Mb$ has a component with compact Cauchy surface; in general $\Cf(\Mb)$ (with the modification
mentioned) is isomorphic to the quotient of $\widehat{\Ff}_o(\Mb)$ by the ideal generated by its centre. We mention that these central elements are also responsible for the failure of the theory $\Ff$
to be dynamically local~\cite{FewVer:dynloc_theory,FewVer:dynloc2}. 

\subsection{Quantized theory: Weyl algebra}\label{sec:Weyl}

We briefly explain how the symmetries of the Weyl algebra quantization may be
classified. Here, one works with the category of real symplectic spaces
$\Sympl_\RR$ and the real-valued solutions to our system, which form
a functor $\Sol_\RR:\Loc\to\Sympl_\RR$; composing with the CCR functor
$\CCR:\Sympl_\RR\to\CAlg$ gives the theory $\Wf=\CCR\circ\Sol_\RR$ -- details can be found in e.g., \cite{BrFrVe03,BarGinouxPfaffle, FewVer:dynloc2}. For simplicity, we restrict to the case in which massless fields are absent, $\nu(0)=0$; we also write $W_\Mb(\phi)$ for the
Weyl generator of $\Wf(\Mb)$ labelled by $\phi\in\Sol(\Mb)$. 
Endowing $\Wf$ with a state space $\Sf$, consisting of the
closure of the set of quasifree Hadamard states with respect to operations and local quasiequivalence,  $\langle \Wf, \id,\Sf\rangle\in\LCT_\grCAS$ obeys
assumptions (1)--(\ref{it:twduality}) of Sec.~\ref{sect:compactness} (see, e.g., 
\cite{BrFrVe03,BucWic:1986,Araki:1964} -- duality holds (at least)
for the set of double cone regions [i.e., nonempty sets of the form $I^+(p)\cap I^-(q)$ for timelike separated $p$ and $q$], which suffices 
for assumption~(\ref{it:twduality})). Hence by 
Theorem~\ref{thm:rho}, there is a faithful embedding of $\End(\langle \Wf,\id,
\Sf\rangle)$ in $G_{\rm max}$. The latter is well-known for this theory: $G_{\rm max}$ is the group of unitaries $U(R)$ ($R\in {\rm O}(\nu)$) such that $U(R)W_{\Mb_0}(\phi)U(R)^{-1} = W_{\Mb_0}(R\phi)$ and $U(R)\Omega=\Omega$. As each $R\in{\rm O}(\nu)$ induces
an automorphism of  $\langle \Wf, \id,\Sf\rangle$ by 
$\zeta(R)_\Mb W_\Mb(\phi) = W_\Mb(R\phi)$, 
assumption~(\ref{it:final}) holds with the replacement of $G_\rce$ by $G_{\rm max}$. The following is immediate:
\begin{Thm} In spacetime dimension $n\ge 3$, and subject to $\nu(0)=0$, the Weyl algebra theory $\langle \Wf, \id,\Sf\rangle$ obeys
$
\End(\langle \Wf,\id,
\Sf\rangle)=\Aut(\langle \Wf,\id,
\Sf\rangle)\cong  {\rm O}(\nu)$.
\end{Thm}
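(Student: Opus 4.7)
The plan is to invoke the machinery of Section~\ref{sect:compactness} directly, rather than repeat the delicate polynomial-algebra analysis used in the proof of Theorem~\ref{thm:main}. The strategy has three ingredients: verify assumptions (1)--(\ref{it:twduality}) for $\langle\Wf,\id,\Sf\rangle$; explicitly construct a homomorphism ${\rm O}(\nu)\to\Aut(\langle\Wf,\id,\Sf\rangle)$; and match this with the image of $\rho:\End(\langle\Wf,\id,\Sf\rangle)\to G_{\rm max}$ supplied by Theorem~\ref{thm:rho}.

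First, I would check the assumptions. Ordinary locality (trivial since $\gamma=\id$), the time-slice axiom and $\Uf$-additivity descend from the corresponding properties of $\Sol_\RR$ through the $\CCR$ functor. Uniqueness of the Poincar\'e-invariant Hadamard state, faithfulness/irreducibility/separability of the induced Fock representation, strong continuity of the Poincar\'e unitaries and the spectrum condition, and Reeh--Schlieder are all standard for a finite system of free massive fields; local quasiequivalence is built into $\Sf$ by construction. Energy compactness follows, for example, from the Buchholz--Wichmann nuclearity bound applied species-by-species, which is strictly stronger than what is needed. Since $\gamma=\id$, twisted duality reduces to ordinary Haag duality, which in this setting is known for double cones; taking $\KK_\diamond$ to be the set of double cones satisfies assumption~(\ref{it:twduality}).

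Next, I would verify that each $R\in{\rm O}(\nu)$ defines an automorphism $\zeta(R)\in\Aut(\langle\Wf,\id,\Sf\rangle)$ via $\zeta(R)_\Mb W_\Mb(\phi)=W_\Mb(R\phi)$. Regarded as acting on $\Sol_\RR(\Mb)=\bigoplus_{m\in\Mc}\Sol_{m,\RR}(\Mb)\otimes\RR^{\nu(m)}$ by orthogonal transformation of the multiplicity factors, $R$ preserves $\sigma_\Mb$ and commutes with every $\Sol_\RR(\psi)$; applying $\CCR$ yields a natural transformation in $\LCT_\CAlg$, and invariance of quasifree Hadamard states under such symplectic maps extends by closure to invariance of $\Sf$. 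This produces an injective homomorphism $\zeta:{\rm O}(\nu)\hookrightarrow\Aut(\langle\Wf,\id,\Sf\rangle)$, whose composition with $\rho$ embeds ${\rm O}(\nu)$ into $G_{\rm max}$ as the second-quantised implementers $U(R)$.

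Finally, I would invoke the well-known identification $G_{\rm max}\cong{\rm O}(\nu)$ for the Weyl quantisation of a finite system of massive free scalar fields, which provides the content of assumption~(\ref{it:final}) with $G_\rce$ replaced by $G_{\rm max}$. The conclusion then follows as in Theorem~\ref{thm:end_aut}: for any $\eta\in\End(\langle\Wf,\id,\Sf\rangle)$, faithfulness of $\pi$ and Theorem~\ref{thm:reduce_to_Mink} give $\eta=\zeta(R)$ where $R:=\zeta^{-1}(\rho(\eta))\in{\rm O}(\nu)$, so $\End=\Aut\cong{\rm O}(\nu)$. The main obstacle is the identification $G_{\rm max}\cong{\rm O}(\nu)$, i.e.\ ruling out accidental unitary symmetries of the vacuum net: concretely, any $U\in G_{\rm max}$ acts on generators by $UW_{\Mb_0}(\phi)U^{-1}=W_{\Mb_0}(T\phi)$ for a real-linear $T$ on $\Sol_\RR(\Mb_0)$ preserving $\sigma_{\Mb_0}$ and commuting with Poincar\'e transformations; the argument of Proposition~\ref{prop:SET}, adapted to the real symplectic setting, forces $T$ to decompose into orthogonal matrices on each mass-multiplicity space.
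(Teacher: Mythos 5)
Your proposal follows essentially the same route as the paper: verify assumptions (1)--(\ref{it:twduality}) for $\langle\Wf,\id,\Sf\rangle$ citing the standard literature, obtain the faithful embedding $\rho:\End(\langle\Wf,\id,\Sf\rangle)\to G_{\rm max}$ from Theorem~\ref{thm:rho}, note that each $R\in{\rm O}(\nu)$ induces an automorphism $\zeta(R)$ with $\zeta(R)_\Mb W_\Mb(\phi)=W_\Mb(R\phi)$, and use the known identification $G_{\rm max}\cong{\rm O}(\nu)$ to supply assumption~(\ref{it:final}) with $G_\rce$ replaced by $G_{\rm max}$, so that the conclusion follows as in Theorem~\ref{thm:end_aut}. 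The paper likewise treats $G_{\rm max}\cong{\rm O}(\nu)$ as well known rather than proving it, so your flagging of that step as the main remaining obstacle is consistent with the paper's own presentation.
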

Although we have circumvented assumption~(\ref{it:rce}), the relative Cauchy evolution is unitarily implemented as a consequence of Wald's work on the $S$-matrix~\cite{Wald_Smatrix:1979}\footnote{Note,
however, that the $S$-matrix is a unitary map between two representations of the Weyl algebra for the perturbed spacetime, while the relative Cauchy evolution is an automorphism of the algebra of the unperturbed spacetime. The implementation of the relative Cauchy
evolution is (up to unitaries) the inverse of the $S$-matrix.}
and the implementors have nonzero vacuum expectation value. Thus assumption (\ref{it:rce}) also holds and we see that $G_\rce=G_{\rm max}$. We expect that the
case $\nu(0)>0$ can also be treated, resulting in the automorphism group $G(\nu)$ as in Theorem~\ref{thm:main}, but at the expense of more technicality.

\section{Conclusion}

We have argued that the global gauge group of a locally covariant
quantum field theory can be identified with the automorphism group
 of its defining functor (or, sometimes, a subgroup thereof -- see footnote~\ref{fn:subgroup}), and that this interpretation provides a natural
generalization of the standard concepts in Minkowski space AQFT. 
Furthermore, we have argued that proper endomorphisms of a 
theory are pathological, and shown that they can be excluded under
reasonable general assumptions, which also entail that the gauge group is compact. As mentioned, it is expected that this
viewpoint can contribute to the theory of superselection in curved spacetime (see, e.g., remarks in the conclusion of~\cite{Vasselli:2012arXiv1211.1812V}).
Moreover, it should have applications in other areas where locally
covariance plays a role, for example the Batalin--Vilkovisky formalism
developed by Fredenhagen and Rejzner~\cite{FreRej_BVclass:2012,FreRej_BVqft:2012} and the general discussion of
classical theories in~\cite{BruFreLRib:2012arXiv1209.2148B}. Elsewhere~\cite{FewLang:twisted} it will
also be used to give a new perspective on twisted quantum fields~\cite{Isham_twisted1:1978}.

Finally, it may be worth commenting on the special role given to Minkowski space in some of our arguments. 
From the perspective of traditional QFT in curved spacetime, it might seem 
unsatisfactory that some of our technical hypotheses are made only on the
theory as it is formulated in Minkowski space. Indeed, it is presently 
unknown how to formulate energy compactness in arbitrary spacetimes
in an elegant way. However, the assumption of energy compactness
in Minkowski space does place constraints on the theory in any
other given spacetime -- what is really lacking is a convenient technical expression for them --
and the physical interpretation is the same as in~\cite{HaaSwi:1965,BucWic:1986},
namely to restrict the number of degrees of freedom available in each given volume
of the phase space of the theory. From this perspective, our use of Minkowski space is a matter of technical simplicity. It remains an open and important issue to formulate
energy compactness in curved spacetimes, or, in the spirit of our other results,
directly at the level of the functorial description of the theory.

\smallskip

{\noindent\bf Acknowledgments} It is a pleasure to thank Henning Bostelmann, Detlev Buchholz, Roberto Conti, Klaus Fredenhagen,
Valter Moretti, Katarzyna Rejzner,  Matthew Ferguson and 
Rainer Verch for useful discussions concerning various aspects of this work.

\appendix

\section{Maps preserving the stress-energy tensor}\label{sect:SET}
\renewcommand\theThm{\Alph{section}.\arabic{Thm}}

\begin{Prop} \label{prop:SET}
Suppose $S:\Sol(\Mb_0)\to \Sol(\Mb_0)$ is a linear map [no continuity is assumed] so that $S\phi$ and $\phi$ have identical total stress-energy tensors for
all $\phi\in \Sol(\Mb_0)$ and such that $S$ commutes with
complex conjugation and the action of spacetime translations on $\Sol(\Mb_0)$. Then  there are orthogonal matrices $R_m\in {\rm O}(\nu(m))$ such that
\[
S = \bigoplus_{m\in\Mc} \II_{\Sol_m(\Mb_0)}\otimes R_m.
\]
\end{Prop}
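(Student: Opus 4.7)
The plan is to reduce to mass-by-mass analysis using translation invariance, and then use Fourier analysis on the mass shell together with pointwise stress-energy preservation to pin down $S$ within each sector.

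First I would show that $S$ preserves each sector $\Sol_m(\Mb_0)\otimes\CC^{\nu(m)}$. Since $S$ commutes with all spacetime translations $\Sol(\tau^y)$, it should commute with the Klein-Gordon operator $\Box$ on $\Sol(\Mb_0)$ (the quadratic Casimir of the translation group), which acts as $-m^2$ on the $m$-sector. Lagrange interpolation over the finite set $\Mc$ then expresses the mass-$m_j$ projection as the polynomial $p_j(-\Box)$ with $p_j(X)=\prod_{i\neq j}(X-m_i^2)/(m_j^2-m_i^2)$, so commuting with $\Box$ forces $S$ to preserve each sector.

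Next, restricting to a single sector $\Sol_m(\Mb_0)\otimes\CC^{\nu(m)}$, Fourier representation writes any solution as $\phi(x)=\int d\mu(\vec k)\,(a(\vec k)e^{-ik\cdot x}+b(\vec k)e^{ik\cdot x})$ with $k^0=\sqrt{|\vec k|^2+m^2}$ and $a,b$ taking values in $\CC^{\nu(m)}$. Commutativity of $S$ with translations forces it to act diagonally in momentum space, $a(\vec k)\mapsto R^+(\vec k)a(\vec k)$ and $b(\vec k)\mapsto R^-(\vec k)b(\vec k)$, for matrix-valued functions $R^\pm$ on the mass shell, while conjugation-commutativity ties $R^-$ to $R^+$ by a reality condition. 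Pointwise preservation of $T_{ab}[\phi](x)$, expanded as a double Fourier integral, then yields independent algebraic identities of the form $R^+(\vec k)^T R^+(\vec k')=\II$ and $R^+(\vec k)^*R^+(\vec k')=\II$ for all $\vec k,\vec k'$ on the mass shell (one identity per Fourier mode $e^{\mp i(k\pm k')\cdot x}$ of the stress-energy). Setting $\vec k=\vec k'$ gives real orthogonality of $R^+(\vec k)$, and then varying $\vec k'$ shows $R^+$ is constant. Hence $R^+(\vec k)\equiv R_m\in{\rm O}(\nu(m))$, giving the stated form.

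The main obstacle is justifying $S\circ\Box=\Box\circ S$ without any continuity hypothesis on $S$. Formally this follows by differentiating the intertwining relation $S\circ\Sol(\tau^y)=\Sol(\tau^y)\circ S$ twice at $y=0$, but the differentiation lives in $\Sol(\Mb_0)$, on which no topology is assumed. A resolution is to work pointwise: for any $\phi\in\Sol(\Mb_0)$ and $x\in\Mb_0$, the function $y\mapsto (\Sol(\tau^y)\phi)(x)=\phi(x-y)$ is smooth, and the identity $(S\Sol(\tau^y)\phi)(x)=(\Sol(\tau^y)S\phi)(x)$ exhibits the left-hand side, viewed as a function of $y$, as smooth (since $S\phi\in\Sol(\Mb_0)$ is itself smooth in $x$). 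Applying $(\Box_y+m^2)$ to both sides and setting $y=0$ yields $(\Box+m^2)(S\phi)(x)=0$ whenever $\phi\in\Sol_m(\Mb_0)\otimes\CC^{\nu(m)}$, securing mass-sector preservation. Once this is in hand, the Fourier-analytic classification in the remaining steps is essentially routine, aided by the smoothness built into elements of $\Sol_m(\Mb_0)$.
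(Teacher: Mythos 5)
There is a genuine gap at the step you yourself flag as the main obstacle, and your proposed resolution is circular. From naturality you have $(S\Sol(\tau^y)\phi)(x)=(\Sol(\tau^y)S\phi)(x)=(S\phi)(x-y)$, so the two sides of your identity are \emph{literally the same smooth function of} $y$; applying $(\Box_y+m^2)$ to both sides and setting $y=0$ just returns $((\Box+m^2)S\phi)(x)$ on each side and gives no information. To conclude that the left-hand side vanishes you would have to compute $\Box_y(S\Sol(\tau^y)\phi)(x)$ as $(S\,\Box_y\Sol(\tau^y)\phi)(x)=-m^2(S\Sol(\tau^y)\phi)(x)$, i.e.\ pass the $y$-derivatives through $S$. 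That interchange, $S(\partial_\mu\phi)=\partial_\mu(S\phi)$, is exactly what requires continuity of $S$ and is not available from linearity and the intertwining relation alone. The same unproven continuity is hidden in your next step: a bare linear map commuting with all translations is \emph{not} automatically a Fourier multiplier $a(\vec k)\mapsto R^{+}(\vec k)a(\vec k)$; that conclusion needs $S$ to be continuous in a topology in which the plane-wave decomposition converges. So both the sector-preservation step and the momentum-space diagonalisation rest on regularity that has not been established at the point where you invoke them.

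The paper's proof avoids this by extracting the needed continuity \emph{from the hypothesis} before doing anything else: contracting the stress-energy with a null vector gives the pointwise identity $\|\ell^a\nabla_a(S\phi)|_p\|=\|\ell^a\nabla_a\phi|_p\|$, which bounds the evaluation map $(\varphi,\pi)\mapsto(\elb\cdot\nabla\varphi'+\pi')(0)$ on Cauchy data by $\|(\elb\cdot\nabla\varphi)(0)\|+\|\pi(0)\|$. This estimate shows the map is a matrix of distributions supported at the origin of order at most one, which then forces $\tS$ to act pointwise by a fixed unitary $R$; only \emph{after} that does one observe that $S\phi=R\phi$ must solve the field equation, so $R$ commutes with $\Mc^2$ and block-diagonalises over the mass sectors. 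In other words, the paper's order of deduction (stress-energy $\Rightarrow$ continuity $\Rightarrow$ pointwise action $\Rightarrow$ sector decomposition) is essentially the reverse of yours, and the reversal is what makes the no-continuity hypothesis harmless. Your Fourier-analytic endgame could likely be repaired by first running an argument of this kind to establish that $S$ is given by a (compactly supported, finite-order) convolution kernel, but as written the proposal does not close the loop.
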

\begin{proof} We may regard any $\phi\in \Sol(\Mb_0)$ as a smooth function
on $\Mb_0$ taking values in $\CC^{|\nu|}:=\oplus_{m\in\Mc} \CC^{\nu(m)}$. Putting the standard
norm on the latter space, we have $T_{ab}\ell^a\ell^b|_p = \| \ell^a\nabla_a \phi|_p\|^2$ for every point $p\in \Mb_0$,  null vector $\ell^a$ and  $\phi\in\Sol(\Mb_0)$; 
it follows that $S$ must obey $\| (\ell^a\nabla_a S\phi)|_p\|^2 =  \| \ell^a\nabla_a \phi|_p\|^2$. 
It is also convenient to work in terms of Cauchy data on, e.g., the 
$t=0$ surface in standard inertial coordinates on $\Mb_0$ in which the 
interval is $d\tau^2 = dt^2-\delta_{ij} dx^i dx^j$. Then
each $\phi\in\Sol(\Mb_0)$ is uniquely associated with a pair
$(\varphi,\pi) \in \CoinX{\RR^{n-1};\CC^\nu}\oplus \CoinX{\RR^{n-1};\CC^\nu}$,
where $\varphi(x) = \phi(0,x)$, $\pi(x) = (\partial\varphi/\partial t)(0,x)$ and
$S$ induces a linear map $\tS$ on $\CoinX{\RR^{n-1};\CC^{|\nu|}}\oplus \CoinX{\RR^{n-1};\CC^{|\nu|}}$
such that $(\varphi',\pi') = \tS(\varphi,\pi)$ is the Cauchy data of $S\phi$. Then
the identity $\| (\ell^a\nabla_a S\phi)|_p\|=  \| \ell^a\nabla_a \phi|_p\|$ implies
\begin{equation}\label{eq:preserve}
\left\| \left(\begin{array}{cc} \elb\cdot\nabla & \II_\nu\end{array}\right)
\tS\left(\begin{array}{c} \varphi\\ \pi\end{array}\right)(x)\right\| = 
\left\| \left(\begin{array}{cc} \elb\cdot\nabla & \II_\nu\end{array}\right)
\left(\begin{array}{c} \varphi\\ \pi\end{array}\right)(x)\right\|,
\end{equation}
for all unit vectors $\elb\in\RR^{n-1}$. In particular,  defining the map
\[
U: \left(\begin{array}{c} \varphi\\ \pi\end{array}\right) \mapsto 
\left(\begin{array}{cc} \elb\cdot\nabla & \II_{|\nu|}\end{array}\right)
\tS\left(\begin{array}{c} \varphi\\ \pi\end{array}\right)(0) \in\CC^{|\nu|}
\]
on $\CoinX{\RR^{n-1};\CC^{|\nu|}}\oplus \CoinX{\RR^{n-1};\CC^{|\nu|}}$,
we have the estimate
\begin{equation}
\label{eq:est}
\left\| U\left(\begin{array}{c} \varphi\\ \pi\end{array}\right) \right\|
\le \|(\elb\cdot\nabla\varphi)(0)\| + \|\pi(0)\|,
\end{equation}
which proves that $U$ is a $(|\nu| \times 2|\nu|)$-matrix of
distributions each of which is supported at the origin. We may therefore
conclude that
\[
U\left(\begin{array}{c} \varphi\\ \pi\end{array}\right) = A \varphi(0) +
B \pi(0) + C^j (\nabla_j\varphi)(0)  + D^j (\nabla_j\pi)(0)
\]
for $(|\nu|\times|\nu|)$-matrices $A$, $B$, $C^j$ and $D^j$, where $1\le
j\le n-1$. It is easy to show that $A=D^j=0$ for all $j$, on considering
the estimate \eqref{eq:est} in cases where $\pi=
\elb\cdot\nabla\varphi=0$, and likewise that only derivatives of
$\varphi$ along $\elb$ can contribute. This gives
\[
U\left(\begin{array}{c} \varphi\\ \pi\end{array}\right) =
R (\elb\cdot\nabla\varphi)(0) + B\pi(0)
\]
for matrices $R$ and $B$, which in principle may depend on $\elb$. Moreover, Eq.~\eqref{eq:preserve}
entails 
\[
\| R(\elb\cdot\nabla\varphi)(0) + B\pi(0)\| = \|(\elb\cdot\nabla\varphi)(0) +
\pi(0)\| ,
\]
from which we may conclude that $R$ and $B$ are unitary and in fact equal (again,
by considering cases in which $\pi(0)=0$ or $\elb\cdot\nabla\varphi(0)=0$). 
In terms of the Cauchy data $(\varphi',\pi')$ of $S\phi$,  
the discussion so far has shown that 
\[
(\elb\cdot\nabla \varphi')(0)  + \pi'(0) = R_\elb (\elb\cdot\nabla\varphi)(0) +
R_\elb\pi(0)
\]
for some unitary matrix $R_\elb$, from which we may deduce 
\begin{align}
(\elb\cdot\nabla\varphi')(0) &= \frac{1}{2}\left( 
[R_\elb + R_{-\elb}](\elb\cdot\nabla \varphi)(0) + [R_\elb - R_{-\elb}] \pi(0)
\right) \label{eq:eq1} \\
\pi'(0) &= \frac{1}{2}\left([R_\elb - R_{-\elb}] (\elb\cdot\nabla \varphi)(0)
+ [R_\elb + R_{-\elb}]\pi(0)\right).\label{eq:eq2}
\end{align}
Considering data with $\varphi\equiv 0$, we see from the $\elb$-independence of the left-hand
side of Eq.~\eqref{eq:eq2} that $\frac{1}{2}(R_\elb + R_{-\elb})=R$, independent of $\elb$. 
Then Eq.~\eqref{eq:eq1} becomes
\[
(\elb\cdot\nabla\varphi')(0) = \frac{1}{2}[R_\elb - R_{-\elb}] \pi(0)
\]
and, as $\pi(0)\in\CC^{|\nu|}$ is arbitrary, we deduce
by linearity of the left-hand side in $\elb$ that $\frac{1}{2}[R_\elb - R_{-\elb}]  = \ell^i \Delta_i$
for $(|\nu|\times|\nu|)$-matrices $\Delta_i$ ($1\le i\le n-1$). 
Substituting back into Eqs.~\eqref{eq:eq1} and~\eqref{eq:eq2}, and considering
data where $\pi\equiv 0$, we have $\pi'(0) = \ell^i\ell^j \Delta_i (\nabla_j \varphi)(0)$ 
for all unit vectors $\elb$. As the right-hand side is $\elb$-independent, we have
$\Delta_i (\nabla_j \varphi)(0) \propto \delta_{ij}$ for all $\varphi\in \CoinX{\RR^{n-1};\CC^{|\nu|}}$,
which is possible only if $\Delta_i=0$ for all $1\le i\le n-1$.  Thus 
$R_\elb=R$, independent of $\elb$, and it follows that $\pi'(0) = R\pi(0)$ and $(\nabla\varphi')(0) = R\varphi(0)$  for general data $(\varphi,\pi)$.
Further, because $S$ and hence $\tS$ commute with spatial translations, 
there is a fixed  unitary $R\in {\rm U}(|\nu|)$ so that
\[
\tS \left(\begin{array}{c} \varphi\\ \pi\end{array}\right)(x) = 
\left(\begin{array}{c} R\varphi(x) \\ R\pi(x)\end{array}\right).
\]

Finally, because $S$ also commutes with time translations we have $(S\phi)(t,x) = R\phi(t,x)$. But
both $\phi$ and $S\phi$ solve the same field equation
$\Box\varphi + \Mc^2\varphi = 0$, so $R$ commutes with $\Mc^2$ and  decomposes into block diagonal form,
$R=\bigoplus_{m\in\Mc} R_m$, where each $R_m\in {\rm U}(\nu(m))$. 
Since $S$ commutes with complex conjugation, the $R_m$ are real
and hence orthogonal.
\end{proof}

\section{Equivalent topologies on the automorphism group}
\label{appx:topology}

In this appendix, we study
various topologies that can be placed on the automorphism group
$\Aut(\langle\FF,\gamma,\Sf\rangle)$ of a theory
obeying assumptions (1)--(\ref{it:ReehSchlieder}) of Sec.~\ref{sect:compactness}.
The main purpose is to conclude the proof of Theorem~\ref{thm:end_aut}, but
some of the arguments may be of independent use.
The topologies concerned are:
\begin{tightitemize}
\item the {\em $\Mb$-topology}, defined as the weakest in which
the function $\eta\mapsto \omega(\eta_\Mb(A))$ is
continuous on $G$, for every $A\in\Ff(\Mb)$, $\omega\in\Sf(\Mb)$; 
\item the {\em local $\Mb$-topology}, defined as above, but
restricting to $A$ belonging to kinematic local algebras of 
$\FF^\kin(\Mb;D)$ of truncated multi-diamonds $D$ in $\Mb$;
\item the {\em $\omega$-topology} induced by any gauge-invariant state $\omega$ inducing a faithful GNS representation (see the discussion following Proposition~\ref{prop:urep});
\item the {\em diamond topology}, namely the
join of all $\Mb$-topologies as $\Mb$ runs over truncated multi-diamond spacetimes. 
\item the {\em natural weak topology} (as defined in Sec.~\ref{sec:gauge_gp}), namely the join of all $\Mb$-topologies as $\Mb$ runs through $\Loc$.
\end{tightitemize}

We need a technical result. Suppose $G$ is a topological group acting (not necessarily continuously) by
automorphisms on a unital $C^*$-algebra $\Ac$. For any
state $\omega$ on $\Ac$, we say that 
an element $B\in\Ac$ is $(\omega,G)$-continuous if 
$\eta\mapsto F_{A,B}(\eta):=\omega(A^*\eta(B)A)$ is continuous on $G$ for every
$A\in\Ac$ with $\omega(A^*A)\neq 0$. The set of all $(\omega,G)$-continuous elements of $\Ac$ will be denoted $\Cc_{\omega,G}(\Ac)$; any subalgebra
of $\Ac$ contained in $\Cc_{\omega,G}(\Ac)$ will be described as
$(\omega,G)$-continuous.
\begin{Prop} 
With the preceding definitions and notation,
\begin{tightenumerate}\renewcommand{\theenumi}{(\roman{enumi})}
\item $\Cc_{\omega,G}(\Ac)$ is a self-adjoint,
norm-closed, linear subspace of $\Ac$;
\item if $B$ is $(\omega,G)$-continuous then $\eta\mapsto \omega(P\eta(B)Q)$ is continuous for every $P,Q\in\Ac$;
\item if $B$ and $C$, together with at least one of $BB^*$ or $C^*C$,
are $(\omega,G)$-continuous, then so is $BC$;
\item if $\Bc_\alpha$ are $(\omega,G)$-continuous subalgebras of $\Ac$, then $\bigvee_\alpha \Bc_\alpha$ is $(\omega,G)$-continuous;
\item if $\omega_n\to \omega$ in the uniform topology on $\Ac^*_{+,1}$ 
then $\displaystyle\cl \bigcap_n \Cc_{\omega_n,G}(\Ac) \subset \Cc_{\omega,G}(\Ac)$. 
\end{tightenumerate}
\end{Prop}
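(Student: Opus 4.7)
My plan is to handle (i)--(v) in order, so that each item can be used in the later proofs. A tool I would use throughout is the uniform estimate
\[
|F_{A,B}(\eta) - F_{A,B'}(\eta)| \le \|A\|^2\,\|B-B'\|,
\]
which holds because every $\eta\in G$ is an isometric $*$-automorphism of $\Ac$. Given this, (i) is essentially mechanical: linearity of $B\mapsto F_{A,B}(\eta)$ makes $\Cc_{\omega,G}(\Ac)$ a linear subspace; the identity $F_{A,B^*}(\eta) = \overline{F_{A,B}(\eta)}$, which follows from $\eta(B^*)=\eta(B)^*$ and $(A^*XA)^*=A^*X^*A$, gives self-adjointness; and the uniform bound converts a norm-Cauchy sequence in $\Cc_{\omega,G}(\Ac)$ into a uniform-Cauchy sequence of continuous functions, yielding norm-closedness.

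For (ii), the first observation I would make is that the continuity requirement in the definition actually holds for \emph{every} $A\in\Ac$, not only for those with $\omega(A^*A)\neq 0$: Cauchy--Schwarz for the state $\omega$ gives $|\omega(A^*\eta(B)A)|^2 \le \omega(A^*A)\,\omega(A^*\eta(B)^*\eta(B)A)$, so $F_{A,B}$ vanishes identically when $\omega(A^*A)=0$. With this extension in hand, I apply the polarisation identity for the sesquilinear form $(X,Y)\mapsto \omega(X^*\eta(B)Y)$ with $X=P^*$, $Y=Q$, expressing $\omega(P\eta(B)Q)$ as a fixed complex-linear combination of terms of the type $\omega(Z^*\eta(B)Z)$ for $Z\in\{Q\pm P^*,\,Q\pm iP^*\}$; each such term is continuous in $\eta$, so (ii) follows at once.

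The main obstacle is (iii), and here the delicate step is a Cauchy--Schwarz estimate of GNS type. Using (i), the case in which $C^*C$ is $(\omega,G)$-continuous reduces via $(BC)^*=C^*B^*$ to the case in which $BB^*$ is $(\omega,G)$-continuous, so I assume the latter. Fixing $\eta_0\in G$ and setting $E_\eta=\eta(B)-\eta_0(B)$, I would split
\[
\omega(A^*\eta(BC)A) - \omega(A^*\eta_0(BC)A) = \omega\bigl(A^*E_\eta\,\eta(C)A\bigr) + \omega\bigl(A^*\eta_0(B)[\eta(C)-\eta_0(C)]A\bigr),
\]
the second term tending to $0$ by (ii) applied with fixed $P=A^*\eta_0(B)$, $Q=A$ and continuity of $C$. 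For the first term Cauchy--Schwarz gives
\[
|\omega(A^*E_\eta\,\eta(C)A)|^2 \le \omega(A^*E_\eta E_\eta^*A)\cdot\omega(A^*\eta(C^*C)A),
\]
whose right-hand factor is uniformly bounded by $\|A\|^2\|C^*C\|$ since $\eta$ is isometric. Expanding $E_\eta E_\eta^* = \eta(BB^*) - \eta(B)\eta_0(B^*) - \eta_0(B)\eta(B^*) + \eta_0(BB^*)$, the assumed continuity of $BB^*$ together with (ii) applied to $B$ and to $B^*$ (continuous by (i)) forces each of the four terms in $\omega(A^*E_\eta E_\eta^*A)$ to converge to the common value $\omega(A^*\eta_0(BB^*)A)$, whence the left-hand factor vanishes in the limit. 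The hardest sub-step is verifying the convergence of the cross term $\omega(A^*\eta(B)\eta_0(B^*)A)\to\omega(A^*\eta_0(BB^*)A)$, which requires a careful reading of (ii) with the fixed right factor taken to be $\eta_0(B^*)A$.

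Parts (iv) and (v) then fall out quickly. For (iv), each $B\in\Bc_\alpha$ satisfies $B^*B\in\Bc_\alpha\subset\Cc_{\omega,G}(\Ac)$, so (iii) is available inside every $\Bc_\alpha$ and, iterating with (i), the $*$-algebra generated by $\bigcup_\alpha\Bc_\alpha$ is $(\omega,G)$-continuous; the norm-closure statement of (i) then upgrades this to $\bigvee_\alpha \Bc_\alpha\subset\Cc_{\omega,G}(\Ac)$. For (v), given $B\in\cl\bigcap_n \Cc_{\omega_n,G}(\Ac)$ and an approximating sequence $B_k\to B$ with $B_k\in\bigcap_n\Cc_{\omega_n,G}(\Ac)$, I would run a standard $\epsilon/3$ argument at a fixed $\eta_0$: first choose $n$ with $\|\omega-\omega_n\|\|A\|^2\|B\|<\epsilon/3$, then $k$ with $\|A\|^2\|B-B_k\|<\epsilon/3$, and finally pick a neighbourhood of $\eta_0$ on which $|\omega_n(A^*\eta(B_k)A)-\omega_n(A^*\eta_0(B_k)A)|<\epsilon/3$, which is available because $B_k\in \Cc_{\omega_n,G}(\Ac)$.
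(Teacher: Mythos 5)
Your proof is correct and follows essentially the same route as the paper's: the same decomposition $\eta(BC)=\eta_0(B)\eta(C)+(\eta(B)-\eta_0(B))\eta(C)$ and Cauchy--Schwarz estimate $|\omega(A^*E_\eta\eta(C)A)|^2\le\omega(A^*E_\eta E_\eta^*A)\,\omega(A^*\eta(C^*C)A)$ in (iii), with the same expansion of $E_\eta E_\eta^*$ into four terms controlled by (ii), and the same uniform-approximation arguments for (i), (iv) and (v). The only divergence is in (ii), where you dispose of the degenerate case $\omega(A^*A)=0$ by observing that Cauchy--Schwarz forces $F_{A,B}\equiv 0$ there before polarising, whereas the paper instead expands $F_{\II+\lambda P,B}$ in $\lambda,\bar\lambda$ for small $\lambda$; both are valid and the difference is cosmetic.
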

\begin{proof} (i). Linearity and self-adjointness are obvious. As $\Ac$
is a $C^*$-algebra, we may estimate $|\omega(A^*\eta(B)A)|\le \omega(A^*A)\|B\|$
and deduce that $B_n\to B$ in $\Ac$ implies that $F_{A,B_n}$ converges uniformly to
$F_{A,B}$, which is therefore continuous; thus  $\Cc_{\omega,G}(\Ac)$ is norm-closed. 
(ii). By polarisation, it is enough to show that $B\in \Cc_{\omega,G}(\Ac)$ implies
that $F_{P,B}$ is continuous for every $P\in \Ac$ (regardless of whether $\omega(P^*P)\neq 0$); the latter is seen to hold on noting that
$F_{\II+\lambda P,B}$
must be continuous for all sufficiently small $\lambda\in\CC$, and 
consequently each term in the expansion of
$F_{\II+\lambda P,B}$ in $\lambda,\bar{\lambda}$ must also be continuous. 
(iii). Let $\eta_0\in G$ be arbitrary and assume without loss that $BB^*\in \Cc_{\omega,G}(\Ac)$.\footnote{If $C^*C\in \Cc_{\omega,G}(\Ac)$, 
we may use the following argument to conclude that $C^*B^*\in \Cc_{\omega,G}(\Ac)$
and the required result follows on taking adjoints.}
The identity $\eta(BC)=\eta_0(B)\eta(C) + (\eta(B)-\eta_0(B))\eta(C)$ entails 
\[
F_{A,BC}(\eta) = \omega(A^*\eta_0(B)\eta(C)A) + 
\omega(A^*(\eta(B)-\eta_0(B))\eta(C) A),
\]
the first term of which is continuous in $\eta$ by part (ii). It suffices to
show that the last term vanishes in the limit $\eta\to\eta_0$.
To this end, the Cauchy--Schwarz inequality gives, 
\begin{align*}
|\omega(A^*(\eta(B)-\eta_0(B))\eta(C) A)|^2
&\le \omega(A^*(\eta(B)-\eta_0(B))(\eta(B^*)-\eta_0(B^*))A)
\omega(A^*\eta(C^*C)A) \\
&\le \|A\|^2\|C\|^2 \omega(A^*(\eta(B)-\eta_0(B))(\eta(B^*)-\eta_0(B^*))A).
\end{align*}
Expanding the right-hand side, we obtain a sum of functions known to be continuous in $\eta$, using $BB^*\in \Cc_{\omega,G}(\Ac)$ and part~(ii); moreover the expression vanishes for $\eta=\eta_0$. Thus $BC\in \Cc_{\omega,G}(\Ac)$. 
Part (iv) is now immediate, using~(iii) together with norm-closure and linearity of
$\Cc_{\omega,G}(\Ac)$. (v) If $B$ is $(\omega_n,G)$-continuous
for each $n$, then the estimate 
$|\omega(A^*\eta(B)A)-\omega_n(A^*\eta(B)A)|\le \|\omega-\omega_n\|
\|A\|^2\|B\|$ shows that each $F_{A,B}$ is the uniform limit of 
continuous functions, so $B\in \Cc_{\omega,G}(\Ac)$; this
establishes $\bigcap_n \Cc_{\omega_n,G}(\Ac) \subset \Cc_{\omega,G}(\Ac)$ and we take closures to complete the proof. 
\end{proof}

As a first application of this result, let $G=\Aut(\langle \Ff,\gamma,\Sf\rangle)$, endowed with the local $\Mb$-topology for
some $\Mb$, and let $\omega\in\Sf(\Mb)$. Then for every truncated
truncated multi-diamond $D$ in $\Mb$,
$\Ff^\kin(\Mb;D)$ is $(\omega,G)$-continuous; as these subalgebras generate $\Ff(\Mb)$, the whole algebra is $(\omega,G)$-continuous by part~(iv). Letting $\omega$ vary in $\Sf(\Mb)$, we see that every function generating the $\Mb$-topology is continuous in the local $\Mb$-topology; as the latter is trivially weaker than the former, we
have proved:
\begin{Prop} \label{prop:localtop}
The $\Mb$-topology coincides with the
local $\Mb$-topology.
\end{Prop}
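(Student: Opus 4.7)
The plan is to prove the two topologies agree by showing that every subbasis function of the $\Mb$-topology is continuous with respect to the local $\Mb$-topology; the reverse inclusion is immediate, since the local $\Mb$-topology is defined using a subcollection of the functions defining the $\Mb$-topology. Throughout, I write $G = \Aut(\langle\Ff,\gamma,\Sf\rangle)$ and work with a fixed spacetime $\Mb$.

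First I fix an arbitrary $\omega\in\Sf(\Mb)$ and a truncated multi-diamond $D\subset\Mb$. The key observation is that $\Sf(\Mb)$ is closed under the operations induced by $\Ff(\Mb)$, so for each $A\in\Ff(\Mb)$ with $\omega(A^*A)>0$ the state $\omega_A := \omega(A^*\cdot\,A)/\omega(A^*A)$ again lies in $\Sf(\Mb)$. Applying the defining continuity requirement of the local $\Mb$-topology to $\omega_A$ in place of $\omega$, continuity of $\eta\mapsto \omega_A(\eta_\Mb(B))$ for $B\in\Ff^\kin(\Mb;D)$ rescales to continuity of $\eta\mapsto\omega(A^*\eta_\Mb(B)A)$. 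Hence every such $B$ lies in $\Cc_{\omega,G}(\Ff(\Mb))$; equivalently, $\Ff^\kin(\Mb;D)$ is an $(\omega,G)$-continuous subalgebra of $\Ff(\Mb)$.

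Next I invoke the $\Uf$-additivity hypothesis (part of assumption~(1) of Sec.~\ref{sect:compactness}), which expresses $\Ff(\Mb)$ as the union of the subalgebras $\Ff^\kin(\Mb;D)$ as $D$ runs over truncated multi-diamonds. Part~(iv) of the technical proposition then gives $\Ff(\Mb)\subset\Cc_{\omega,G}(\Ff(\Mb))$. Finally, applying part~(ii) of the same proposition with $P=Q=\II_{\Ff(\Mb)}$ yields continuity of $\eta\mapsto\omega(\eta_\Mb(A))$ on $G$ for every $A\in\Ff(\Mb)$. Letting $\omega$ range over $\Sf(\Mb)$, every generator of the $\Mb$-topology is continuous in the local $\Mb$-topology, as required.

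There is no serious obstacle here; the argument is essentially a direct unpacking of the machinery developed in the preceding technical proposition, combined with the definition of a state space (closure under operations) and with $\Uf$-additivity. The only subtle point is the passage from continuity of $\eta\mapsto\omega(\eta_\Mb(B))$ to continuity of $\eta\mapsto\omega(A^*\eta_\Mb(B)A)$: this is not automatic from any continuity-of-multiplication principle, but is secured by shifting the reference state from $\omega$ to $\omega_A$, which is permissible precisely because $\Sf(\Mb)$ is closed under operations.
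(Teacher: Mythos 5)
Your proof is correct and follows essentially the same route as the paper: show each $\Ff^\kin(\Mb;D)$ is $(\omega,G)$-continuous in the local $\Mb$-topology, invoke additivity together with part~(iv) of the technical proposition, and note the trivial reverse inclusion. The one place you add detail --- passing from continuity of $\eta\mapsto\omega_A(\eta_\Mb(B))$ to continuity of $\eta\mapsto\omega(A^*\eta_\Mb(B)A)$ via closure of the state space under operations --- is exactly the step the paper leaves implicit, and you justify it correctly.
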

For a second application, suppose $\omega_0\in\Sf(\Mb)$ is
a gauge-invariant state inducing a faithful GNS representation
of $\Ff(\Mb)$, and endow $G$ with the $\omega_0$-topology induced
by the strong operator topology in this representation. 
Hence the corresponding unitary representation of $G$ is strongly continuous and it follows that
$\Ff(\Mb)$ is $(\omega,G)$-continuous with respect to
any vector state $\omega$ in the GNS representation;
one sees by part (v) that this is also true for any $\omega$ in the
folium $\Fol(\omega_0)$ of $\omega_0$. Now consider any $\omega\in\Sf(\Mb)$
and a truncated multi-diamond $D$ in $\Mb$. By local quasi-equivalence, 
there is a state $\omega'\in\Fol(\omega_0)$ that
agrees with $\omega$ on $\Ff^\kin(\Mb;D)$, which is
$(\omega',G)$-continuous, and hence $(\omega,G)$-continuous. 
Applying part~(iv) again, additivity entails that $\Ff(\Mb)$ is
$(\omega,G)$-continuous for every $\omega\in\Sf(\Mb)$. Summarising:
\begin{Prop} \label{prop:omegatop}
Suppose $\omega_0\in\Sf(\Mb)$ is
a gauge-invariant state inducing a faithful GNS representation
of $\Ff(\Mb)$. Then the $\omega_0$-topology coincides with the $\Mb$-topology, and hence the $\Mb$-local topology.
\end{Prop}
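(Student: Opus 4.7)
The plan is to establish the two inclusions separately, exploiting the $(\omega,G)$-continuity framework of the preceding proposition.

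For the easy direction $\omega_0\text{-topology}\subseteq \Mb\text{-topology}$, I would observe that a unitary representation on a separable Hilbert space is strongly continuous iff it is weakly continuous iff all matrix elements on a total set of vectors are continuous. Taking such vectors from the dense cyclic subspace $\pi_{\omega_0}(\Ff(\Mb))\Omega_{\omega_0}$, the corresponding matrix elements are
\[
\ip{\pi_{\omega_0}(A)\Omega_{\omega_0}}{U_\eta\pi_{\omega_0}(B)\Omega_{\omega_0}} = \omega_0(A^*\eta_\Mb(B)),
\]
which are by definition continuous in the $\Mb$-topology. Hence $\eta\mapsto U_\eta$ is strongly continuous in the $\Mb$-topology, and the $\omega_0$-topology, being the weakest such topology, is coarser.

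For the reverse inclusion I would aim to prove that $\Ff(\Mb)$ is $(\omega,G)$-continuous for every $\omega\in\Sf(\Mb)$, which by part~(ii) of the preceding proposition (setting $P=Q=\II$) is enough to make all generators of the $\Mb$-topology continuous. The plan has three stages. First, strong continuity of $U_\eta$ implies that $\eta\mapsto U_\eta\pi_{\omega_0}(B)U_\eta^{-1}\psi = \pi_{\omega_0}(\eta_\Mb B)\psi$ depends continuously on $\eta$ for any $\psi\in\Hc_{\omega_0}$ and $B\in\Ff(\Mb)$; taking matrix elements shows that $\Ff(\Mb)$ is $(\omega_\psi,G)$-continuous for any vector state $\omega_\psi$ of $\pi_{\omega_0}$. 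Since normal states are uniformly dense in $\Fol(\omega_0)$, part~(v) upgrades this to $(\omega,G)$-continuity of $\Ff(\Mb)$ for every $\omega\in\Fol(\omega_0)$.

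Second, use local quasi-equivalence: for any $\omega\in\Sf(\Mb)$ and any relatively compact $O\in\OO(\Mb)$ containing a truncated multi-diamond $D$, there exists $\omega'\in\Fol(\omega_0)$ whose restriction to $\Ff^\kin(\Mb;O)$ agrees with that of $\omega$. Since automorphisms act strictly locally, $\eta_\Mb(B)\in\Ff^\kin(\Mb;D)\subset\Ff^\kin(\Mb;O)$ for $B\in\Ff^\kin(\Mb;D)$, so $A^*\eta_\Mb(B)A\in\Ff^\kin(\Mb;O)$ whenever $A\in\Ff^\kin(\Mb;O)$, on which $\omega=\omega'$; hence $\eta\mapsto \omega(A^*\eta_\Mb(B)A)$ is continuous for such $A$. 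Approximating a general $A\in\Ff(\Mb)$ in norm by elements of local algebras (using additivity) and applying the uniform estimate $|\omega(A^*\eta(B)A)-\omega(A_n^*\eta(B)A_n)|\le(\|A\|+\|A_n\|)\|A-A_n\|\|B\|$, continuity extends to all $A$. Thus $\Ff^\kin(\Mb;D)$ is $(\omega,G)$-continuous, and part~(iv) together with additivity lifts this to all of $\Ff(\Mb)$. The final clause then follows from Proposition~\ref{prop:localtop}.

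The main technical obstacle is the subtle step of transferring $(\omega',G)$-continuity to $(\omega,G)$-continuity: agreement of $\omega$ and $\omega'$ on a local subalgebra does not directly control $\omega(A^*\eta_\Mb(B)A)$ when $A$ lies outside that subalgebra. Handling this cleanly requires combining strict locality of automorphisms (to keep $\eta_\Mb(B)$ within the localisation region of $B$) with a norm approximation from additivity, rather than the weaker part~(iii) of the preceding proposition.
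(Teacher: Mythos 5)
Your proof is correct and follows essentially the same route as the paper's: strong continuity of the unitary representation in the $\omega_0$-topology gives $(\omega,G)$-continuity of $\Ff(\Mb)$ for vector states and hence, via part (v), for the whole folium; local quasi-equivalence transfers this to arbitrary $\omega\in\Sf(\Mb)$ on the local algebras; and part (iv) plus additivity globalises, while the easy inclusion follows from continuity of the matrix elements. The one place you go beyond the paper is the quasi-equivalence transfer step, which the paper compresses into ``and hence $(\omega,G)$-continuous''; your combination of strict locality, an enlarged relatively compact localisation region and a norm approximation from additivity is precisely the elaboration needed to make that step fully rigorous, since agreement of $\omega$ and $\omega'$ on $\Ff^\kin(\Mb;D)$ alone does not control $\omega(A^*\eta_\Mb(B)A)$ for $A$ outside that subalgebra.
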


We now specialise to Minkowski space, $\Mb_0$, and the Minkowski
vacuum state $\omega_0$, with GNS representation $(\HH_0,\pi_0,\Omega_0)$. Our aim is to show that the $\Mb_0$- and $\omega_0$-topologies, which coincide by Proposition~\ref{prop:omegatop}, are equivalent to the diamond topology. 

Let $D$ be any multi-diamond of $\Mb_0$, and let $\iota:\Db\to \Mb_0$ be the canonical inclusion of $\Db=\Mb_0|_D$ in $\Mb_0$. Then $\Sf(\iota)\omega_0\in\Sf(\Db)$ is gauge-invariant and induces a GNS representation $(\HH_\Db,\pi_\Db,\Omega_\Db)$ of $\Ff(\Db)$
on which $G$ is unitarily represented. As $\omega_0$ has the Reeh--Schlieder property, we may take $\HH_\Db=\HH_0$, $\Omega_\Db=\Omega_0$, $\pi_\Db=\pi_0|_{\Ff^\kin(\Mb;D)}$, 
whereupon the unitary implementations of $G$ in the two representations also coincide. It follows (a) that the topologies induced on $G$ by $\Sf(\iota)\omega_0$ and $\omega_0$ are equivalent, and (b)
that $\pi_\Db$ is faithful, so the $\Sf(\iota)\omega_0$-topology and $\Db$-topologies coincide by Proposition~\ref{prop:omegatop}.  
Thus the $\Db$-topology coincides with the $\omega_0$-topology and hence the $\Mb_0$-topology. Using the time-slice property, we conclude
immediately that:
\begin{Prop} \label{prop:diamondtop}
The $\Mb_0$-topology coincides with the $\Db$-topology for
every truncated multi-diamond spacetime $\Db$, and hence with the
diamond topology. 
\end{Prop}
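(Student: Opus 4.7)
The plan is to reduce the proposition to Proposition~\ref{prop:omegatop} applied to pulled-back vacuum states, with the time-slice axiom used to pass from multi-diamonds to truncated multi-diamonds. The central observation is that, thanks to Reeh--Schlieder, the GNS representation of the Minkowski vacuum restricted to a multi-diamond subalgebra may be identified with the GNS representation of the pulled-back vacuum on that subalgebra, in a way that transports the entire unitary representation of $G$.

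First I would treat a multi-diamond $\Db = \Mb_0|_D$ with canonical inclusion $\iota:\Db\to\Mb_0$. Setting $\omega_\Db = \Sf(\iota)\omega_0$, a one-line naturality computation using $\eta_{\Mb_0}^*\omega_0 = \omega_0$ and $\eta_{\Mb_0}\circ\Ff(\iota) = \Ff(\iota)\circ\eta_\Db$ shows $\omega_\Db$ is gauge-invariant. By Reeh--Schlieder, $\Omega_0$ is cyclic for $\pi_0(\Ff^\kin(\Mb_0;D)) = \pi_0(\Ff(\iota)\Ff(\Db))$ in $\HH_0$; so the triple $(\HH_0,\pi_0\circ\Ff(\iota),\Omega_0)$ is a cyclic $*$-representation of $\Ff(\Db)$ inducing $\omega_\Db$, hence by uniqueness of GNS \emph{is} its GNS representation. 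It is faithful because $\pi_0$ is faithful and $\Ff(\iota)$ is monic. Naturality of $\eta\in G$, applied on the dense domain $\pi_0(\Ff(\iota)\Ff(\Db))\Omega_0$, then gives
\[
U^{(\omega_\Db)}_\eta\pi_0(\Ff(\iota) A)\Omega_0 = \pi_0(\Ff(\iota)\eta_\Db A)\Omega_0 = \pi_0(\eta_{\Mb_0}\Ff(\iota) A)\Omega_0 = U^{(\omega_0)}_\eta\pi_0(\Ff(\iota) A)\Omega_0,
\]
so the two unitary implementations of $G$ coincide on a dense domain, hence everywhere. The $\omega_\Db$- and $\omega_0$-topologies on $G$ therefore agree, and Proposition~\ref{prop:omegatop} identifies them with the $\Db$- and $\Mb_0$-topologies respectively.

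To extend to a truncated multi-diamond $\Db' = \Mb_0|_{D'}$ with $D'=\Nc\cap D_{\Mb_0}(B)$, I would take $\Db = \Mb_0|_{D_{\Mb_0}(B)}$ to be the associated multi-diamond. The inclusion $\iota':\Db'\to\Db$ is a Cauchy morphism, because the Cauchy surface of $\Mb_0$ sitting inside $\Nc$ meets $D_{\Mb_0}(B)$ in a Cauchy surface for $\Db$ lying in $D'$. The time-slice axiom then makes $\Ff(\iota')$ an isomorphism that intertwines $\eta_{\Db'}$ with $\eta_\Db$ for every $\eta\in G$, while $\Sf(\iota')$ provides a bijective correspondence between the continuity conditions defining the $\Db$- and $\Db'$-topologies. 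These topologies therefore coincide, and combined with the multi-diamond case we obtain $\Db'$-topology $=$ $\Mb_0$-topology. The diamond topology, being the join of all such $\Db'$-topologies, is then itself equal to the $\Mb_0$-topology.

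The main subtlety is the identification performed in the first step: Reeh--Schlieder delivers the same Hilbert space for the two GNS representations, but one must also see that the $G$-actions coincide on that common space, and this relies squarely on the naturality identity $\eta_{\Mb_0}\circ\Ff(\iota) = \Ff(\iota)\circ\eta_\Db$ to transfer the implementation from $\omega_{\Db}$ to $\omega_0$. Once this identification is secured, the rest amounts to chaining together previously established equalities of topologies together with one invocation of the time-slice axiom.
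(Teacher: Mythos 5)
Your proposal is correct and follows essentially the same route as the paper: pull back the vacuum along the multi-diamond inclusion, use Reeh--Schlieder to identify the GNS representation of $\Sf(\iota)\omega_0$ with the restriction of the Minkowski vacuum representation so that the unitary implementations of $G$ coincide, invoke Proposition~\ref{prop:omegatop} on both sides, and finish with the time-slice axiom to pass from multi-diamonds to truncated multi-diamonds. You merely spell out in more detail the naturality computations and the faithfulness of the restricted representation, which the paper leaves implicit.
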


Our final observation is that the local $\Mb$-topology is trivially
weaker than the diamond topology for any $\Mb$. Combining this with Props.~\ref{prop:diamondtop} and~\ref{prop:localtop}, 
we find that the $\Mb_0$-topology is stronger than every $\Mb$-topology. 
Using the definition of the natural weak topology, we obtain in conclusion:
\begin{Thm} \label{thm:tops}
For a theory $\langle\FF,\gamma,\Sf\rangle\in\LCT_\grCAS$ obeying assumptions (1)--(\ref{it:ReehSchlieder}) of Sec.~\ref{sect:compactness}, 
the natural weak topology on $\Aut(\langle\FF,\gamma,\Sf\rangle)$ coincides with the $\omega_0$-topology (and hence with the diamond- and $\Mb_0$-topologies). 
\end{Thm}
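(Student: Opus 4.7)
The plan is to stitch together Propositions~\ref{prop:localtop}, \ref{prop:omegatop} and~\ref{prop:diamondtop}, which already furnish most of the required equalities, and show that the natural weak topology must sit between the $\Mb_0$-topology (from below) and the diamond topology (from above) --- both of which coincide by Proposition~\ref{prop:diamondtop}.

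\emph{First step.} Observe that the natural weak topology, being the join over all $\Mb\in\Loc$ of the $\Mb$-topologies, is a priori finer than the $\Mb_0$-topology, since $\Mb_0\in\Loc$. So one direction is immediate.

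\emph{Second step (the main one).} I would show the reverse: for every $\Mb\in\Loc$, the $\Mb$-topology is weaker than the diamond topology. By Proposition~\ref{prop:localtop}, the $\Mb$-topology is generated by the functions $\eta\mapsto \omega(\eta_\Mb(A))$ with $\omega\in\Sf(\Mb)$ and $A\in\Ff^\kin(\Mb;D)$ for some truncated multi-diamond $D\subset \Mb$. For such $A$ one has $A=\Ff(\iota_{\Mb;D})(A_0)$ with $A_0\in\Ff(\Db)$, where $\Db=\Mb|_D$ is itself a truncated multi-diamond spacetime. Naturality of $\eta$ (Proposition~\ref{prop:basic}(a)) gives
\[
\eta_\Mb(A)=\Ff(\iota_{\Mb;D})(\eta_{\Db}(A_0)),
\]
so that $\omega(\eta_\Mb(A))=(\Sf(\iota_{\Mb;D})\omega)(\eta_{\Db}(A_0))$, where $\Sf(\iota_{\Mb;D})\omega\in\Sf(\Db)$ by functoriality of the state space. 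Thus every generator of the $\Mb$-topology is already continuous in the $\Db$-topology, and hence in the diamond topology.

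\emph{Third step.} Take the join over all $\Mb\in\Loc$: the natural weak topology is therefore weaker than the diamond topology. Combining with the first step and invoking Proposition~\ref{prop:diamondtop}, which identifies the diamond topology with the $\Mb_0$-topology, gives equality of all three. Proposition~\ref{prop:omegatop} then identifies the $\Mb_0$-topology with the $\omega_0$-topology, yielding the claimed chain of coincidences.

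The only real content beyond bookkeeping is the naturality argument in the second step; once one recognises that local generators of the $\Mb$-topology factor through the restriction functor to a diamond subspacetime, the remaining assertions are formal. I expect no hidden obstacle, since additivity, time-slice, local quasi-equivalence and Reeh--Schlieder have already been fully absorbed into Propositions~\ref{prop:localtop}--\ref{prop:diamondtop}.
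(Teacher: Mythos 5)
Your proposal is correct and follows essentially the same route as the paper: the paper's proof likewise stitches together Propositions~\ref{prop:localtop}, \ref{prop:omegatop} and~\ref{prop:diamondtop}, with the only substantive input being the observation that the local $\Mb$-topology is weaker than the diamond topology (which the paper dismisses as trivial and you justify explicitly via naturality and the factorisation $A=\Ff(\iota_{\Mb;D})(A_0)$). Your spelled-out second step is exactly the content the paper leaves implicit, so there is nothing further to add.
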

\small

\end{document}

%\bibliographystyle{ws-acs}
%\bibliography{covariant}
%
%
%
%\end{document}